\newlist{Steps}{enumerate}{1}
\setlist[Steps,1]{label=Step~\arabic*.,leftmargin=*}
\newcommand{\indep}{\mathbin{\rotatebox[origin=c]{90}{$\models$}}}
\newcommand{\E}{\mathbb{E}}
\newcommand{\Var}{\text{Var}}
\newcommand{\X}{\mathbf{X}}
\newcommand{\C}{\mathbf{C}}
\newcommand{\Y}{\mathbf{Y}}
\DeclareMathAlphabet\mathbfcal{OMS}{cmsy}{b}{n}
\newcommand{\dbtilde}[1]{\accentset{\approx}{#1}}
\newcommand{\redsout}{\bgroup\markoverwith{\textcolor{red}{\rule[0.5ex]{2pt}{0.4pt}}}\ULon}
\newtheorem{defn}{Definition}
\newtheorem*{defn*}{Definition}
\newtheorem{prop}{Proposition}
\newtheoremstyle{break}
  {\topsep}{\topsep}%
  {\itshape}{}%
  {\bfseries}{}%
  {\newline}{}%
\theoremstyle{break}
\title{State space model multiple imputation for missing data in non-stationary multivariate time series with application in digital Psychiatry}
\author{Xiaoxuan Cai, Xinru Wang, Li Zeng, Habiballah Rahimi Eichi, Dost Ongur, Lisa Dixon, \\ Justin T. Baker, Jukka-Pekka Onnela, Linda Valeri}
\begin{document}
\maketitle

\abstract{Mobile technology (e.g., mobile phones and wearable devices) provides effective and scalable methods for collecting physiological and behavioral biomarkers in patients' naturalistic settings, as well as opportunity for therapeutic advancements and scientific discoveries regarding the etiology of psychiatric illness. 
Continuous data collection yields a new type of data: entangled multivariate time series of outcome, exposure, and covariates. 
Missing data is a pervasive problem in biomedical and social science research, and Ecological Momentary Assessment (EMA) in psychiatric research via mobile devices is no exception. 
However, complex data structure of multivariate time series and non-stationarity make missing data a major challenge for proper inference.
Time series analyses typically include history information as explanatory variables to control for auto-correlation, exacerbating the missing data problem and potentially rendering unfeasible to adjust appropriately for confounding. 
The majority of available imputation methods are either designed for longitudinal data with limited follow-up times or for stationary time series. 
Limited work on non-stationary time series either focuses on missing exogenous information or ignores the complex relationship among outcome, exposure and covariates time series. 
How to handle missing data in complex non-stationary multivariate time series is a key problem that remains unresolved, and the performance of existing imputation methods remains to be evaluated in the context of non-stationary mobile device data. 
We propose a novel data imputation solution based on the state space model and multiple imputation to properly address missing data in non-stationary multivariate time series. 
We demonstrate its advantages over other widely used missing data imputation strategies by evaluating its theoretical properties and empirical performance in simulations of both stationary and non-stationary time series, subject to various missing mechanisms. 
We apply the proposed method to investigate the association between digital social interaction and negative mood in a multi-year smartphone observational study of bipolar and schizophrenia patients.
\\[1em]
\textbf{Keywords:} missing data, non-stationary time series, state space model, digital health.}

\doublespace

\section{Introduction}
Mobile devices (e.g., mobile phones and wearable devices) have revolutionized the way we access information and provide convenient and scalable methods for collecting psychological and behavioral biomarkers in patients’ naturalistic settings \citep{world2011mhealth,ben2017mobile}.
An important application of mobile technology is psychiatric research to monitor patients' psychiatric symptoms, social interaction, life habits (e.g., sleeping, smoking and alcohol consumption), as well as other health-related conditions, continuously in their real life.
Intensive monitoring on participants increases the number of observations to hundreds or even thousands, resulting in longitudinal data as entangled multivariate time series.
Consequently, it greatly complicates research design, statistical analysis, and the handling of missing data.

Severe Mental Illness (SMI), including schizophrenia, bipolar disorder, schizoaffective disorder, and related conditions, has become one of the major burdens of disease, affecting over 13 million individuals in the United States \citep{NSDUH2019}.
Antipsychotics, albeit critical in controlling manic and depressive episodes, fall short in improving patient's quality of life \citep{johnson1999social,lieberman2005effectiveness}. 
On the other hand, social support, in particularly the size of social network \citep{corrigan2004social,hendryx2009social}, has been demonstrated to be critical for promoting sustained improvements in symptoms and social functioning \citep{johnson1999social,pevalin2003social,corrigan2004social,hendryx2009social}. 
Our research is motivated by the Bipolar Longitudinal Study (BLS), a multi-year observational smartphone study for bipolar and schizophrenia patients. 
Passive data, such as GPS, phone use, accelerometer data, and anonymized basic information of phone call and text logs, were collected continuously without any necessary direct involvement from study participants. 
Active data, such as self-reported symptoms, behaviors, and experiences, also known as Ecological Momentary Assessment (EMA) \citep{wichers2011momentary}, were obtained through daily reports from study participants. 
We aim to investigate the association between phone-based social network size and patient's self-reported negative mood, while controlling for other behavioral and environmental factors (e.g., physical activity captured by accelerometer data and outdoor temperature). 
Despite the BLS study being unique in its multiyear follow-up and wealth of data collected in the context of SMI, we anticipate that long-term close monitoring of patients will become the norm in the future as mobile technology becomes more prevalent in clinical settings. 

Missing data is a significant issue in mobile device data, due to long-term follow up and intensive information collection.
It can occur for several reasons, for instance, when devices lose power, are switched off (for both passive and active data collection), or lose connection to Wi-Fi (for passive data collection), or when participants decline or ignore scheduled surveys (for active data collection). The missing rate in the BLS study ranges from less than $10\%$ to more than $90\%$ across participants.
We highlight three unique challenges posed by missing data in mobile device time series research.
First, history information (e.g., lagged outcomes) are frequently included as explanatory variables in time series analysis to control for auto-correlation. Therefore, even if the missingness is restricted to outcomes only, this is not a univariate missing data problem and the missing rate for analysis can be significantly elevated.
Second, discarding incomplete records in time series can disrupt the temporal relationship between variables, introducing bias into statistical inference.
Finally, psychiatric patients exhibit significant variation in disease subtype, pharmacological and behavioral treatments, and recovery trajectory. 
For example, the BLS study recruited patients with Bipolar I and II disorder, and schizophrenia; participants experience varying number of relapses, frequent medication adjustments, and different types of psychotherapies.
As a result, collected time series are likely to differ remarkably across patients and be non-stationary over the lengthy follow-up period.
Non-stationarity in time series data is defined as variation in the mean or variance over time, induced by the changing effect of treatment or other covariates over time.
Proper missing data imputation for psychiatric research must account for participant heterogeneity and non-stationarity simultaneously.

Techniques for handling missing data have been expanded in recent decades with advancement in statistics, economics, and computer science. 
However, available approaches do not address our issue of simultaneous missing data in  response variable (dependent variable) and explanatory variables (independent variables) in non-stationary time series. 
While complete case analysis is frequently used to eliminate incomplete records, it reduces estimation efficiency, especially for data with high missing rate \citep{ibrahim2005missing,white2010bias,little2019statistical}; for time series data, it additionally breaks the temporal relationships among variables and may introduce further bias  \citep{bashir2018handling}. 
Commonly used imputation methods for cross-sectional and longitudinal data include mean imputation, last-observation-carried-forward (LOCF) imputation, and linear and spline interpolation. A considerable body of literature has demonstrated that these methods routinely produce biased estimation even under missing completely at random (MCAR) and should be used with caution \citep{honaker2010missing}.
Multiple imputation has been increasingly recommended for handling missing data 
and serves as the benchmark against which new methods are being evaluated \citep{twisk2002attrition,spratt2010strategies}. 
However, multiple imputation relies on independently and identically distributed (i.i.d) subjects and time-invariant models for  imputation, rendering it unsuitable for non-stationary time series centered around a single subject. 
Another class of methods for missing data that assumes a static process is weighted estimation equations (WEEs), which is undesirable for non-stationary time series \citep{ibrahim2005missing}. Furthermore, extensive simulations have cautioned against using WEEs in situations when propensity scores (or inverse weights) are close to zero \citep{ibrahim2005missing}, as is common in multivariate time series.
Various moving average techniques have been proposed for missing data in univariate time series; however, they fail to capture the relationships between multiple time series and thus are inappropriate for multivariate time series.
Recent progress in computer science introduced several machine-learning approaches for missing data imputation in multivariate time series (e.g., current neural networks \citep{bashir2018handling}, generative adversarial networks \citep{luo2018multivariate}). These approaches demand stationary data generation processes and a large number of i.i.d participants for training, and thus are not suitable for time series data centered around a single subject.

Model-based maximum likelihood (ML) methods provide more principled approaches to deal with missing data under missing completely at random (MCAR), missing at random (MAR) and missing not at random (MNAR) \citep{ibrahim2005missing,fitzmaurice2008longitudinal,fitzmaurice2012applied,little2019statistical,follmann1995approximate,little1995modeling,hogan1997mixture,ibrahim1999missing,ibrahim2001missing,ibrahim2009missing,sinha2012robust,fitzmaurice2000generalized,troxel1998analysis}, and are widely applied to longitudinal studies using (generalized) linear mixed-effects models \citep{follmann1995approximate,little1995modeling,hogan1997mixture,ibrahim1999missing,ibrahim2001missing,ibrahim2009missing,sinha2012robust} or generalized estimation equations \citep{fitzmaurice2000generalized,troxel1998analysis}. 
However, their contributions are limited to  static processes or stationary time series. 

A specialized model for potentially non-stationary multi-variate time series is the state space model (also known as dynamic linear model \citep{aoki2013state}), which has been widely applied in engineering, economics, statistics and many other fields \citep{schmidt1966application,harvey1990forecasting,scharf1991statistical,jones1993longitudinal,aoki2013state} to estimate time-varying parameters in dynamic systems, facilitated by a powerful estimation tool -- Kalman filter and smoothing algorithm \citep{kalman1960new,kalman1961new}.
Even though state space model tolerates missing values in response variable, it does not permit missing values in explanatory variables \citep{cipra1997kalman,petrics2009}: missing data in explanatory variables must either be imputed or be discarded as in complete case analysis.
Only a few missing data imputation approaches have been explored for multivariate time series using state space model. 
\citet{bashir2018handling} proposed a state-space-model formatted vector autoregressive model to impute missing values only in the response variable.
\citet{naranjo2013extending} proposed a state-space model algorithm to handle missing values in exogenous predictors that affect the response but not the other way around. 
Neither method addresses the commonly encountered missing data scenario in mobile device data --  missingness in outcome variable and explanatory variables induced by including lagged outcomes as covariates. 

Comprehensive assessments of existing imputation methods and the development of novel imputation strategies are in urgent need for missing data imputation in potentially non-stationary multivariate time series. N-of-1 study design is a patient-centered research design that is well suited for personalized mobile device studies or mental health research with high patient heterogeneity \citep{kumar2013mobile}. 
We consider an N-of-1 study design, where a single subject constitutes the entire study with recurrent interventions or exposures observed for the same subject.
We propose a model-based maximum likelihood strategy that combines state space modeling and multiple imputation, named as ``SSMmp,'' and its computationally improved version ``SSMimpute,'' in order to impute missing lagged outcomes used as regressors and to estimate quantities of interest in a non-stationary multivariate time series.
We prove the theoretical properties of the proposed method SSMmp in terms of producing consistent coefficient estimates for post-imputation analysis and evaluate their empirical performance in extensive simulations of both stationary and non-stationary time series.
Comprehensive comparisons to commonly used strategies for addressing missing data in terms of bias, estimation error, and coverage are conducted. 

The structure of the paper is as follows. Section 2 introduces notation of multivariate time series and the associated pattern of missing data. 
The state space model, as well as the Kalman filter and smoothing methods, are introduced in Section 3. Section 4 introduces the proposed SSMmp method and its theoretical foundations, followed by a more computationally efficient variant -- SSMimpute. Section 5 presents performance evaluation of the proposed method and compares it to other commonly used methods for handling missing data,  using simulations of both stationary and non-stationary time series, subject to various missing mechanisms. 
In section 6, we employ SSMimpute to investigate the association between phone-based social network size and patients' negative mood using a multi-year observational smartphone study of bipolar and schizophrenia patients. Novel insights on the time-varying association between network size of text messages and negative mood are presented. Section 7 concludes with discussions.

\section{Notations and missing data pattern}
We consider N-of-1 study where a single subject is followed over time at $t=1,2,3,\ldots,T$. At time $t$, denote the outcome as $Y_t$, the exposure(s) or treatment(s) as $A_t$, and other covariate(s) as $\C_t$. The observation series of $(Y_t,A_t,\C_t)$ at $t=1,2,3,\ldots,T$ constitute a multivariate time series. 

Missing data is prevalent in actively collected data (e.g., moods or psychatric symptoms requested in surveys), as they need participants' engagement; missing data may also be significant in passively collected data (e.g., telecommunication data, accelerometer data). 
We consider a patient with Bipolar I disorder who has been followed for 708 days in the BLS study as an example. The missing rate for the outcome -- actively collected daily negative mood -- is $23.31\%$, while the missing rate for the passively collected exposures (communication logs) and covariates (accelerometer data and temperature) is $0\%$.  
This exemplifies a simple, but common and critical, missing data scenario: missing data occurs only in the outcome $Y_t$, and exposures and covariates ($A_t$, $\C_t$) are fully observed for the entire follow-up.

We denote the q-lagged value of outcome $Y_t$ as $Y_{t-q}$ and the q-lagged values of the entire outcome time series $\Y=(Y_1,\dots,Y_t)$ as $\mathbf{Y_{t-q}}$ for $q>0$. Define a missingness indicator $M_t$ for the outcome $Y_t$ with $m_t=1 $ if $Y_t$ is missing and $m_t=0$ if $Y_t$ is observed. Denote the collection of time points at which the observation of outcome $Y_t$ is missing as $\mathbf{T}_{\text{mis}}=\{t: m_t=1,t=1,\ldots,T\}$. We can partition the entire outcome time series into observed outcomes $\mathbf{Y}_{\text{obs}} = \{Y_t: m_t=0 \text{ or } t \notin \mathbf{T}_{\text{mis}} \}$ and missing outcomes $\mathbf{Y}_{\text{mis}}=\{Y_t: m_t=1 \text{ or } t \in \mathbf{T}_{\text{mis}} \}$. 
As lagged outcomes are commonly included in time series analysis, missingness in outcome naturally results in missingness in the lagged outcomes as regressors: when $Y_t$ is missing at time points $t \in \mathbf{T}_{\text{mis}}=\{t: m_t=1, t=1,\ldots, T\}$, lagged outcome $Y_{t-q}$ is correspondingly missing at  $t+q \in \mathbf{T}_{\text{mis}}$.
Figure~\ref{fig:missing_pattern} shows this non-monotone missing data pattern when lagged outcomes with a time lag of 1 to 3 are included in the explanatory variables for analysis. 
In general, the more lagged outcomes are included, the higher the overall missing rate is induced for analysis.

\begin{figure}
\centering
\includegraphics[width=0.65\linewidth]{./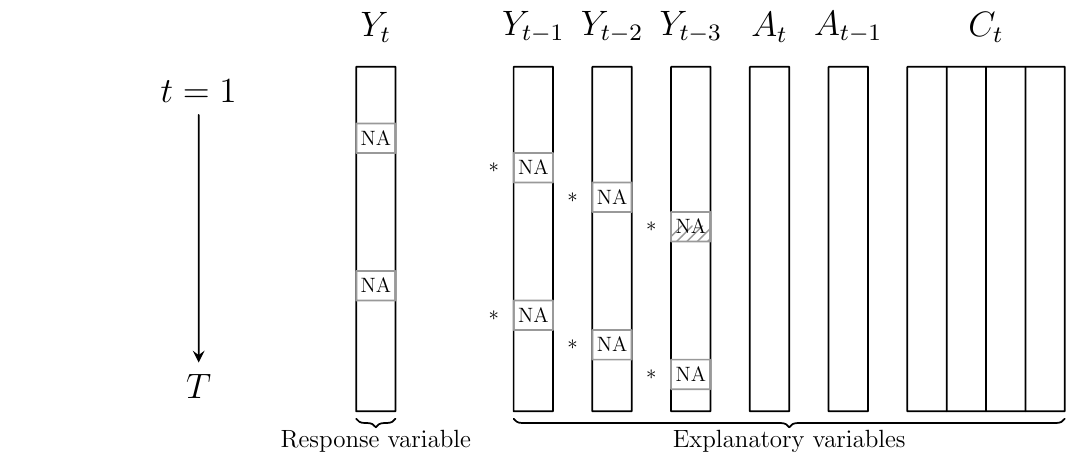}
\caption{Non-monotone missing data pattern in both response and explanatory variables for time series analysis caused by missing outcomes. $Y_t$ denotes outcome at time $t$, whereas ($Y_{t-1}$, $Y_{t-2}$, $Y_{t-3}$) denote three lagged outcomes included for the time series analysis. $A_t$ and $A_{t-1}$ denote exposures at time $t$ and $t-1$. All other covariates included for analysis are included in $\C_t$.}
\label{fig:missing_pattern}
\end{figure}

\section{State space model, Kalman filter and smoothing}

The state space model has been recognized as one of the most noticeable innovations in engineering \citep{gannot1998iterative,kiencke2000automotive}, econometrics \citep{lai2013adaptive}, mathematical finance \citep{manoliu2002energy}, neural networks \citep{haykin2004kalman}, and time series analysis \citep{harvey1990forecasting,scharf1991statistical}. 
Specifically, state space model provides an estimation framework for time-varying parameters in non-stationary time series via the powerful tools of the Kalman filter and smoothing \citep{kalman1960new,kalman1961new}.
State space model formulates time series $Y_t$, $t=1,2,\ldots$, as observations of a dynamic system's output up to additive Gaussian random noise. 
The ``observational equation'' describes the dependence of this time series on a latent process of hidden states, the evolution of which is described by another ``state equation.'' Denote $\theta_t$ as the $d \times 1$ latent state vector at $t=1,2,\ldots$, and assume it to be a Markov process such that $\theta_t \indep \theta_s | \theta_{t-1}$ for $\{\theta_s: s<t\}$.
\begin{defn}[Linear state space model] For $t=1,2,\ldots$, the state equation of linear state space model is
\begin{equation}
\theta_t = G_t \theta_{t-1} + w_t, \quad w_t \sim N_d(0,W_t)
\label{eq:state}	
\end{equation}
where $\theta_{t}$ denotes the $d \times 1$ state vector, $G_t$ is the $d \times d$ state transition matrix, and $w_t$ represents the $d \times 1$ independently and identically distributed noise vector, following distribution  $N_d(0,W_t)$. The observational equation of linear state space model is
\begin{equation}
Y_t=F_t \theta_t + v_t,  \quad v_t \sim N_n(0,V_t)
\label{eq:obs}
\end{equation}
where $Y_t$ is the $n \times 1$ vector of  outcomes, $F_t$ is the $n \times d $ observational matrix, and $v_t$ is the i.i.d observational noise vector, following distribution $N_n(0, V_t)$.
\end{defn}

\begin{figure}
\centering
\includegraphics[width=7cm]{./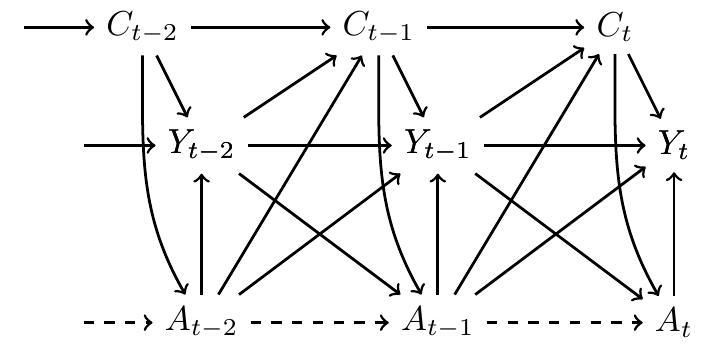}
\caption{Causal relationship of exposures $(A_t, A_{t-1},\ldots,)$, outcomes $(Y_t, Y_{t-1},\ldots,)$, and covariates $(C_t, C_{t-1},\ldots,)$ in entangled multivariate time series.}
\label{fig:dag1}	
\end{figure}
In the BLS study, the outcome variable is a one-dimensional ($n=1$) self-reported summary score for negative mood. 
The $F_t$ matrix may contain lagged outcomes and current and previous exposures and covariates, and hidden states $\theta_t$ represent their unknown regression coefficients at time $t$.
For example, consider a linear relationship of outcome $Y_t$ on its previous value $Y_{t-1}$, current and previous exposures $(A_t, A_{t-1})$, and current covariates $C_t$ as illustrated in the causal diagram Figure~\ref{fig:dag1}. Then the correctly specified statistical model (or the data generation process) for $Y_t$ is 
\begin{equation}
\begin{split}
	Y_t &=  \beta_{0,t} + \rho_t Y_{t-1} + \beta_{1,t} A_t +\beta_{2,t} A_{t-1} + \beta_{c,t} C_t + v_t	 \\
\end{split}
\label{eq:lm}
\end{equation}
which can be reformatted into a state space model representation $Y_t = F_t \theta_t +  v_t$, where $F_t =(1,Y_{t-1},A_t,A_{t-1},C_t)$, $\theta_t=(\beta_{0,t},\rho_t,\beta_{1,t},\beta_{2,t},\beta_{c,t})^T$, and $v_t \sim N(0,V_t)$. Note that coefficients $\theta_t$ and variance $V_t$ may change over time.

A stationary or static process happens when $\theta_t=\theta$ is time-invariant; if $A_t$ and $C_t$ are stationary time series, then the resulting time series $Y_t$ is also stationary. 
On the other hand, $Y_t$ may be generated by a non-stationary or dynamic process. Typical non-stationary examples occur when certain components of $\theta_t$ are time-varying (e.g., random walk, AR(p) process, or periodic stable) or when variance varies over time. 
Time-varying scenarios are prevalent and particularly suited for mental health research. 
As an illustration, for a complex disease (such as mental health) whose influencing factors are poorly understood, we may allow its baseline level (intercept component in $\theta_t$) to fluctuate as a random walk to capture the irregular gradual drift due to unknown factors; when treatment effect varies in response to other major life changes (e.g, a change of disease severity,  a move, a new job), we may model treatment effect as a periodic-stable process, which is time-invariant within periods but takes on different values across periods; when we switch to devices with improved measurement precision, we may model noise variance as also being time-varying. The state space model provides a flexible mathematical description of both stationary (or static) and non-stationary (or dynamic) time series, and its specification can be easily modified to accommodate more complex relationships than those illustrated in Figure~\ref{fig:dag1}. 
The Kalman filter and smoothing algorithm \citep{kalman1960new,kalman1961new} provides an optimal estimator and efficient estimation algorithms for the posterior distribution of coefficients $\theta_t|y_{1:t} \sim N(m_t, C_t)$ (given observations up to time $t$) and $\theta_t|y_{1:T} \sim N(s_t,S_t)$ (given all observations) for state space model \citep{anderson1983adaptive,ljung1990adaptation,guo1990estimating,guo1995exponential,cao2003exponential,welch2020kalman}.
\begin{defn}[Kalman Filter] Consider a dynamic linear model specified in \eqref{eq:state} and \eqref{eq:obs} with initial condition $\theta_0 \sim N(m_0,C_0)$. Let \[\theta_{t-1}|y_{1:t-1} \sim N(m_{t-1}, C_{t-1})\]
Then the following statements hold,
\begin{equation}
\begin{split}
	m_t &= \E[\theta_t|y_{1:t}] = G_t m_{t-1} + R_tF'_t Q_t^{-1}(Y_t - F_tG_tm_{t-1}) \\
	C_t &= \Var [\theta_t|y_{1:t}] = R_t-R_{t}F'_tQ_t^{-1}F_t R_t	\\
\end{split}	
\label{eq:filter}
\end{equation}
where $R_t=\text{Var}(\theta_{t}|y_{1:t})=G_tC_{t-1}G_t'+W_t$ and $Q_t=Var(Y_t|y_{1:t-1})=F_tR_tF'_t+V_t$.
\end{defn}
\begin{defn}[Kalman Smoothing]
Consider a dynamic linear model specified in \eqref{eq:state} and \eqref{eq:obs}. If $\theta_{t+1}|y_{1:T} \sim N(s_{t+1}, S_{t+1})$, then $\theta_t|y_{1:T} \sim N(s_{t}, S_{t})$, where
\begin{equation*}
\begin{split}
	s_t &= \E[\theta_t|y_{1:T}] = m_t + C_tG'_{t+1} R_{t+1}^{-1}(s_{t+1} - G_{t+1}m_{t}) \\
	S_t &= \Var [\theta_t|y_{1:T}] = C_t-C_{t}G'_{t+1}R_{t+1}^{-1}(R_{t+1}-S_{t+1})R_{t+1}^{-1}G_{t+1}C_t.	\\
\end{split}	
\end{equation*}
\end{defn}
Kalman filter and smoothing accommodate missing data in the response variable $Y_t$ by setting certain components in the recursion to $0$ or $\infty$, so that missing data in the response variable is intelligently skipped over and no information is updated for the posterior distribution of $\theta_t$.
For example, for a one-dimensional observation outcome $y_t$ missing at time $t$, we set $V_t=\infty$ or $F_t=0$ in Kalman Filter, so that the gain matrix $R_tF'_tQ_t^{-1}=0$, and consequently, $\theta_t|y_{1:t}=\theta_t|y_{1:t-1}$. 
However, Kalman filter and smoothing do not allow missing data in explanatory variables (equivalently, in $F_t$). 
A new imputation strategy is needed to account for missing data in explanatory variables caused by missing lagged outcomes in non-stationary time series.
\section{State space model multiple imputation for non-stationary time series }

By combining the concepts of likelihood-based missing data methods, multiple imputation, and dynamic modeling of time-varying systems, we developed a state-space-model-based multiple imputation algorithm (SSMmp) for addressing missing data in both the response variable and explanatory variables when missingness is restricted to the outcome in potentially non-stationary multivariate time series. 
Missing data can be generated under MCAR, in which there is no relationship between the missingness of the data and any values (observed or missing); under MAR, when the probability of being missing depends on observed data; and MNAR, when the probability of being missing depends on unobserved information. 
The missing mechanisms MCAR and MAR are referred to as being ignorable, as the modeling under which can be factored out and parameter inference can be derived solely from the ignorable likelihood with observed data \citep{ibrahim2009missing,little2019statistical}.
 
The SSMmp algorithm augments the ignorable likelihood with additional information from imputing missing lagged outcomes in regressors. 
The SSMmp is applicable to MCAR and MAR, with the potential to be extended to MNAR if modeling of the missing mechanism is also included in the likelihood.
Specifically, the SSMmp  algorithm employs an iterative procedure of ``multiple imputation'' and ``maximization,'' which has a theoretical basis similar to the Monte-Carlo Expectation Maximization (MCEM) used by a large number of likelihood-based methods in dealing with missing data \citep{ibrahim2005missing,little2019statistical}. SSMmp is applicable to both stationary and non-stationary time series. When the true   statistical model (or data generation process) is stationary, the SSMmp algorithm reduces to the multiple imputation technique used in longitudinal studies, which has been shown to be unbiased and more efficient in the case of missing covariates \citep{von20074,spratt2010strategies,white2010bias}. On the other hand, when the true statistical model (or data generation process) is dynamic and the resulting multivariate time series is non-stationary, the SSMmp algorithm is capable of handling complex non-stationary systems, when parameters of coefficients may follow random walk, AR(p) process, and periodic-stable process. 
Critical change points are also identified for periodic-stable parameters. 
This capability of identifying change points is particularly useful for psychiatric research in understanding how effects of exposures change over time in response to changes in patients' life events (e.g., a move or a new job, medication change, life habit change).


\subsection{State space model multiple imputation (SSMmp) algorithm}
\label{section:SSMmp}
Consider a general situation in which the outcome $Y_t$ at time $t$ depends on previous outcomes $(Y_{t-1},\ldots, Y_{t-q})$ with $q \ge 1$, current and previous exposures $(A_{t},\ldots,A_{t-p})$ with $p \ge 0$, and current and previous covariates $\C_t=(C_t,\ldots,C_{t-o})$ with $o \ge 0$. 
Then, a statistical model (or data generation process) for $Y_t$ could be specified as,
\begin{equation*}
\begin{split}
	Y_t &=  \beta_{0,t} + \rho_{1,t} Y_{t-1} + \ldots + \rho_q Y_{t-q} + \beta_{1,t} A_t +\ldots + \beta_{p,t} A_{t-p} + \beta_{c,t} \C_t + v_t\\
\end{split}
\end{equation*}
which can be reformatted as a state space model as 
\begin{equation*}
\begin{split}
	Y_t & = F_t \theta_t +  v_t
\end{split}
\end{equation*}
where $F_t = (1, Y_{t-1}, \ldots, Y_{t-q}, A_t ,\ldots, A_{t-p}, \C_t)'$ contains all information of lagged outcomes, current and lagged exposures and covariates at time $t$, and $\theta_t = (\beta_{0,t},\rho_{1,t}, \ldots ,\rho_{q,t},\beta_{1,t}, \ldots,\beta_{p,t}.\beta_{c,t})$ represents the corresponding unknown vector of coefficients. Parameters $\theta_t$ to be estimated can change over time as specified in 
\begin{equation*}	
\theta_t  = G_t \theta_{t-1} + w_t
\end{equation*}

Consider a scenario in which we have missing data for outcomes but not for exposures or covariates, as shown in Figure~\ref{fig:missing_pattern}. Denote the observed current and lagged values of exposures and covariates in the explanatory variables at $t$ as $\X_t=(A_t,\ldots,A_{t-q},\C_t)$, and denote their values at all times as $\X=(\X_1,\ldots,\X_T)$.
We propose the following iterative algorithm for dealing with missing outcomes in non-stationary multivariate time series. Each iteration includes a step of ``multiple imputation'' and a step of ``maximization''.
Similar to Bayesian multiple imputations, ``multiple imputation'' step generates multiple random samples for missing outcomes $\tilde{\mathbf{Y}}_{\text{mis}}=F_t \tilde{\theta}(t)+\tilde{v}_t$, where $\tilde{\theta}(t)$ are random draws from their estimated posterior distribution given the data and other nuisance parameters and $\tilde{v}_t \sim N(0, \hat{V_t})$; then substitutes these sampled missing outcomes into their corresponding lagged values of outcomes $(\mathbf{Y_{t-q}},\ldots,\mathbf{Y_{t-1}})$ in order to eliminate missing data in explanatory variables.
When missing lagged outcomes are substituted for and nuisance parameters of the state space model are taken from an initial specification or estimation from the previous iteration, the ``maximization'' step fits observed outcomes with completed explanatory variables in order to estimate time-varying coefficients. 
Simultaneously, information about the structure of state space models and other nuisance parameters is updated.
Rubin's Rule is the standard approach for analyzing multiply imputed datasets by combining within-imputation variance and between-imputation variance to correct the under-estimated variance of parameters \citep{barnard1999miscellanea}. We apply the Rubin's Rule to combine the results of multiple estimations performed on multiple copies of imputed data.
Below we provide a detailed description of the SSMmp algorithm.
\begin{Steps}
\item(Initialization) Make initial guess on the missing outcomes $\hat{\mathbf{Y}}_{\text{mis}}^{(0)}$ and other nuisance structural parameters of the state space model. Substitute guessed values $\hat{\mathbf{Y}}_{\text{mis}}^{(0)}$ into their correspondingly missing lagged values $(\mathbf{\hat{Y}_{t-q}}^{(0)},\ldots,\mathbf{\hat{Y}_{t-1}}^{(0)})$ to eliminate missing data in explanatory variables.
\item(Maximization step) For the $k$th iteration, $k=1,2,\ldots$, apply state space model to incomplete outcomes $\mathbf{Y}_{\text{obs}}$ and completed explanatory variables $(\mathbf{\hat{Y}_{t-q}}^{(k-1)},\ldots,\mathbf{\hat{Y}_{t-1}}^{(k-1)},\X)$, to obtain the maximum likelihood estimate (MLE) of the structural parameters of the state space model and the filtered or smoothed estimates of coefficients $\{\tilde{\theta}^{(k)}_t\}_{t=1,\ldots,T}$.
\item(Multiple imputation step) Get $r$ random draws of the coefficients, $\{\tilde{\theta}^{(k,j)}\}_{t=1,\ldots,T}$, $j=1,\ldots,r$, from the estimated posterior distribution $\{\tilde{\theta}^{(k)}_t\}_{t=1,\ldots,T}$. Get $r$ random draws of noise, $\tilde{v}_t^{(k,j)} \sim N(0,\hat{V}_t)$. 
For $j=1,\ldots,r$, calculate missing outcomes $\tilde{y}_t^{(k,j)}=F_t\tilde{\theta}^{(k,j)}_t+\tilde{v}_t^{(k,j)}$ for $t \in \mathbf{T}_{\text{mis}}$, substitute those guessed values into their corresponding lagged variables $(\mathbf{\tilde{Y}_{t-q}}^{(k,j)},\ldots,\mathbf{\tilde{Y}_{t-1}}^{(k,j)})$, refit the state space model, and finally obtain a new estimation for coefficients $\{\dbtilde{\theta}^{(k,j)}_t\}_{t=1,\ldots,T}$. Apply Rubin's rule to combine the $r$ newly estimated $\{\dbtilde{\theta}^{(k,j)}_t\}_{t=1,\ldots,T}$, $j=1,\ldots,r$ to be the final estimation of coefficients in the $k$th iteration, $\{\hat{\theta}^{(k)}_t\}_{t=1,\ldots,T}$.
\item (Update) Based on the $r$ groups of guessed missing outcome $\tilde{y}_t^{(k,j)}$ for $t \in \mathbf{T}_{\text{mis}}$, $j=1,\ldots,r$, update new guesses of the missing outcome $\hat{y}_t^{(k)}=\frac{1}{r} \sum_{j=1}^r \tilde{y}_t^{(k,j)}$ for $t \in \mathbf{T}_{\text{mis}}$. Substitute updated guesses for missing outcomes into  the explanatory variables $(\mathbf{\hat{Y}_{t-q}}^{(k)},\ldots,\mathbf{\hat{Y}_{t-1}}^{(k)})$ for their corresponding lagged values.
\item (Check convergence) Repeat steps 2-4 until convergence is achieved for likelihood, coefficient estimates, and other structural parameters of the state space model.
\end{Steps}
The SSMmp algorithm replicates the idea of a long-established approach for handling missing data in longitudinal settings: replace missing data with estimated values, estimate parameters of interest using completed data, and then repeat the previous two procedures by re-estimating missing data assuming newly-estimated parameters are correct and by re-estimating parameters of interest based on updated missing data, and so forth until convergence. 
The SSMmp algorithm also incorporates the idea of Bayesian multiple imputation to account for imputation uncertainty and to ensure adequate coverage for  confidence interval. 
Additionally, no constraints on the structure of the state space model used for imputation are imposed: parameters can be modeled as a random walk, an AR(p) process, or as periodic-stable or time-invariant; the ultimate structure of state space model, including unknown parameters structuring time-varying components (e.g, changepoints of period-stable process, or auto-correlation terms in AR(p) process) is gradually revealed during the iterative fitting process of SSMmp and stabilized upon convergence. Convergence is achieved when likelihood, estimated coefficients, and other structural parameters describing the time-varying behaviors of the system differ by less than a pre-specified criteria between iterations. A flowchart for the SSMmp algorithm is shown in the Appendix. The Appendix also contains an illustration of the convergence process for time-varying coefficients in the BLS application. Notably, we do not impute both missing outcomes in the response variable and missing lagged outcomes in explanatory variables simultaneously, this results in significantly biased parameter estimations \cite{sullivan2015bias,little2019statistical}.

\subsection{Statistical properties of Kalman filter, smoothing and SSMmp algorithm}
\label{section:statistical_properties}

It has long been established that the Kalman filter/smoothing estimator provides the optimal estimate for the hidden states $\theta_t$, if (i) the regression vector $F_t$ belongs to the $\sigma$-algebra $\mathcal{F}_t$, which contains outcomes, exposures, and covariates by time $t$ in our setting, and (ii) the noise $\{w_t,v_t\}$ for the state vector $\theta_t$ and outcome $y_t$ are Gaussian white noise, then $\hat{\theta}_t$ generated by \eqref{eq:filter} is the optimal estimate for $\theta_t$ in the sense that
\[
\hat{\theta_t}=\E[\theta_t|\mathcal{F}_t]
\]
and minimize the conditional expectation 
\[
\E[(\hat{\theta_t}- \theta_t)(\hat{\theta_t}- \theta_t)'|\mathcal{F}_t]
\]
with exponential convergence rate \citep{anderson1983adaptive,ljung1990adaptation,guo1990estimating,guo1995exponential,cao2003exponential}. The exponential convergence of Kalman Filter is of fundamental importance for its ability of tracking the (possibly time-varying) unknown parameters.
To be specific, when the unknown parameter is fixed in time as $\theta_t=\theta$, the Kalman filter estimator is an unbiased estimator with exponential convergence rate, along with its estimation covariance decreasing at rate of $O(1/t)$; when the unknown parameter vector is allowed to be time-varying (e.g., a random walk, a stationary process, or a bounded deterministic sequence), the estimate is exponentially convergent with error bounded if certain stochastic excitation conditions \citep{guo1990estimating,guo1995exponential} are satisfied for $F_t$.

We prove the reduced bias and improved estimation efficiency when missing lagged outcomes in the explanatory variables are imputed versus complete case analysis when they are not imputed and directly removed. 
Consider the scenario when outcome is observed at time $t$ but not at $t-1$. As a result, at time $t$, the response variable $Y_t$ is observed, but the lagged outcome $Y_{t-1}$ used in the explanatory variables is missing. 
We either need to impute the missing lagged outcomes or omit the entire incomplete observation for the state space model analysis.

First, consider the time-invariant case with $\theta_t=\dot{\theta}$, so that the resulting multivariate time series are stationary under constant variance. 
Complete case analysis discards the observation with missing data at $t$ and proceeds directly to $t+1$ for analysis, and thereby, modifies the original temporal relationship and generates a new timeline, in which the $t+1^\text{th}$ time point on the original timeline is substituted by $t^*=t+1$ on the new timeline.
The posterior distribution of the estimated coefficients at time $t^*$ using Kalman filter are denoted as $\hat{\theta}_{t^*} \sim N(\hat{m}_{t^*},\hat{W}_{t^*})$. 
Alternatively, we can maintain the original timeline and impute missing value of $Y_t$ using the proposed state space model multiple imputation method and include it as a lagged outcome for coefficient estimation at $t+1$, denoted as $\hat{\theta}_{t+1} \sim N(\hat{m}_{t+1},\hat{W}_{t+1})$.
\begin{prop}
When $\theta_t=\dot{\theta}$ for $t=1,\ldots, T$ is time-invariant and outcome $Y_t$ is missing at time $t$, the Kalman filter estimates at time $t+1$ from complete case analysis $\hat{\theta}_{t^*}$ (with $t^*=t+1$ on the modified timeline) and from the SSMmp strategy $\hat{\theta}_{t+1}$ are unbiased so that $\E[\hat{\theta}_{t^*}]=\E[\hat{\theta}_{t+1}]=\dot{\theta}$, and $\hat{W}_{t+1} \preceq \hat{W}_{t^*}$.
\label{prop:increasedefficiency}
\end{prop}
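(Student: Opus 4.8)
The plan is to treat the two estimators as Kalman filters run on two different data streams and to reduce both the bias and the efficiency claims to the information form of the filter recursion. Since $\theta_t=\dot\theta$ is time-invariant, the state equation has $G_t=I$ and $W_t=0$, so that $R_t=\Var(\theta_t\mid y_{1:t-1})=C_{t-1}$ and the filter reduces to recursive least squares; throughout I treat the structural parameters $V_t$ (and $G_t,W_t$) as given, isolating the filtering step that the proposition concerns. The only observation handled differently by the two strategies is the one whose regressor $F$ contains the missing lagged outcome: complete case analysis discards it, so that after relabeling its covariance at $t^*=t+1$ is whatever was carried over from the last informative update, whereas SSMmp imputes the missing lag, retains the observation, and performs one extra informative update on it.

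First I would rewrite the covariance update $C_t=R_t-R_tF_t'Q_t^{-1}F_tR_t$ with $R_t=C_{t-1}$ and $Q_t=F_tC_{t-1}F_t'+V_t$ in information form via the Woodbury identity, obtaining the additive recursion $C_t^{-1}=C_{t-1}^{-1}+F_t'V_t^{-1}F_t$ for an informative step and $C_t^{-1}=C_{t-1}^{-1}$ for a skipped step (a missing response, or an observation discarded by complete case). Because informative updates only add $F_s'V_s^{-1}F_s$ terms and skipped steps leave $C^{-1}$ unchanged, the final information matrix is the initial information plus the sum of these terms over informative indices, independent of order. The two strategies accumulate an identical such sum except that SSMmp includes the one extra positive semidefinite term $\tilde F'V^{-1}\tilde F$ from the imputed regressor $\tilde F$ that complete case omits. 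Hence $\hat W_{t+1}^{-1}\succeq\hat W_{t^*}^{-1}$, with gap $\tilde F'V^{-1}\tilde F\succeq 0$, and anti-monotonicity of the matrix inverse on the positive semidefinite cone yields $\hat W_{t+1}\preceq\hat W_{t^*}$, the efficiency claim.

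For unbiasedness I would argue the two cases separately. The complete case estimator is a Kalman filter on the shorter, correctly specified subseries; since each retained observation satisfies $Y_s=F_s\dot\theta+v_s$ with $F_s$ adapted to $\mathcal F_s$ and the noises Gaussian and independent of the retained regressors, conditions (i)--(ii) recalled before the proposition give $\E[\hat\theta_{t^*}]=\dot\theta$. For SSMmp the key observation is that a proper draw $\tilde Y=F\tilde\theta+\tilde v$, with $\tilde\theta$ sampled from the exact Gaussian posterior delivered by the filter and $\tilde v\sim N(0,\hat V)$ an independent replicate, is a draw from the posterior predictive law of the missing outcome; under correct model specification this matches its true conditional distribution, so the completed series $(\mathbf{Y}_{\text{obs}},\tilde Y,\X)$ is equal in distribution to the fully observed series $(\mathbf{Y}_{\text{obs}},Y,\X)$. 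Since $\hat\theta_{t+1}$ is the same measurable functional of the completed series that the full-data filter is of the complete series, it inherits the latter's expectation, giving $\E[\hat\theta_{t+1}]=\dot\theta$.

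The main obstacle I anticipate is the SSMmp unbiasedness, precisely because the imputed value re-enters the next equation as a regressor: this is at once an errors-in-variables problem and a lagged-dependent-variable problem, so one cannot simply substitute the conditional mean of $\tilde F$ into the estimator, which is nonlinear in $F$. The distributional-matching argument above sidesteps the nonlinearity, but it is only as strong as the ``properness'' of the imputation, which rests on the filter/smoother returning the exact Gaussian posterior and on the imputation model coinciding with the data-generating state space model; I would therefore state these assumptions explicitly. I would also note that the efficiency inequality is in the Loewner order, with the gap $\tilde F'V^{-1}\tilde F$ strictly positive whenever the retained regressor is nonzero (e.g.\ through the intercept), so that SSMmp is strictly more efficient in the generic case.
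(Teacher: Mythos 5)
Your proposal is correct, and its efficiency half is essentially the paper's own argument: the paper also reduces the two covariances to information form, obtaining $\hat{W}_{t+1}=(F'_{t+1}V_{t+1}^{-1}F_{t+1}+C_t^{-1})^{-1}$ versus $(F'_{t+1}V_{t+1}^{-1}F_{t+1}+C_{t-1}^{-1})^{-1}$ for complete case, and invokes Loewner anti-monotonicity of inversion exactly as you do; your telescoped ``sum of $F_s'V_s^{-1}F_s$ over informative indices'' is just a cleaner packaging of the same step, and your remark on strictness of the gap is a small bonus. Where you genuinely diverge is the unbiasedness of the SSMmp estimate. The paper runs the filter recursion explicitly with $\tilde F_t$ in place of $F_t$ and closes the argument with $\E[\tilde F_t]=F_t$, pushing the expectation inside the gain term $R_t\tilde F_t'Q_t^{-1}(\cdot)$; this is elementary but, as you correctly anticipate, glosses over the fact that the estimator is nonlinear in $\tilde F_t$ and that $\tilde F_t$, $Q_t$, and $\hat m_{t-1}$ are mutually dependent, so the factorization of expectations is heuristic. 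Your posterior-predictive argument --- that a proper draw $\tilde Y=F\tilde\theta+\tilde v$ makes the completed series equal in law to the fully observed one, so the filter inherits the full-data expectation --- sidesteps the nonlinearity entirely, at the price of explicitly assuming exact properness of the imputation and correct model specification (and, strictly, that the posterior predictive matches the true conditional law of $Y_{\mathrm{mis}}$ given $Y_{\mathrm{obs}}$, which for fixed $\dot\theta$ is itself an idealization comparable to the paper's inductive assumption $\E[\hat m_{t-1}]=\dot\theta$). The paper's route buys an explicit finite-$t$ computation tied to the Kalman recursions; yours buys a logically tighter treatment of the errors-in-regressors issue at the cost of stronger distributional hypotheses. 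For complete-case unbiasedness you cite the optimality conditions rather than redo the recursion, which is a legitimate shortcut to the same conclusion.
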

In short, when compared to complete case analysis, Kalman filter with missing data imputed manages to provide the same unbiased estimate for the unknown parameter, while increasing estimation efficiency by incorporating more observations. The proof of Proposition~\ref{prop:increasedefficiency} is shown in the Appendix. 
The rationale for the increased efficiency is comparable to the reason for the increased efficiency using multiple imputation to impute missing covariates in longitudinal studies: SSMmp tackles a type of incomplete data records -- those with missing values for lagged outcomes but observed values of response outcome, exposures and covariates (marked by * in Figure~\ref{fig:missing_pattern}), preserving partial information about the correlation between the response variable and exposures/covariates. This increased estimation efficiency can be particularly beneficial in scenarios with high missing rates.

Next, we consider the time-varying case, in which certain parameters vary over time and can be modeled as a random walk, an AR(p) process, or a periodic-stable process.
\begin{prop}
For time-varying $\theta_t$, the Kalman filter estimates at time $t+1$ from complete case analysis $\hat{\theta}_{t^*}$ (with $t^*=t+1$ on the modified timeline) is biased so that $\E[\hat{\theta}_{t^*}] \neq \theta_{t+1}$; the Kalman filter estimate from the SSMmp algorithm $\hat{\theta}_{t+1}$ is unbiased so that $\E[\hat{\theta}_{t+1}]=\theta_{t+1}$.
\label{prop:reducedbias}
\end{prop}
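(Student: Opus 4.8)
The plan is to treat the realized state path $\{\theta_s\}_{s\ge 0}$ as fixed and to take expectations over the process and observation noise $\{w_s,v_s\}$, so that ``unbiased'' means $\E[\hat\theta_{t+1}]=\theta_{t+1}$ along that path and ``biased'' means the identity fails. The whole proof then turns on a single structural difference between the two procedures: which state-transition matrices the Kalman recursion \eqref{eq:filter} is forced to apply, and which observation it updates against, between the last fully observed record and the point labelled $t+1$. I would organize the argument around the reduction to, and the departure from, the time-invariant case of Proposition~\ref{prop:increasedefficiency}.

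First, for the SSMmp estimator I would argue conditional unbiasedness directly. Imputing the missing $Y_t$ by $\tilde Y_t=F_t\tilde\theta_t+\tilde v_t$ with $\tilde\theta_t$ drawn from its filtered posterior and $\tilde v_t\sim N(0,\hat V_t)$ produces a draw whose conditional mean matches the Kalman predictor, $\E[\tilde Y_t\mid \mathbf{Y}_{\text{obs}},\X]=F_t\hat\theta_t=\E[Y_t\mid \mathbf{Y}_{\text{obs}},\X]$. Substituting $\tilde Y_t$ into the lagged-outcome slot of $F_{t+1}$ and running \eqref{eq:filter} on the original, uncollapsed timeline keeps the correct one-step transition $G_{t+1}$ at every step and, crucially, keeps the update at $t+1$ aligned to the observation $Y_{t+1}$, which is itself generated by $\theta_{t+1}$; averaging over the $r$ imputations via Rubin's rule integrates out the imputation randomness that enters $F_{t+1}$ nonlinearly through the gain $R_{t+1}F'_{t+1}Q_{t+1}^{-1}$. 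Because prediction and observation are tied to the same state, the tower property yields $\E[\hat\theta_{t+1}]=\theta_{t+1}$, exactly as in the correctly specified filter underlying Proposition~\ref{prop:increasedefficiency}.

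Next, for complete case analysis I would make the timeline misalignment explicit. Deleting the record whose regressor $F_{t+1}$ contains the missing $Y_t$ shifts every subsequent label back by one, so the quantity complete case reports as the estimate at $t^*=t+1$ is in fact assembled from the next retained observation, which was generated by a later state. Running \eqref{eq:filter} on this collapsed sequence propagates the last update by a single transition while the physical gap spans two, so the prediction sits at the $\theta_{t+1}$ level but is corrected toward an observation consistent with a later state. Subtracting the intended target, the conditional bias $\E[\hat\theta_{t^*}-\theta_{t+1}\mid \mathcal F_{t-1}]$ reduces to a deterministic offset governed by the state increment over the skipped step and by $G_{t}-I$; this offset vanishes exactly in the time-invariant regime $G_t\equiv I$, $W_t=0$ of Proposition~\ref{prop:increasedefficiency}, where the two levels coincide, and is generically nonzero once the dynamics are nontrivial --- a random walk, an AR$(p)$ transition with $G_t\neq I$, or a periodic-stable jump straddling a change point --- giving $\E[\hat\theta_{t^*}]\neq\theta_{t+1}$.

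The main obstacle I anticipate is the complete-case side rather than the SSMmp side, for two reasons. The ``bias'' here is intrinsically conditional: for a mean-zero stationary process the unconditional means $\E[\hat\theta_{t^*}]$ and $\E[\theta_{t+1}]$ can agree, so the statement must be read along the realized path or conditional on the retained filtration, and the proof must be phrased accordingly (this is why the target is written as $\theta_{t+1}$ rather than $\E[\theta_{t+1}]$). Second, one must verify that the collapsed recursion does not accidentally reabsorb the skipped transition through a later update. I would resolve both by computing the conditional offset $\E[\hat\theta_{t^*}-\theta_{t+1}\mid\mathcal F_{t-1}]$ explicitly for a minimal dynamic model --- a scalar AR(1) or a two-state random walk --- and checking it is nonzero, which suffices to refute unbiasedness in general while keeping the algebra light and making the contrast with Proposition~\ref{prop:increasedefficiency} transparent.
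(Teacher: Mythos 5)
Your proposal follows essentially the same route as the paper's proof: unbiasedness of SSMmp is deduced from the fact that the imputed regressor satisfies $\E[\tilde F_t]=F_t$, so that the filter run on the original, uncollapsed timeline coincides in expectation with the full-data filter, which tracks $\theta_{t+1}$; and the bias of complete case analysis is traced to the collapsed recursion applying a single transition to $\hat m_{t-1}$ while updating against an observation generated two transitions later, so that prior and observation are misaligned. Where you go beyond the paper is on the complete-case side: the paper stops at exhibiting the structural mismatch between the two recursions and concludes only that ``there is no guarantee on unbiasedness,'' whereas you propose to close the gap by computing the conditional offset $\E[\hat\theta_{t^*}-\theta_{t+1}\mid\mathcal{F}_{t-1}]$ explicitly in a minimal dynamic model (scalar AR(1) or random walk) and verifying it is nonzero --- a worthwhile strengthening, since ``not provably unbiased'' and ``biased'' are different claims, and your reading of the statement as conditional on the realized state path is the correct one for making $\E[\hat\theta_{t^*}]\neq\theta_{t+1}$ meaningful. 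One caution that applies equally to your argument and to the paper's: the Kalman gain $R_{t+1}F'_{t+1}Q_{t+1}^{-1}$ depends nonlinearly on $\tilde F_{t+1}$, so $\E[\tilde F_{t+1}]=F_{t+1}$ does not by itself license pushing the expectation through the update, and Rubin's-rule averaging over imputations does not remove this nonlinearity; both proofs are therefore first-order (linearized) unbiasedness statements for the SSMmp side.
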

Complete case analysis has long been regarded as a safety net for dealing with missing data and has been demonstrated to be the optimal choice under a variety of circumstances for cross-sectional and longitudinal data \citep{von20074,spratt2010strategies,white2010bias}. 
In the case of time series, complete case analysis bypasses time points with incomplete record and splices time points with complete data for analysis. 
However, for dynamic systems or non-stationary time series, it is critical to restore the proper temporal relationship among variables; otherwise, it may introduce bias into parameter estimation. For example, if $Y_{t-1}$ is autocorrelated with its previous value $Y_{t-1}$ by a factor of 0.5; omitting $Y_{t-1}$ and splicing time points $t-2$ and $t$ together (correlation now is $0.25$) will bias the estimation of the autocorrelation. The proposed SSMmp algorithm restores the proper temporal structure between variables by imputing missing data properly for non-stationary system.
Proposition~\ref{prop:reducedbias} proved the presence of bias of using complete case analysis and the absence of bias when using the proposed SSMmp. The proof of Proposition~\ref{prop:reducedbias} is shown in the Appendix.

In addition to removing potential bias, SSMmp typically provides a more efficient estimation with a smaller variance, compared to the complete case analysis, due to the utilization of additional data. Nevertheless, this enhanced efficiency is not guaranteed, unlike the stationary scenario. 
For instance, Kalman filter monitors the true value of a random walk process, but its uncertainty does not decrease over time because the true value is continuously changing. Without decreasing variance over time, substantial variance in the posterior distribution of estimated parameters plus added noise in multiple imputation may cancel out the effect of increased sample size.

\subsection{``Computationally efficient'' SSMimpute algorithm}
\label{section:SSMimpute}

Despite SSMmp's favorable statistical properties, it can be computationally intensive, unstable, and exhibit convergence issues in practice. SSMmp uses estimation results merged from several random draws of missing outcomes in each iteration. Increasing the number of random draws improves stability but adds to the computational burden. 
While differences over iterations tend to decrease as approaching the maximization of the ignorable likelihood, there is no guarantee of convergence in practice due to the random nature of sample selection. 
We propose an alternative version of the state space model multiple imputation algorithm ``SSMimpute.'' 
This alternative approach iterates between model fitting and substituting until maximization is reached and then applies multiple imputation and Rubin's Rule for proper variance of the parameters.
We include the detailed algorithm for the ``SSMimpute'' in the Appendix. When compared to the SSMmp algorithm, the SSMimpute algorithm produces similar estimation results (up to differences due to randomness in multiple imputation), but is significantly more stable and converges faster. 
\section{Simulation}
\label{sec:simulation}

\subsection{Simulation setup}
We evaluate the performance of the SSMmp and SSMimpute algorithms using simulations of both stationary and non-stationary multivariate time series of 1000 time points, emulating the typical follow-up period of 2-4 years in the Bipolar Longitudinal Study (BLS). 
Following temporal causal relationship assumed in Figure~\ref{fig:dag1}, we consider that the outcome $Y_t$ is auto-correlated with its previous value $Y_{t-1}$ and depends on exposures on the same day (to represent a contemporaneous or short-term association) and the previous day (to represent a lagged or long-term association), in the presence of other confounders in $C_t$, as shown in \eqref{eq:dgp}. 
\begin{equation}
	Y_t = \beta_{0,t} + \rho_t Y_{t-1} + \beta_{1,t} A_t +\beta_{2,t} A_{t-1} + \beta_{c,t} C_t + v_t \text{, } \quad v_t \sim N(0,0.1)
	\label{eq:dgp}
\end{equation}
We consider two general  scenarios: 1) the stationary scenario, in which all covariate coefficients are fixed over time as $\beta_{0,t}=40$, $\rho_t=0.5$, $\beta_{1,t}=-1.5$, $\beta_{2,t}=-0.5$, and $\beta_{c,t}=-1$, so that the resulting outcome time series is stationary when $A_t$ and $C_t$ are stationary, and 2) a non-stationary scenario in which certain coefficients are time-varying, including a random walk intercept $\beta_{0,t} = 40 + \beta_{0,t-1} + w_{t}$, $w_t \sim N(0,1)$, three-pieced coefficient $\beta_{1,t}$ as
\begin{equation}
	\beta_{1,t}= \begin{cases} 
      -1 & t \leq 400 \\
      -2 & 400 < t \leq 700\\
      -1 & x > 700
   \end{cases}
\label{eq:3periodsA}
\end{equation}
and other time-invariant coefficients $\rho_t=0.5$, $\beta_{2,t}=-0.5$, and $\beta_{c,t}=-1$, so that the resulting outcome time series is non-stationary. 
Note that time-varying behaviors of coefficients are chosen to resemble those discovered in the BLS data application, and the magnitude of these coefficients are likewise chosen so that generated outcome, exposure and covariates match to those in the BLS application.

For each scenario, 500 simulations are generated to evaluate the performance of the SSMmp and SSMimpute algorithms for the estimation of coefficients, and to compare them to other widely used missing data strategies.
Widely used missing data strategies include complete case analysis (``cc''), mean imputation (``mean''), last-observation-carried-forward imputation (``locf''), linear (``linear'') and spline (``spline'') interpolation, multiple imputation (``mp''), and imputation using best-fitted ARIMA models (``ARIMA'').
The Multiple Imputation by Chained Equations (MICE) is implemented for the multiple imputation strategy (``mp''), using current and 1-lagged values of outcome, exposure, and covariates. 
Missing outcomes are imputed first using above strategies, then we apply both linear regression model (true model for the stationary scenario, denoted by the preceding abbreviation) and state space model (true model for the non-stationary scenario, denoted by similar abbreviations preceded by ``SSM\_'') for statistical inference.
We investigate imputation strategy performances under various missing mechanisms of MCAR, MAR, and MNAR and under varying missing rates of $25\%$, $50\%$, and $75\%$.

\subsection{Simulation result and comparisons to other missing data methods}

For the stationary case, simulation results confirm literature's conclusions  regarding performances of widely-used imputation approaches \citep{von20074,spratt2010strategies,white2010bias}.
Figure~\ref{fig:simulation1.1} illustrates the estimation result for $\beta_{2,t}$ (coefficient of $A_{t-1}$) at $t=1000$ in terms of bias, standard error, and coverage, over $500$ simulations using various missing data imputation strategies for missing rate of $50\%$ under MCAR, MAR, and MNAR. 
Similar estimation comparison plots for other coefficients are shown in the Appendix.
Simulation results confirm that complete case analysis and multiple imputation provide unbiased estimation and adequate coverage under the true model specification, whereas multiple imputation provides slightly increased estimation efficiency by incorporating additional observations. 
SSMmp and SSMimpute strategies provide unbiased estimation using the (in this case unnecessarily more complex) state space model for analysis, and provide increased estimation efficiency if compared to complete case analysis, as proved by Proposition~\ref{prop:increasedefficiency}.
All other imputation strategies introduce significant bias in the coefficient estimation, providing clear warnings of their use for time series data.  
It is worth noticing that the commonly used LOCF imputation and linear interpolation strategies introduces significant bias in estimation and almost $0\%$ coverage for the nominal $90\%$ confidence interval, which confirms the warning against their use in both longitudinal and time series analysis. 

For the non-stationary case, Figures~\ref{fig:simulation2.3} shows estimation results for the time-varying $\beta_{1,t}$ (coefficient of $A_t$, as in Equation~\eqref{eq:3periodsA}) at three time points t=400, 700, and 1000 within distinct periods 
under MCAR and missing rate of $50\%$; 
Figure~\ref{fig:simulation2.1} shows estimation result for time-invariant $\beta_{2,t}$ (coefficient of $A_{t-1}$) at $t=1000$ under MCAR, MAR, and MNAR and missing rate of $50\%$.
Estimation results for time-varying $\beta_{1,t}$ under MAR and MNAR, as well as the estimation results for remaining coefficients (random walk intercept $\beta_{0,t}$ and time-invariant coefficients $\rho_t$ and $\beta_{c,t}$) under MCAR, MAR, and MNAR are shown in the Appendix. Complete trajectory of estimated $\beta_{1,t}$ varying over time are also shown as an example in the Appendix. 

Figures~\ref{fig:simulation2.3} and \ref{fig:simulation2.1} demonstrate that for the non-stationary case, SSMimpute and SSMmp with correctly specified state space model provide unbiased estimation of the 3-periodic time-varying $\beta_{1,t}$ (coefficient of $A_t$) and time-invariant $\beta_{2,t}$ (coefficient of $A_{t-1}$) with adequate coverage of the $90\%$ CI. 
All other existing imputation methods are biased, for both the incorrectly specified linear model or with correctly specified state space model for post-imputation statistical analysis. 
To be specific, certain commonly employed imputation methods (e.g., mean imputation, multiple imputation, and LOCF) result in significant bias and a substantially under-covered $90\%$ CI. 
Complete case analysis has been a tempting option for researchers who are unsure of the appropriate imputation method to use. For time series data, it skips over time points with incomplete data and splices time points with complete data together for analysis. Consequently, data is misplaced in terms of number of time points that separate them and complete case analysis  will bias the estimation of the autocorrelation term. 
Our simulation demonstrates that complete case analysis is moderately biased for $\beta_{1,t}$ in periods 2 and 3 (shown in Figure~\ref{fig:simulation2.3}) and significantly biased for $\beta_{2,t}$ (shown in Figure~\ref{fig:simulation2.1}). 
We observe that the magnitude of bias using complete case analysis is greater for coefficients involving information across time points, such as $\beta_{2,t}$ (coefficient of $A_{t-1}$ on $Y_t$) and $\rho_t$ (coefficient of $Y_{t-1}$ on $Y_t$), than for coefficients involving information of the same time point, such as $\beta_{1,t}$ (coefficient of $A_t$ on $Y_t$) and $\beta_{c,t}$ (coefficient of $C_t$ on $Y_t$). 
This varying degrees of being biased for different coefficients can be explained using a similar argument put forth by \citet{carroll1985comparison,gleser1987limiting} in the context of measurement error bias: when $Y_{t-1}$ is measured with error (or in our case, complete records are spliced together and $Y_{t-1}$ is replaced by its autocorrelated surrogate $Y_{t-2}$), coefficients of $Y_{t-1}$ ($\rho_t$) and of variables correlated with $Y_{t-1}$ (e.g., $\beta_{2,t}$ of $A_{t-1}$) are biased; coefficients of variables uncorrelated with $Y_{t-1}$ (e.g., $\beta_{1,t}$ of $A_t$ and $\beta_{c,t}$ of $C_t$) are less affected.
 
Change points for three-pieced time varying $\beta_{1,t}$ (coefficient of $A_t$) are also identified, and their distribution  under various imputation imputation strategies are shown in the Appendix. 
All strategies with correctly specified state space model appear to identify change points with distributions centered around their true values. Complete case analysis underperforms other strategies with a more dispersive distribution due to the use of fewer data, which can be particularly apparent when missing data occurs near change points. Due to sampling uncertainty along the iterations, SSMmp has a more dispersive distribution of identified change points than SSMimpute. This can be improved by increasing the number of draws in its multiple imputation step.

For the non-stationary scenario, we also compare estimation performance of $\beta_{2,t}$ (coefficient of $A_{t-1}$) under various missing rate of $25\%$, $50\%$, and $75\%$, shown in the Appendix. Note that when the missing rate is low ($25\%$), the complete case analysis slightly underperforms our proposed methods of SSMmp and SSMimpute; however, as the missing rate increases, the estimation bias in complete case analysis becomes more apparent. 

\begin{figure}
    \centering 
    \includegraphics[width=\linewidth]{./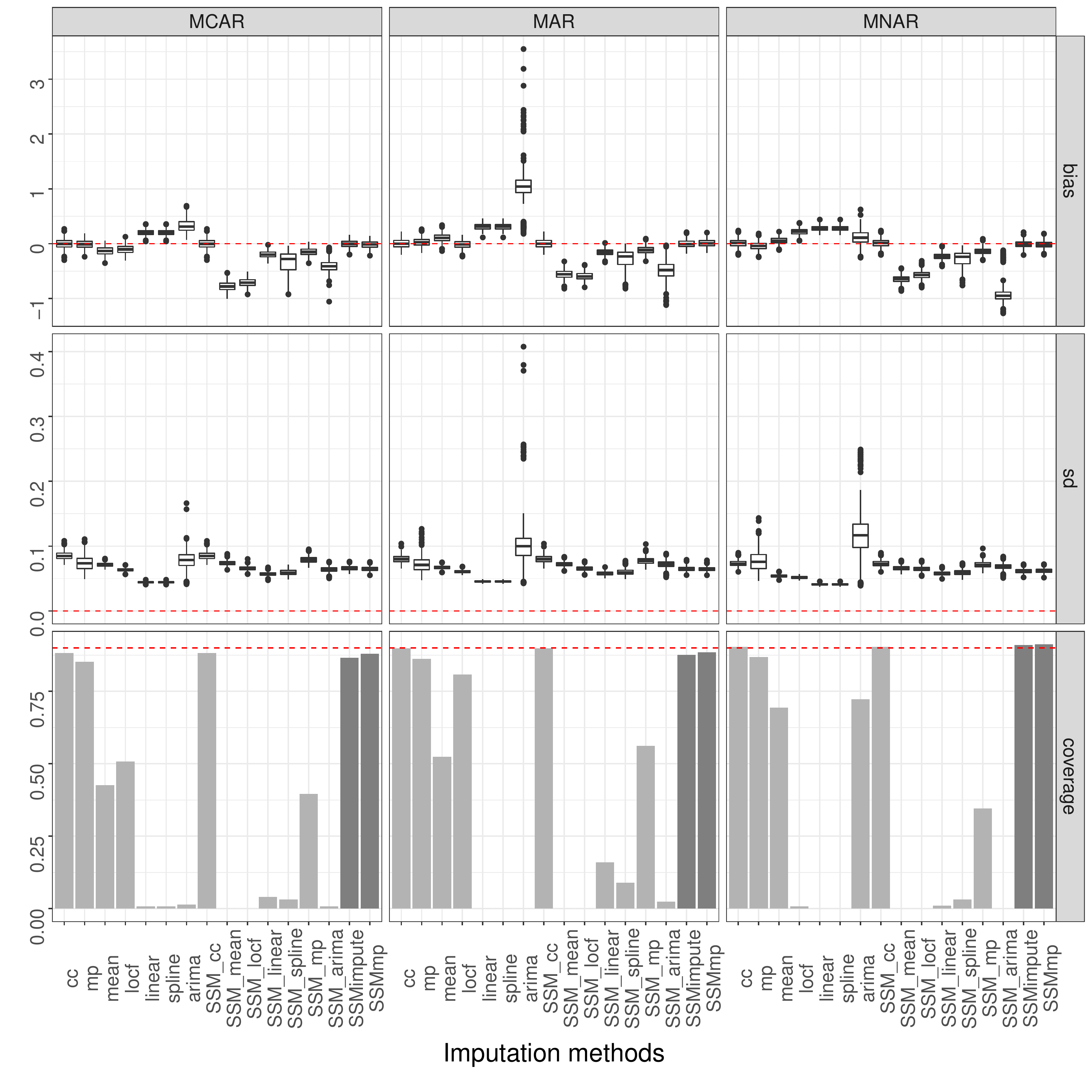}
\caption{Boxplots of the estimate (top), standard error (middle), and $90\%$ CI coverage (bottom) of the coefficient of $A_{t-1}$ for the stationary scenario, over 500 simulations under MCAR (left), MAR (middle), and MNAR (right) and missing rate of $50\%$.
Methods include complete case analysis (``cc''), multiple imputation (``mp''), mean imputation (``mean''), linear interpolation (``linear''), spline interpolation (``spline''), imputation with best ARIMA model (``arima'') under linear analytical models, and their corresponding versions under state space analytical models (with added ``SSM\_'' at the front), as well as the proposed state space model multiple imputation strategies -- SSMimpute and SSMmp.}
\label{fig:simulation1.1}
\end{figure} 

\begin{figure}
    \centering 
    \includegraphics[width=\linewidth]{./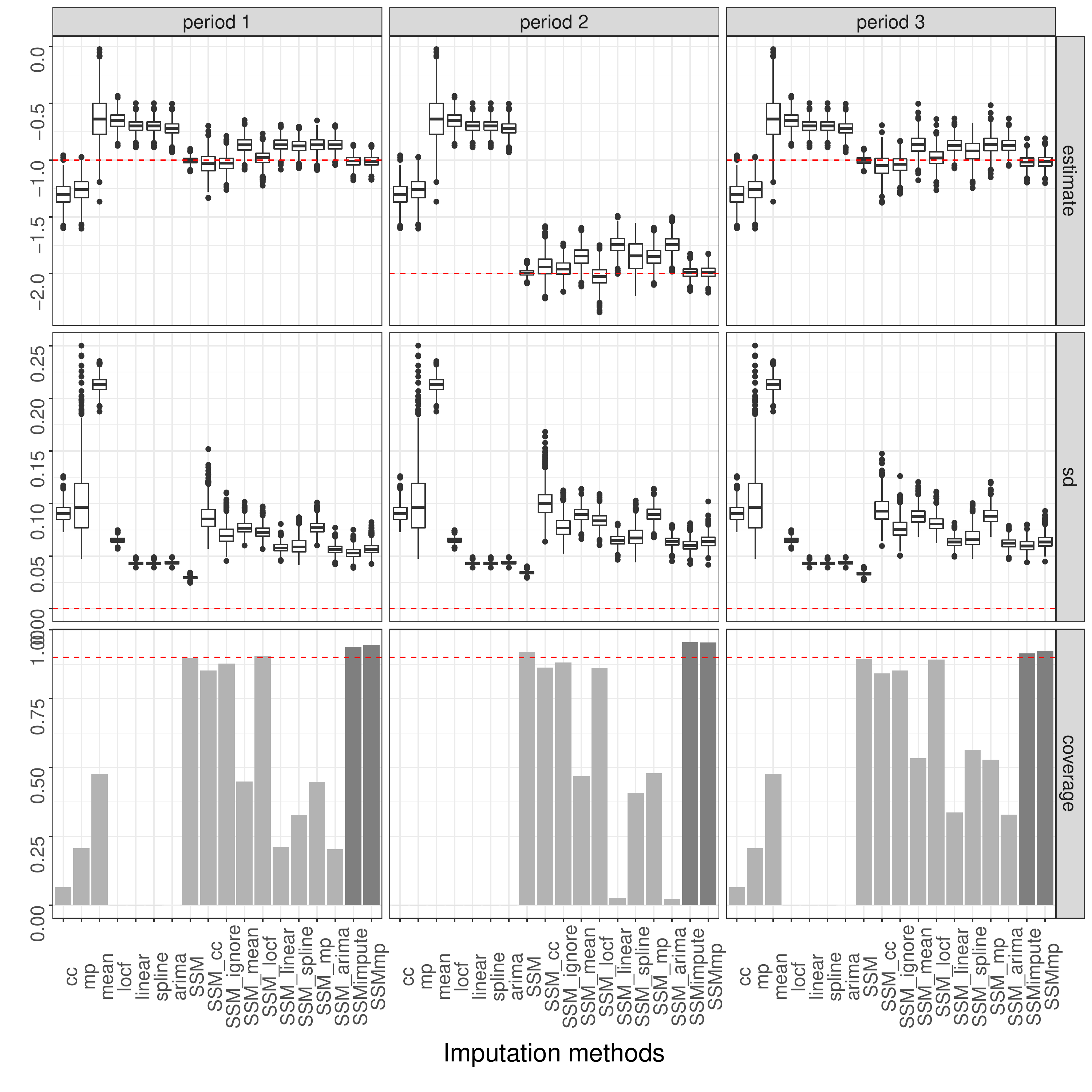}
\caption{Boxplots of the estimate (top), standard error (middle), and $90\%$ CI coverage (bottom) of the periodic-stable (time-varying) coefficient of $A_t$ over 3 distinct periods (from left to right) for the non-stationary scenario, over 500 simulations under MCAR and missing rate of $50\%$.
Methods include complete case analysis (``cc''), multiple imputation (``mp''), mean imputation (``mean''), linear interpolation (``linear''), spline interpolation (``spline''), imputation with best ARIMA model (``arima'') under linear analytical models, and their corresponding versions under state space analytical models (with added ``SSM\_'' at the front), as well as the proposed state space model multiple imputation strategies -- SSMimpute and SSMmp.}
\label{fig:simulation2.3}
\end{figure} 

\begin{figure}
    \centering 
    \includegraphics[width=\linewidth]{./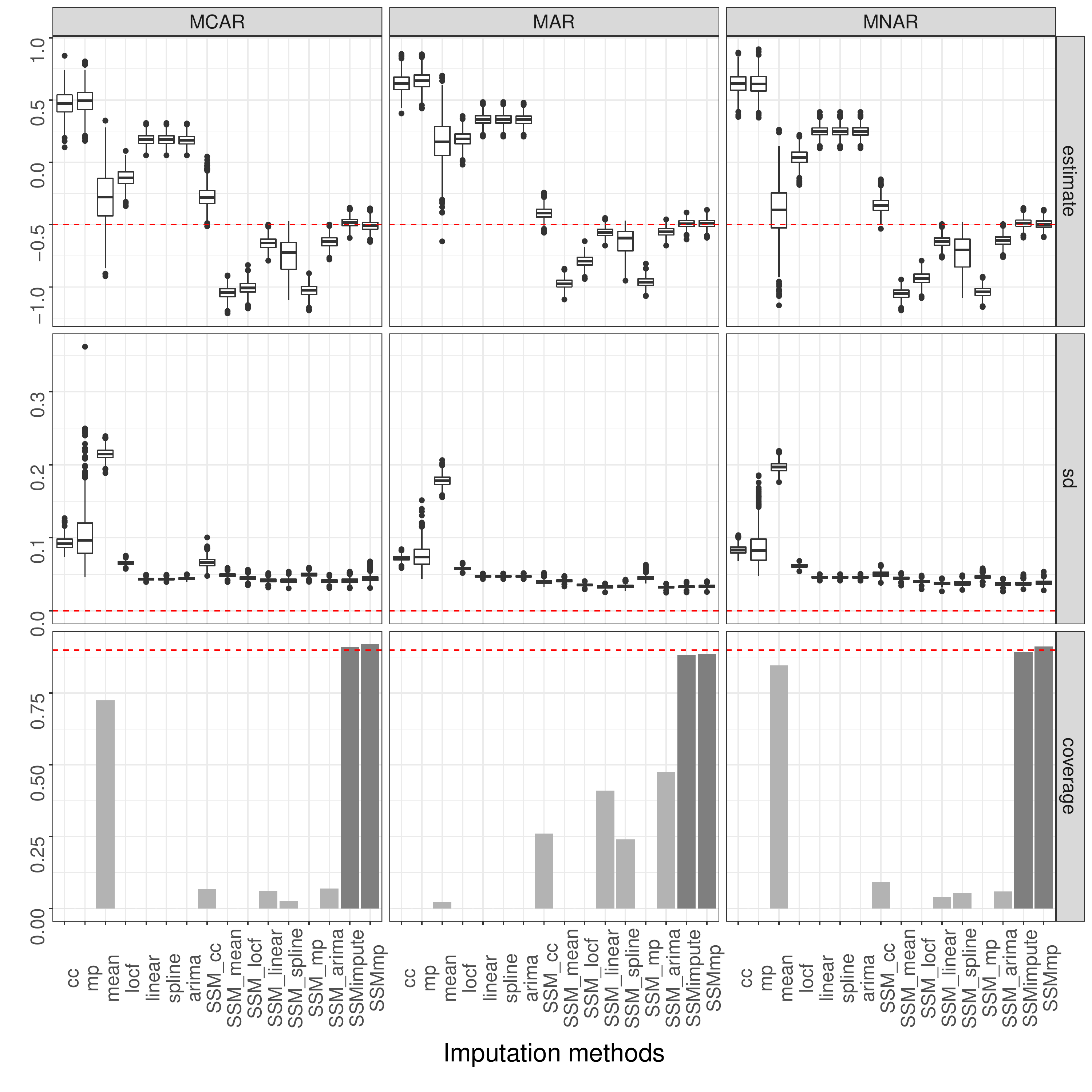}
\caption{Boxplots of the estimate (top), standard error (middle), and $90\%$ CI coverage (bottom) of the time-invariant coefficient of $A_{t-1}$ for the non-stationary scenario, over 500 simulations under MCAR (left), MAR (middle), and MNAR (right) and missing rate of $50\%$.
Methods include complete case analysis (``cc''), multiple imputation (``mp''), mean imputation (``mean''), linear interpolation (``linear''), spline interpolation (``spline''), imputation with best ARIMA model (``arima'') under linear analytical models, and their corresponding versions under state space analytical models (with added ``SSM\_'' at the front), as well as the proposed state space model multiple imputation strategies -- SSMimpute and SSMmp.}
\label{fig:simulation2.1}
\end{figure}

\section{Application}
\label{sec:application}
The Bipolar Longitudinal Study (BLS) is an ongoing mHealth cohort study that has recruited 74 patients with schizophrenia or bipolar illness from the Psychotic Disorders Division at McLean Hospital since February 2016. 
Once recruited, each participant underwent a comprehensive Diagnostic and Statistical Manual of Mental Disorders (DSM-IV) examination \citep{american2013diagnostic}. 
The study aimed at following each participant for at least 1 year (duration of mean(sd) 357.8(97.9) in days). A rich collection of data about physical activity, GPS locations, and anonymized basic information of calls and texts was passively collected using smartphones via the ``Beiwe'' platform \citep{huang2019activity,barnett2020inferring}. 
A customized 5-minute survey was sent to participants every day to inquire about their moods, sleep, social interactions, and psychotic symptoms.

It is worth noticing that individualized inference and a n-of-1 study design are implemented for each participants.
As shown in the left panel of  Figure~\ref{fig:selectedparticipants}, participants exhibit considerable heterogeneity regarding their enrollment time, follow-up duration, symptom trajectory, medication use, and behavioral therapy. As illustrated in the right panel of  Figure~\ref{fig:selectedparticipants}, their estimated time-varying coefficients $\beta_{12}$ (coefficient of $A_{\text{texts},t}$) also differ substantially in direction, magnitude, and segmentation, making it challenging to conduct pooled analysis for entire groups of patients.
In the remainder of the paper, we exemplify our proposed method considering a participant with bipolar I disorder who was followed for 708 days, and present the detailed analytical results for this participant.

As suggested by prior research and our collaborative psychiatric investigations, we are interested in quantifying the association of phone-based social network size with patients' self-reported negative mood.
The outcome of interest -- self-reported negative mood ($Y_t$) -- is a composite measure ranging from 0 (best) to 32 (worst), that captures negative feelings experienced by patients comprehensively, including fear, anxiety, embarrassment, hostility, stress, upset, irritated, and loneliness (all with a range from 0 to 4).
Among the numerous features captured by telecommunication, we choose the degree of contacts via outgoing calls ($A_{calls,t}$) and outgoing texts ($A_{texts,t}$) as the exposures of interest. Figure~\ref{fig:negative} shows the self-reported negative mood during the 708 days follow-up with 164 days missing, and the passively collected degree of contacts via texts and calls during the 708 days follow-up with no missing data.
Weather temperature ($\text{Temp}_t$) \citep{denissen2008effects,huibers2010does} and physical activity ($\text{PA}_t$) \citep{peluso2005physical,hamer2012physical} have been demonstrated to be associated with negative mood as well as the tendency for social interaction, and are thus included as confounders. 
The physical activity data are pre-processed following the strategy in \citet{bai2012movelets,bai2014normalization}. The trajectory of temperature and average level of physical activity in a moving window of past two weeks are shown in the Appendix. 
We include one-lagged outcome and one-lagged exposures to correct for confounding due to auto-correlation. Researchers may consider include additional information from the past. The analytical state space model is specified as follows.
\begin{equation}
\begin{split}
Y_t & = \beta_{0,t} + \rho_t Y_{t-1} + \beta_{11,t} A_{calls,t} + \beta_{12,t} A_{calls,t-1} + \beta_{21,t} A_{text,t} + \beta_{22,t} A_{texts,t-1} \\
	& \quad + \beta_{temp,t} \text{Temp}_t + \beta_{pa,t} \text{PA}_t + v_t
\end{split}
\end{equation}
This choice of model is determined by a standard time series model selection technique that compares one step ahead predictions \cite{rivers2002model}. Model selection details are shown in the Appendix. In the Appendix, we additionally present the $90\%$ confidence interval of the point-wise posterior distribution from which missing outcomes are imputed. The confidence interval band encompasses observed outcomes well and demonstrates satisfactory model fitting ability.

\begin{figure}[t]
\centering
\includegraphics[width=0.9\linewidth]{./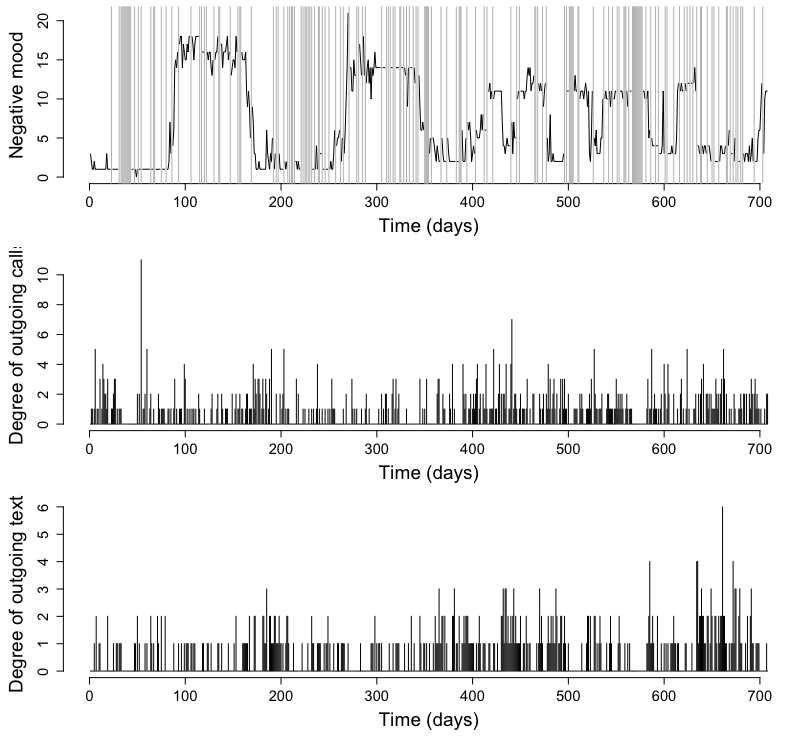}
\caption{The self-reported negative mood (top), degree of outgoing calls (middle), and degree of outgoing text (bottom) over 708 days of follow up for a bipolar patient enrolled in the BLS study. The gray vertical lines show the days with missing self-reports.}
\label{fig:negative}
\end{figure}

\begin{figure}
\centering
\begin{subfigure}[b]{\textwidth}
   \includegraphics[width=1\linewidth]{./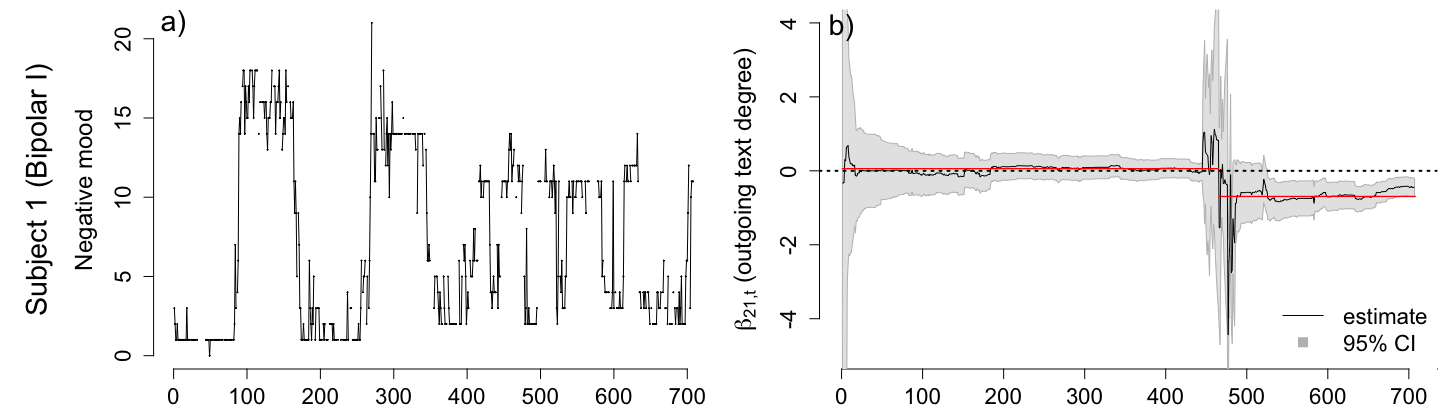}
\end{subfigure}
\begin{subfigure}[b]{\textwidth}
   \includegraphics[width=1\linewidth]{./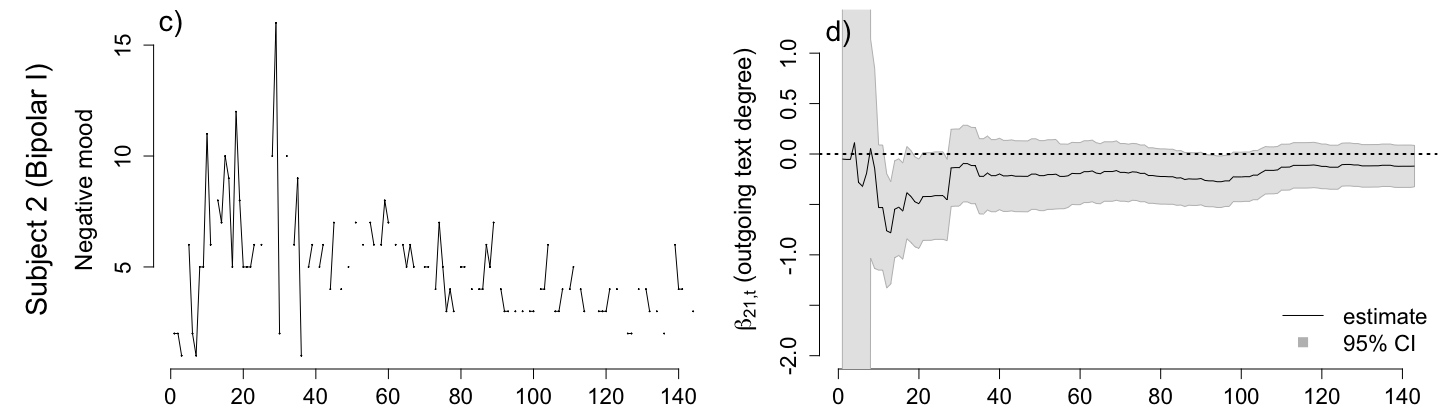}
\end{subfigure}
\begin{subfigure}[b]{\textwidth}
   \includegraphics[width=\linewidth]{./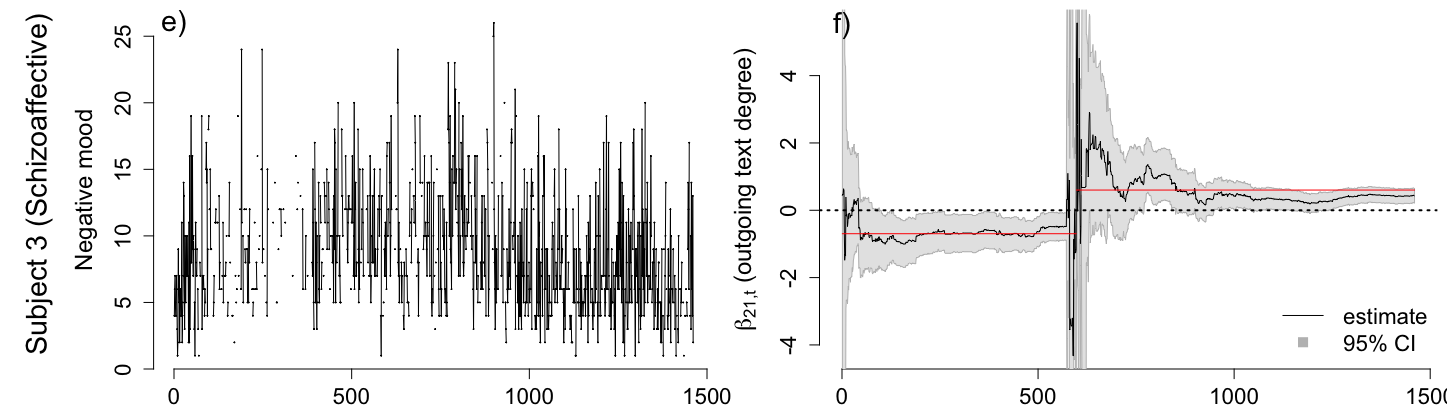}
\end{subfigure}
\begin{subfigure}[b]{\textwidth}
   \includegraphics[width=\linewidth]{./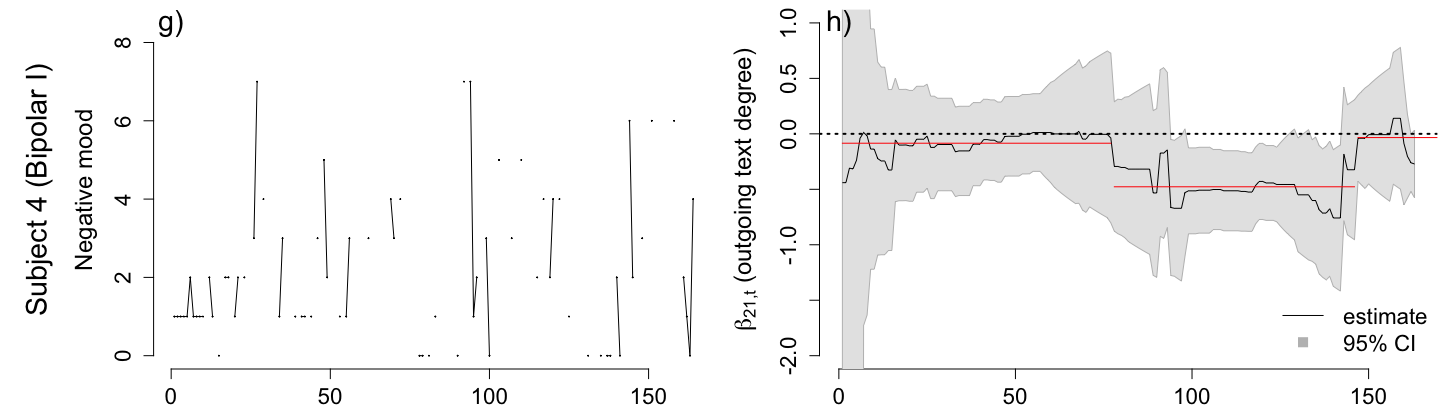}
\end{subfigure}
\caption{Self-reported negative moods (left) and estimated $\beta_{21,t}$ over time in days (right) for four subjects in BLS. For right panel, black lines represent the estimated coefficient over time, with $95\%$ confidence intervals represented in grey; the dashed horizontal lines at 0 indicates that there is no effect. Graph b) shows the estimated $\beta_{21,t}$ over time for the focal subject with Bipolar I disorder followed for 708 days; $\beta_{21,t}$ is found to be periodic-stable with two periods over time with a change point around day 466.
}
\label{fig:selectedparticipants}
\end{figure}

Table~\ref{tab:SSMimpute} presents the results of coefficient estimation using SSMimpute, compared to complete case analysis and multiple imputation under state space analytical model. The intercept is modeled as a random walk that continuously tracks the gradually drifting baseline level of negative mood. Negative mood is moderately auto-correlated with its previous value with an auto-correlation of $\rho_t=0.62$. 
The degree of outgoing calls ($A_{\text{calls},t}$) is significantly associated with a decrease in negative mood, whereas the degree of outgoing calls at the previous day ($A_{\text{calls},t-1}$) is not.
The estimated coefficient for degree of outgoing texts ($A_{\text{texts},t}$) is shown in Figure~\ref{fig:selectedparticipants}(b) and is found to be two-pieced over time, with a change point around day 466, prior to which there is no evidence of association and subsequent to which we uncover a significant negative association with negative mood. 
This transition point occurs concurrently with an increase in the medication Bupropion and a change in psychiatrist. 
The degree of outgoing text at the previous day ($A_{\text{texts},t-1}$) is not associated with decrease in negative mood.
In short, the absence of significant association between the degree of calls or texts received at the previous day with negative mood may indicate that social support is mostly associated with the same day's mood, whereas no association is detected for the following day.
Three distinct periods separated by day 162 and day 267 are identified for the estimated coefficient of average physical activity level ($\text{PA}_t$), with a significant negative correlation during the middle period, coinciding with the patient's second depression relapse. 
The outdoor temperature ($\text{Temp}_t$) is negatively correlated with the negative mood, but not significantly. 
The detailed plots of estimated coefficient trajectories over time of all variables are shown in the Appendix.
%

The estimation results using complete case analysis (CC) appear to be similar to those found using SSMimpute for variables including the degree of outgoing texts and calls on the same day and temperature; however, the estimated coefficients for variables across time points, such as the autocorrelation term ($\rho_t$), the coefficients for the 1-lagged degree of texts and calls ($\beta_{12,t}$ and $\beta_{22,t}$) differ, reflecting the potential bias induced by interrupting temporal relationship via complete case analysis. 
The majority of standard errors obtained with SSMimpute are lower than those obtained with complete case analysis. 
This improvement, however, is not significant given the low missing rate and the long follow-up. 
Improvement in estimation efficiency for cases with high missing rate are more pronounced, as shown by simulations.
The discrepancy in estimation between multiple imputation and SSMimpute confirms the substantial estimation bias introduced by multiple imputation, which is most obvious in the underestimated auto-correlation term of $0.14$. 
Simulations also confirm this underestimation of auto-correlation using multiple imputation in the non-stationary scenario (shown in the Appendix). 
\begin{table}
\centering
\begin{tabular}{lcccccc}
  \hline
\multirow{2}{*}{Variables} & \multicolumn{2}{c}{SSMimpute} & \multicolumn{2}{c}{complete case} & \multicolumn{2}{c}{multiple imputation (MICE)} \\ 
\cmidrule(lr){2-3} \cmidrule(lr){4-5} \cmidrule(lr){6-7} 
  & Estimate & Std.Error & Estimate & Std.Error & Estimate & Std.Error\\
  \hline
$\beta_{0,t}$ & \multicolumn{2}{c}{(random walk)} & \multicolumn{2}{c}{(random walk)} & \multicolumn{2}{c}{(random walk)} \\ 
$\rho_{t}$ & 0.62 & 0.05 & 0.65 & 0.04 & 0.14 & 0.08 \\ 
$\beta_{11,t}$ & -0.12 & 0.07 & -0.16 & 0.08 & -0.1 & 0.06 \\ 
$\beta_{12,t}$ & 0.02 & 0.06 & -0.08 & 0.07 & -0.03 & 0.06 \\ 
$\beta_{21,t}$(period1) & -0.00 & 0.14 & 0.01 & 0.15 & -0.03 & 0.13 \\ 
$\beta_{21,t}$(period2) & -0.42 & 0.15 & -0.42 & 0.17 & -0.34 & 0.14 \\ 
$\beta_{22,t}$ & -0.18 & 0.10 & -0.22 & 0.11 & -0.23 & 0.1 \\ 
$\beta_{pa,t}$(period1) & -7.04 & 5.79 & -8.49 & 5.70 & -4.49 & 7.53 \\ 
$\beta_{pa,t}$(period2) & -9.48 & 5.77 & -8.45 & 6.18 & -11.98 & 9.95 \\ 
$\beta_{pa,t}$(period3) & 1.78 & 1.80 & 1.88 & 1.90 & 1.17 & 2.83 \\ 
$\beta_{temp,t}$& -0.01 & 0.01 & -0.01 & 0.01 & -0.01 & 0.01 \\ 
  \hline
\end{tabular}
\caption{Estimated state space model coefficients in post-imputation analysis, using data imputation strategies of SSMimpute (left), complete case analysis (middle), and multiple imputation (right).}
\label{tab:SSMimpute}
\end{table}

\section{Discussion}
Despite the increased complexity in data structure and statistical analysis, multivariate time series from intense monitoring or measurement will inevitably become more prevalent and even standard for future mental health and medical research due to the widespread use of mobile technology and wearable devices. 
We propose novel imputation strategies to solve the missing data problem induced by missing outcomes used as both response and explanatory variables in non-stationary multivariate time series that arise in observational n-of-1 studies.
Commonly used imputation methods, including multiple imputation and last-observation-carried-forward, introduce significant bias into coefficient estimation. 
Those methods do not produce ``good'' imputation candidates for missing values as they are designed for static rather than dynamic processes, and certain methods based on univariate time series are unable to account for influences from exposure and covariate time series.
Complete case analysis also introduces bias into coefficient estimation by improperly skipping over missing data.
Our proposed approaches combine multiple imputation and the state space model, preserve the temporal structure between variables to aid in the estimation of time-varying coefficients, and allow the incorporation of exposure and covariate time series as well as lagged values for improved missing data imputation.
Additionally, our proposed strategies incorporate missing data imputation into the outcome modeling, thereby avoiding model incompatibility that arises when different models are employed for imputation and for analysis, as is the case with multiple imputation.

Leveraging the flexibility of multiple imputation and state space model, our proposed method also applies to missing data imputation for multiple subjects and can be extended to incorporate trend and seasonality, under the assumption that subjects follow the same dynamic model over time so that multivariate time series from those subjects can be grouped for joint statistical inference.
Individualized inference is a special instance and was specifically chosen, given the substantial heterogeneity of individuals during the extended follow-up in the BLS study motivating our work.
SSMimpute applied to groups of subjects with possible random coefficients may be appealing in circumstances where participants with similar conditions are recruited at roughly the same time and their respective multivariate time series exhibit comparable patterns.
Before using imputation techniques to groups of individuals, we suggest individualized imputation and inference to investigate the similarity of multivariate time series and potential dynamic outcome models. 
 Grouped imputation and inference may be preferable for enhanced signal and generalizability of the conclusion to a larger (sub) population if the dynamic outcome models demonstrate comparable patterns over time; however, individualized imputation and inference in an n-of-1 setting are more appropriate if the time series and dynamic outcome models are highly heterogeneous.
 In short, our proposed method advances the methodology for missing data imputation in multivariate non-stationary time series, and lays the groundwork for inferring causal relationships for non-stationary time series from N-of-1 studies.

The BLS study is a pioneer work in long-term smartphone monitoring of wealthy information in the context of Severe Mental Illness (SMI).
We apply the SSMimpute strategy to investigate the association between phone-based social interaction and negative mood in a bipolar patient followed for nearly two years, adjusting for confounding by weather temperature, physical activity, and previous values of outcome and exposures of interest. We observe a moderate auto-correlation between the outcome and its previous values.
Additionally, our analysis reveals a negative correlation between the degree of outgoing calls and texts contacts and negative mood, and also found that these effects may vary over time.
Critical change points are identified for the effect of degree of outgoing texts, which coincide to a change in psychiatrist and significant medication changes. 
These findings provide preliminary evidence  of the role of social network structure in affecting patient's mood over time in a complex fashion, depending on patient's symptoms and living environment.

The current implementation of SSMimpute is limited to  linear relationships between outcomes, exposures, covariates, and their lagged values.  
In order to address the issue of possible model misspecification, we will consider more flexible analytical models other than linear relationships in state space modeling for future research. 
Additionally, the missing outcomes considered in our proposed methods are restricted to be normally distributed; extensions to binary and categorical outcomes are being investigated. 
Furthermore, residual confounding may exist in real-world data analysis as a result of unmeasured individual and contextual confounders. However, the lagged outcomes are likely the strongest confounders in time series analysis, and together with the time-varying intercept, they also capture and help control some unmeasured individual and contextual information from the environment. 
As missing data could arise in other predictors (either actively or passively collected), we will also consider a similar imputation strategy to impute missing data in exposures and confounders in future work.
Our proposed imputation methods are based on ignorable likelihood, assuming MCAR or MAR; future work can incorporate MNAR by explicitly specifying missing mechanism models in the full likelihood.
Finally, our analysis reveals significant heterogeneity across participants regarding the association between phone-based social interaction and negative mood. This phenomenon may be attributable to unmeasured confounding variables or latent disease status that modifies the association over time. Future research will investigate how a latent disease modifies the effect of social engagement on mood improvement, providing directions for conducting group inferences for participants with heterogeneous severe mental diseases.


\section*{Acknowledgements}
This work is supported by K01MH118477, U01MH116925.

\singlespacing
\bibliographystyle{unsrtnat}

\bibliography{SSMimpute.bib}

\appendix

\section{SSMmp and SSMimpute algorithms}
\subsection{Flowchart for the SSMmp algorithm}
Figure~\ref{fig:flowchart} shows the flowchart the of SSMmp algorithm.
\begin{figure}
\centering
\includegraphics[width=\linewidth]{./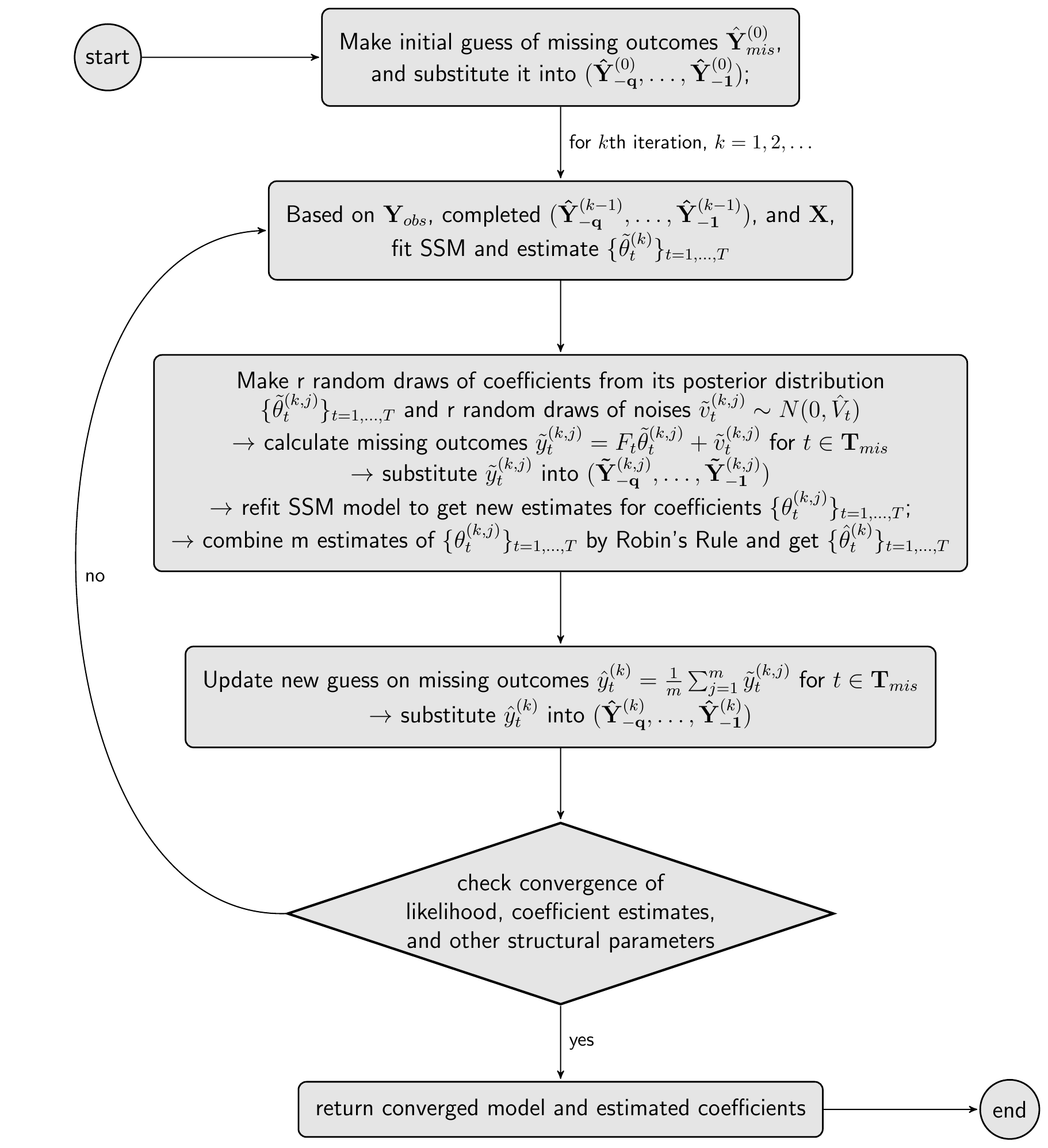}
\caption{The flowchart of the ``SSMmp'' algorithm for dealing with missing data in outcomes in non-stationary multivariate time series.}
\label{fig:flowchart}	
\end{figure}

\subsection{Steps of the SSMimpute algorithm}

``SSMimpute'' iterates between model fitting and substituting until getting closer to the maximization and then starts applying multiple imputation. 
We show the detailed algorithm for the ``SSMimpute'' as follows. \begin{Steps}
\item(Initialization) Make initial guess on missing outcomes $\mathbf{Y}_{\text{mis}}^{(0)}$ and other nuisance structural parameters of the state space model. Substitute guessed values $\mathbf{Y}_{\text{mis}}^{(0)}$ into their corresponding missing lagged values $(\mathbf{\hat{Y}_{t-q}}^{(0)},\ldots,\mathbf{\hat{Y}_{t-1}}^{(0)})$ to eliminate missing data in explanatory variables.
\item(Maximization) For the $k$th iteration, $k=1,2,\ldots$, apply state space model to incomplete outcomes $\mathbf{Y}_{\text{obs}}$ and completed explanatory variables $(\mathbf{Y_{t-q}}^{(k-1)},\ldots,\mathbf{Y_{t-1}}^{(k-1)},\X)$ to obtain the maximum likelihood estimate (MLE) of coefficients $\{\hat{\theta}^{(k)}_t\}_{t=1,\ldots,T}$ and other nuisance parameters of the state space model.
\item(Substitution) Calculate imputations of missing outcomes $\hat{y}_t^{(k)}=F_t \hat{\theta}^{(k)}_t$ from the estimated posterior distribution of $\{\hat{\theta}^{(k)}_t\}_{t=1,\ldots,T}$, substitute those guessed values into their corresponding lagged variables $(\mathbf{\tilde{Y}_{t-q}}^{(k,j)},\ldots,\mathbf{\tilde{Y}_{t-1}}^{(k,j)})$.
\item (Check convergence) Repeat steps 2 to 3 until convergence is achieved for likelihood, coefficient estimation, and other structural parameters of the state space model.
\item (Multiple imputation) After reaching convergence at iteration $K$, make $r$ random draws of estimated coefficients $\{\hat{\theta}^{(K,j)}\}_{t=1,\ldots,T}$ from its posterior distribution of $\{\hat{\theta}^{(K)}_t\}_{t=1,\ldots,T}$ and $r$ random draws of noises from $\tilde{v}_t \sim N(0,\hat{V}_t)$. For $j=1,\ldots,r$, calculate missing outcomes $\tilde{y}_t^{(K,j)}=F_t\hat{\theta}^{(k,j)}_t+\tilde{v}_t $ for $t \in \mathbf{T}_{\text{mis}}$, substitute those guessed values into their corresponding lagged variables $(\mathbf{\tilde{Y}_{t-q}}^{(K,j)},\ldots,\mathbf{\tilde{Y}_{t-1}}^{(K,j)})$, refit the state space model, and finally obtain new estimations for coefficients $\{\dbtilde{\theta}^{(K,j)}_t\}_{t=1,\ldots,T}$. Apply Rubin's rule to combine the $r$ newly estimated $\{\dbtilde{\theta}^{(K,j)}_t\}_{t=1,\ldots,T}$, $j=1,\ldots,m$ to be the ultimate estimation of coefficients, $\{\hat{\theta}_t\}_{t=1,\ldots,T}$.
\end{Steps}

\section{Proofs}

\begin{proof}[Proof of Proposition \ref{prop:increasedefficiency}]
Denote the estimate from complete case analysis by discarding observation at time $t$ as $\hat{\theta}_{t^*} \sim N(\hat{m}_{t^*},\hat{W}_{t^*})$. Denote the estimate from ``SSMmp'' at time $t$ and $t+1$ by imputing the missing lagged outcomes as $\hat{\theta}_t \sim N(\hat{m}_t, \hat{W}_t)$ and $\hat{\theta}_{t+1} \sim N(\hat{m}_{t+1}, \hat{W}_{t+1})$. The true value $\theta_t=\dot{\theta}$ for $t=1,2,\ldots$ is time-invariate. Assume Kalman filter or smoothing provides unbiased estimate $\hat{\theta}_{t-1} \sim N(\hat{m}_{t-1}, \hat{W}_{t-1})$ for the true value of $\theta_{t-1}$ so that $\E[\hat{m}_{t-1}]=\dot{\theta}$. 

According to the Kalman filter, we have,
\begin{equation*}
\begin{split}
	\hat{m}_{t^*} &= G_{t^*}\hat{m}_{t-1}+R_{t^*} F'_{t^*} Q_{t^*}^{-1} (Y_{t^*}-F_{t^*}G_{t^*}\hat{m}_{t-1}) \\
	& = G_{t+1}\hat{m}_{t-1}+R_{t+1} F'_{t+1} Q_{t+1}^{-1} (Y_{t+1}-F_{t+1}G_{t+1}\hat{m}_{t-1})  \\
\intertext{plug in $Y_{t+1}=F_{t+1}\theta_{t+1}+v_{t+1}$}
	& = G_{t+1}\hat{m}_{t-1}+R_{t+1} F'_{t+1} Q_{t+1}^{-1} (F_{t+1}\theta_{t+1}+v_{t+1}-F_{t+1}G_{t+1}\hat{m}_{t-1})  \\
\intertext{plug in $\theta_t=G_t\theta_{t-1}+w_{t}$}
	& = G_{t+1}\hat{m}_{t-1}+R_{t+1} F'_{t+1} Q_{t+1}^{-1} (F_{t+1} G_{t+1} \theta_{t} + F_{t+1} W_{t+1} +v_{t+1}-F_{t+1}G_{t+1}\hat{m}_{t-1})  \\
	& = G_{t+1}\hat{m}_{t-1}+R_{t+1} F'_{t+1} Q_{t+1}^{-1} (F_{t+1} G_{t+1} G_t \theta_{t-1} + F_{t+1} G_{t+1} W_t + F_{t+1} W_{t+1} +v_{t+1}-F_{t+1}G_{t+1}\hat{m}_{t-1})  \\
\intertext{plug in $G_t=I$ and $W_t=0$ for time-invariate paramters}
	& = \hat{m}_{t-1}+R_{t+1} F'_{t+1} Q_{t+1}^{-1} (F_{t+1} \theta_{t-1} +v_{t+1}-F_{t+1}\hat{m}_{t-1})  \\
\end{split}
\end{equation*}
\begin{equation*}
\begin{split}
\E[\hat{m}_{t^*}] & = \E[\hat{m}_{t-1}] + R_{t+1} F'_{t+1} Q_{t+1}^{-1} F_{t+1}(\theta_{t-1} -\E[\hat{m}_{t-1}]) \\
\intertext{by $\E[\hat{m}_{t-1}]=\theta_{t-1}$} 
& = \E[\hat{m}_{t-1}]=\theta_{t-1}= \dot{\theta}\\
\end{split}
\end{equation*}
Therefore, the complete case analysis provides unbiased estimate for unknown coefficients for time-invariate system.

Alternatively, if we impute missing values of lagged outcomes in $F_t$ as $\tilde{F}_t$ with the proposed imputation method, we can update the parameter estimation at both time $t$ and $t+1$. 

For time $t$, we have:
\begin{equation}
\begin{split}
	\hat{m}_t &= G_t \hat{m}_{t-1}+R_t \tilde{F}'_t Q_t^{-1} (Y_t-\tilde{F}_{t} G_t \hat{m}_{t-1})\\
\intertext{plug in $Y_{t}=F_{t}\theta_{t}+v_{t}$}
	&= G_t \hat{m}_{t-1}+ R_t \tilde{F}'_t Q_t^{-1} (F_{t}\theta_t +v_t -\tilde{F}_{t} G_t \hat{m}_{t-1})\\
\intertext{plug in $\theta_t=G_t\theta_{t-1}+w_{t}$}
		&= G_t \hat{m}_{t-1}+ R_t \tilde{F}'_t Q_t^{-1} (F_{t}G_t\theta_{t-1}+F_{t}w_t +v_t -\tilde{F}_{t} G_t \hat{m}_{t-1})\\
\intertext{plug in $G_t=I$ and $W_t=0$ for time-invariate paramters}
		& = \hat{m}_{t-1}+ R_t \tilde{F}'_t Q_t^{-1} (F_{t}\theta_{t-1}+v_t -\tilde{F}_{t}\hat{m}_{t-1})\\
\end{split}
\end{equation}
\begin{equation}
\begin{split}
	\E[\hat{m}_t] & = \E[\hat{m}_{t-1}] +R_t \E[\tilde{F}'_t] Q_t^{-1} ( F_{t}\theta_{t-1}  - \E[\tilde{F}'_t]\E[\hat{m}_{t-1}]) \\
	 & = \E[\hat{m}_{t-1}] +R_t \E[\tilde{F}'_t] Q_t^{-1} ( F_{t}\theta_{t-1}  - F_{t}\E[\hat{m}_{t-1}] + F_{t}\E[\hat{m}_{t-1}] -\E[\tilde{F}'_t]\E[\hat{m}_{t-1}]) \\
	 &= \E[\hat{m}_{t-1}] +R_t \E[\tilde{F}'_t] Q_t^{-1} F_{t} (\theta_{t-1}-\E[\hat{m}_{t-1}) + R_t \E[\tilde{F}'_t] Q_t^{-1} ( F_{t}-\E[\tilde{F}'_t])\E[\hat{m}_{t-1}]\\
\intertext{by $\E[\hat{m}_{t-1}]=\theta_{t-1}$} 
	& = \E[\hat{m}_{t-1}] +R_t \E[\tilde{F}'_t]  Q_t^{-1}( F_{t}-\E[\tilde{F}'_t])\E[\hat{m}_{t-1}] \\
\intertext{since $\tilde{F}_{t}$ and $F_t$ only differ on $\hat{Y}_{t-1},\ldots,\hat{Y}_{t-q}$, whose expectation equals the true value of $Y_{t-1},\ldots,Y_{t-q}$, therefore $\E[\tilde{F}'_t]=F_{t}$}
	& = \E[\hat{m}_{t-1}] = \theta_{t-1} = \dot{\theta} \\
\end{split}
\end{equation}
Similarly, for time $t+1$, we have:
\begin{equation}
\begin{split}
	\hat{m}_{t+1} & = G_{t+1} \hat{m}_{t}+R_{t+1} F'_{t+1} Q_{t+1}^{-1} (Y_{t+1}-F_{t+1} G_{t+1}\hat{m}_{t})\\
\intertext{plug in $Y_{t+1}=F_{t+1}\theta_{t+1}+v_{t+1}$}
	& = G_{t+1} \hat{m}_{t}+R_{t+1} F'_{t+1} Q_{t+1}^{-1} (F_{t+1} \theta_{t+1}+v_{t+1}-F_{t+1} G_{t+1}m_{t})\\
\intertext{plug in $G_{t}=I$ and $W_t=0$ for time-invariate parameters}
	& = \hat{m}_{t}+R_{t+1} F'_{t+1} Q_{t+1}^{-1} (F_{t+1}\theta_t +v_{t+1}-F_{t+1} \hat{m}_{t})\\
	\E[\hat{m}_{t+1} ] & = \E[\hat{m}_{t} ] + R_{t+1} F'_{t+1} Q_{t+1}^{-1} F_{t+1} (\theta_t -\E[\hat{m}_{t}]) \\	
	& = \E[\hat{m}_{t} ] = \dot{\theta} \\
\end{split}
\end{equation}
Therefore, imputation using state space model provides unbiased estimate for unknown coefficients for the time-invariate system if outcome is missing at $t$. Similar logic applies to missing outcomes at other time points. Therefore, for multiple unbiased estimation from the same posterior distribution, $\hat{m}_{t+1}^{(k)}$ for $k=1,\ldots,K$, SSMmp estimate, $\frac{1}{K}\sum_{k=1}^k \hat{m}_{t+1}^{(k)}$, provides unbiased estimation for unknown time-invariant coefficients.

For the standard error in estimation, we compare $\hat{W}_{t^*}$ from complete case analysis and the $\hat{W}_{t+1}$ after state space model imputation. 
\begin{equation}
\begin{split}
\tilde{C}_{t^*}&=R_{t^*} - R_{t^*} F_{t^*} Q_{t^*}^{-1} F_{t^*} R_{t^*} \\
& = R_{t+1} - R_{t+1} F_{t+1} Q_{t+1}^{-1} F_{t+1} R_{t+1}\\
\intertext{plug in $R_t=G_tC_{t-1}G_t+W_t$ and $G_t=I$ and $W_t=0$ for time-invariate system}
	& = C_{t-1} - C_{t-1}  F_{t+1} Q_{t+1}^{-1} F_{t+1} C_{t-1} \\
\intertext{plut in $Q_t=F_tR_tF'_t+V_t$ and $R_t=C_{t-1}$ for time-invariate system}
	& = C_{t-1} - C_{t-1}  F_{t+1} (F_{t+1}C_{t-1} F'_{t+1}+V_{t+1})^{-1} F_{t+1} C_{t-1} \\
	& = (F'_{t+1}V_{t+1}F_{t+1} + C^{-1}_{t-1})^{-1}
\end{split}
\end{equation}
\begin{equation}
\begin{split}
	\hat{W}_{t+1} & =R_{t+1} - R_{t+1} F_{t+1} Q_{t+1}^{-1} F_{t+1} R_{t+1} 
	\intertext{plug in $R_t=G_tC_{t-1}G_t+W_t$ and $G_t=I$ and $W_t=0$ for time-invariate system}
	& = C_{t} - C_{t}  F_{t+1} Q_{t+1}^{-1} F_{t+1} C_{t} \\
	\intertext{plug in $Q_t=F_tR_tF'_t+V_t$ and $R_t=C_{t-1}$ for time-invariate system}
	& = C_{t} - C_{t}  F_{t+1} (F_{t+1}C_t F'_{t+1}+V_{t+1})^{-1} F_{t+1} C_{t} \\
	& = (F'_{t+1}V_{t+1}F_{t+1} + C^{-1}_{t})^{-1} \\
\end{split}
\end{equation}

By Loewner theory, for two real symmetric and positive definite square matrix $A$ and $B$, if $A \succeq B$ then $A^{-1} \preceq B^{-1}$. 
Therefore, since $C_t \preceq C_{t-1}$, thus $C^{-1}_t \succeq C^{-1}_{t-1}$. By adding a same symmetric and positive definite square matrix $F'_{t+1}V_{t+1}F_{t+1}$, we have $F'_{t+1}V_{t+1}F_{t+1} + C^{-1}_{t} \succeq F'_{t+1}V_{t+1}F_{t+1} + C^{-1}_{t-1}$. Applying the Leowver theory again, we have
\begin{equation}
\hat{W}_{t+1}= (F'_{t+1}V_{t+1}F_{t+1} + C^{-1}_{t})^{-1} \preceq  (F'_{t+1}V_{t+1}F_{t+1} + C^{-1}_{t-1})^{-1} = \tilde{C}_{t^*}
\end{equation}
In short, estimation standard error for time $t+1$ is reduced by including the observation at time $t$. SSMmp returns the estimates at the last time point $T$ as the final estimation for the unknown time-invariant parameter. As shown by \citet{meng1994multiple} and \citet{white2010bias}, for static analytical models, the increased variance from multiple imputation by Robin's rule is approximately $1+ \text{missing rate}/r$, which is ignorable when $r \to \infty$. We prove the improved efficiency after multiple imputation in SSMmp.
\end{proof}

\begin{proof}[Proof of Proposition \ref{prop:reducedbias}]
For time-varying system, denote the estimate of $\theta_{t+1}$ without missing data in $F_t$ as $\mathring{\theta}_{t+1} \sim N(\mathring{m}_{t+1}, \mathring{C}_{t+1})$. 

\begin{equation}
\begin{split}
	\mathring{m}_{t+1} & = G_{t+1} \mathring{m}_{t}+R_{t+1} F'_{t+1} Q_{t+1}^{-1} (Y_{t+1}-F_{t+1} G_{t+1}\mathring{m}_{t})\\
\intertext{plug in $Y_{t+1}=F_{t+1}\theta_{t+1}+v_{t+1}$}
	& = G_{t+1} \mathring{m}_{t}+R_{t+1} F'_{t+1} Q_{t+1}^{-1} (F_{t+1} \theta_{t+1}+v_{t+1}-F_{t+1} G_{t+1}\mathring{m}_{t})\\
\label{eq:dynamictheta1}
\end{split}
\end{equation}

Similarly,
\begin{equation}
\begin{split}
\mathring{m}_{t} & = G_{t} \mathring{m}_{t-1}+R_{t} F'_{t} Q_{t}^{-1} (F_{t} \theta_{t+1}+v_{t+1}-F_{t+1} G_{t+1}\mathring{m}_{t-1}) \\
\label{eq:dynamictheta2}
\end{split}
\end{equation}
It has been proven by massive literature that $\mathring{m}_{t+1}$ and $\mathring{m}_{t}$ is exponentially convergent with tracking error bounded under certain stochastic conditions.

We consider the ``SSMmp'' method. the Kalmen filter estimation follow the same procedure as in \eqref{eq:dynamictheta1}--\eqref{eq:dynamictheta2} excet that $F_t$ is replace with $\tilde{F}_t$ with missing lagged outcomes replaced by their imputed values. Since $\E[\tilde{F}_t]=F_t$, then $\E[\hat{m_t}]=\mathring{m_t}$ and $\E[\hat{m_{t+1}}]=\mathring{m_{t+1}}$, which are exponentially convergent in expectation with bounded tracking error under certain stochastic conditions.

Alternatively, if we consider the complete case analysis when we skip time $t$, then
\begin{equation}
\begin{split}
	\hat{m}_{t^*} & = G_{t+1} \hat{m}_{t-1}+R_{t+1} F'_{t+1} Q_{t+1}^{-1} (F_{t+1} \theta_{t+1}+v_{t+1}-F_{t+1} G_{t+1}\hat{m}_{t-1})\\
	\label{eq:dynamictheta3}
\end{split}
\end{equation}

Compare \eqref{eq:dynamictheta2} and \eqref{eq:dynamictheta3}, we see systematic difference may exist on $F_t$ and $F_{t+1}$, since they represent exogenous information of exposures and other covariates, which may further induce bias in the estimation of $\hat{m}_{t^*}$. In short, there is no guarantee on unbiasedness of $\hat{m}_{t^*}$ when $F_{t+1}$ is mis-allied with $\hat{m}_{t-1}$.
\end{proof}

\section{Additional simulation results}
\subsection{Stationary scenario}
Figures~\ref{fig:simulation1.2}--\ref{fig:simulation1.4} illustrates the estimation results, standard errors, and
coverages for the estimated time-invariant coefficients of $Y_{t_1}$, $A_t$, and $C_t$ in the stationary scenario, respectively, over 500 simulations under MCAR, MAR, and MNAR and under missing rate of $50\%$.  

\begin{figure}
    \centering 
    \includegraphics[width=\linewidth]{./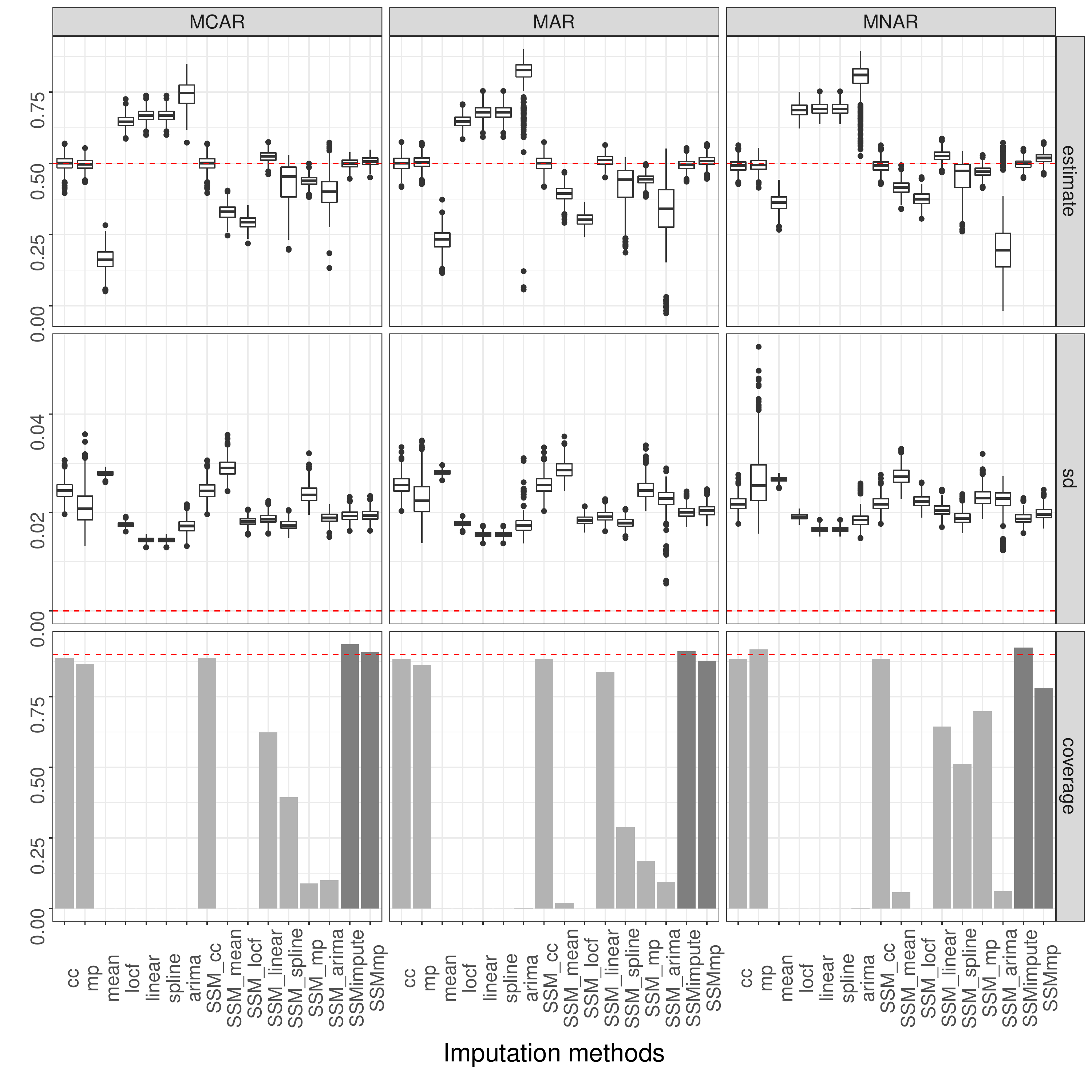}
\caption{Boxplots of the estimate (top), standard error (middle), and $90\%$ CI coverage (bottom) of the auto-correlation term of $Y_{t-1}$ for the stationary scenario, over 500 simulations under MCAR (left), MAR (middle) and MNAR (right) and missing rate of $50\%$.
Methods to be compared include complete case analysis (``cc''), multiple imputation (``mp''), mean imputation (``mean''), linear interpolation (``linear''), spline interpolation (``spline''), imputation with best ARIMA model (``arima'') under linear analytical models, and their corresponding versions under state space analytical models (with added ``SSM\_'' at the front), as well as the proposed state space model multiple imputation strategies -- SSMimpute and SSMmp.}
\label{fig:simulation1.2}
\end{figure} 

\begin{figure}
    \centering 
    \includegraphics[width=\linewidth]{./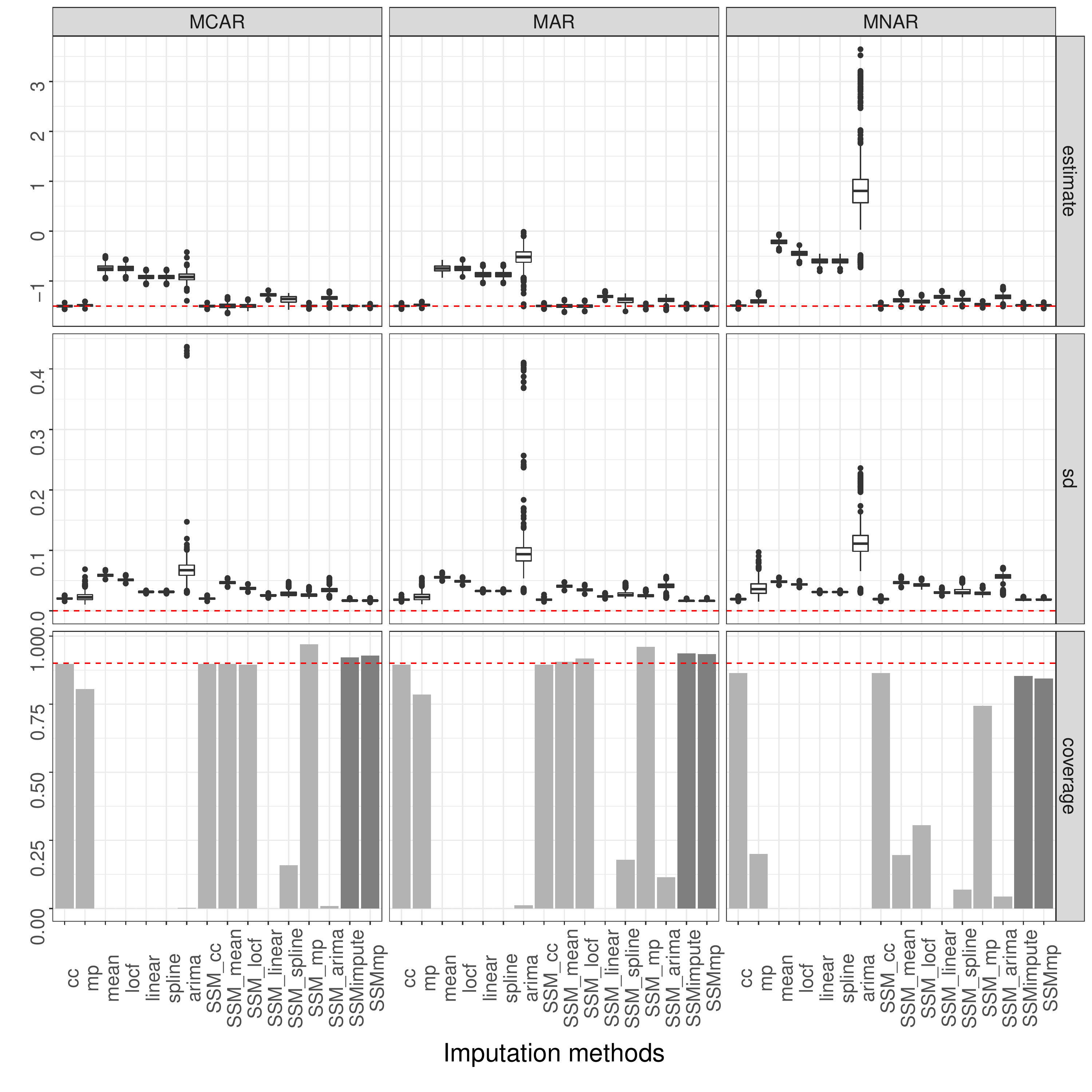}
\caption{Boxplots of the estimate (top), standard error (middle), and $90\%$ CI coverage (bottom) of the coefficient of $A_{t}$ for the stationary scenario, over 500 simulations under MCAR (left), MAR (middle) and MNAR (right) and missing rate of $50\%$. 
Methods to be compared include complete case analysis (``cc''), multiple imputation (``mp''), mean imputation (``mean''), linear interpolation (``linear''), spline interpolation (``spline''), imputation with best ARIMA model (``arima'') under linear analytical models, and their corresponding versions under state space analytical models (with added ``SSM\_'' at the front), as well as the proposed state space model multiple imputation strategies -- SSMimpute and SSMmp.}
\label{fig:simulation1.3}
\end{figure}

\begin{figure}
    \centering 
    \includegraphics[width=\linewidth]{./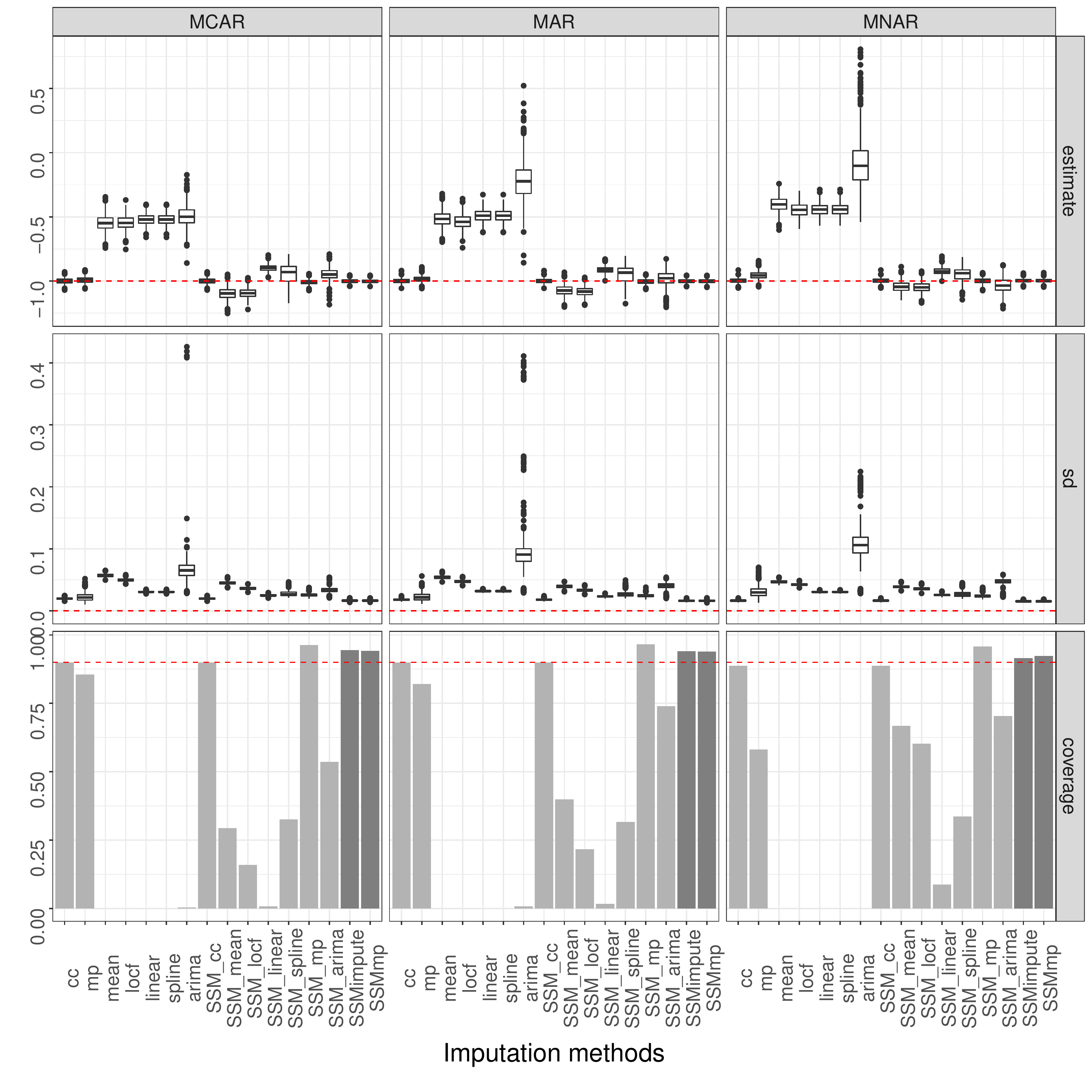}
\caption{Boxplots of the estimate (top), standard error (middle), and $90\%$ CI coverage (bottom) of the coefficient of $C_{t}$ for the stationary scenario, over 500 simulations under MCAR (left), MAR (middle) and MNAR (right) and missing rate of $50\%$.
Methods to be compared include complete case analysis (``cc''), multiple imputation (``mp''), mean imputation (``mean''), linear interpolation (``linear''), spline interpolation (``spline''), imputation with best ARIMA model (``arima'') under linear analytical models, and their corresponding versions under state space analytical models (with added ``SSM\_'' at the front), as well as the proposed state space model multiple imputation strategies -- SSMimpute and SSMmp.}
\label{fig:simulation1.4}
\end{figure} 

\subsection{Non-stationary scenario}

\begin{figure}[ht]
    \centering 
\begin{subfigure}{0.32\textwidth}
\includegraphics[width=\linewidth]{./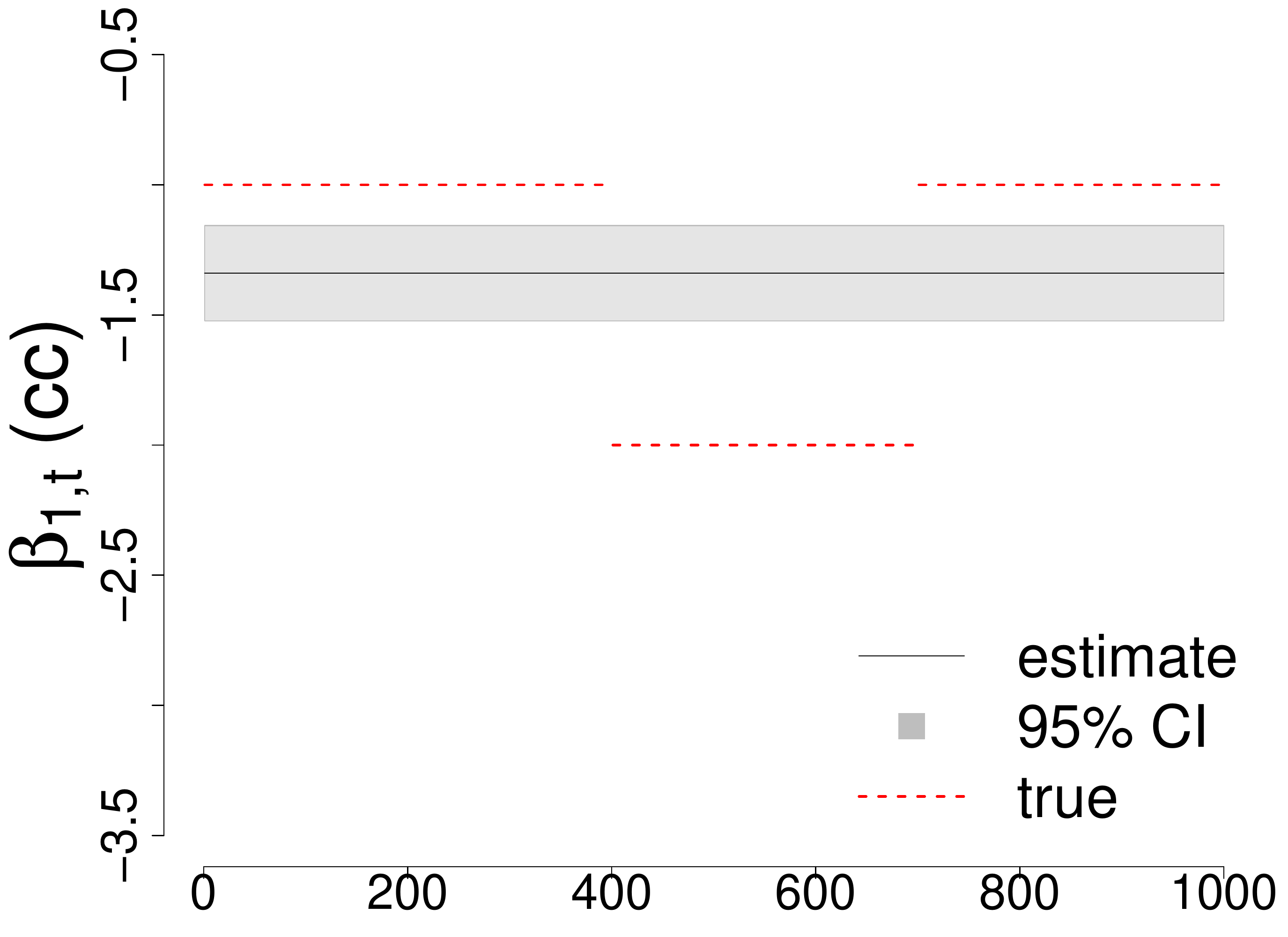}
\end{subfigure} 
\begin{subfigure}{0.32\textwidth}
  \includegraphics[width=\linewidth]{./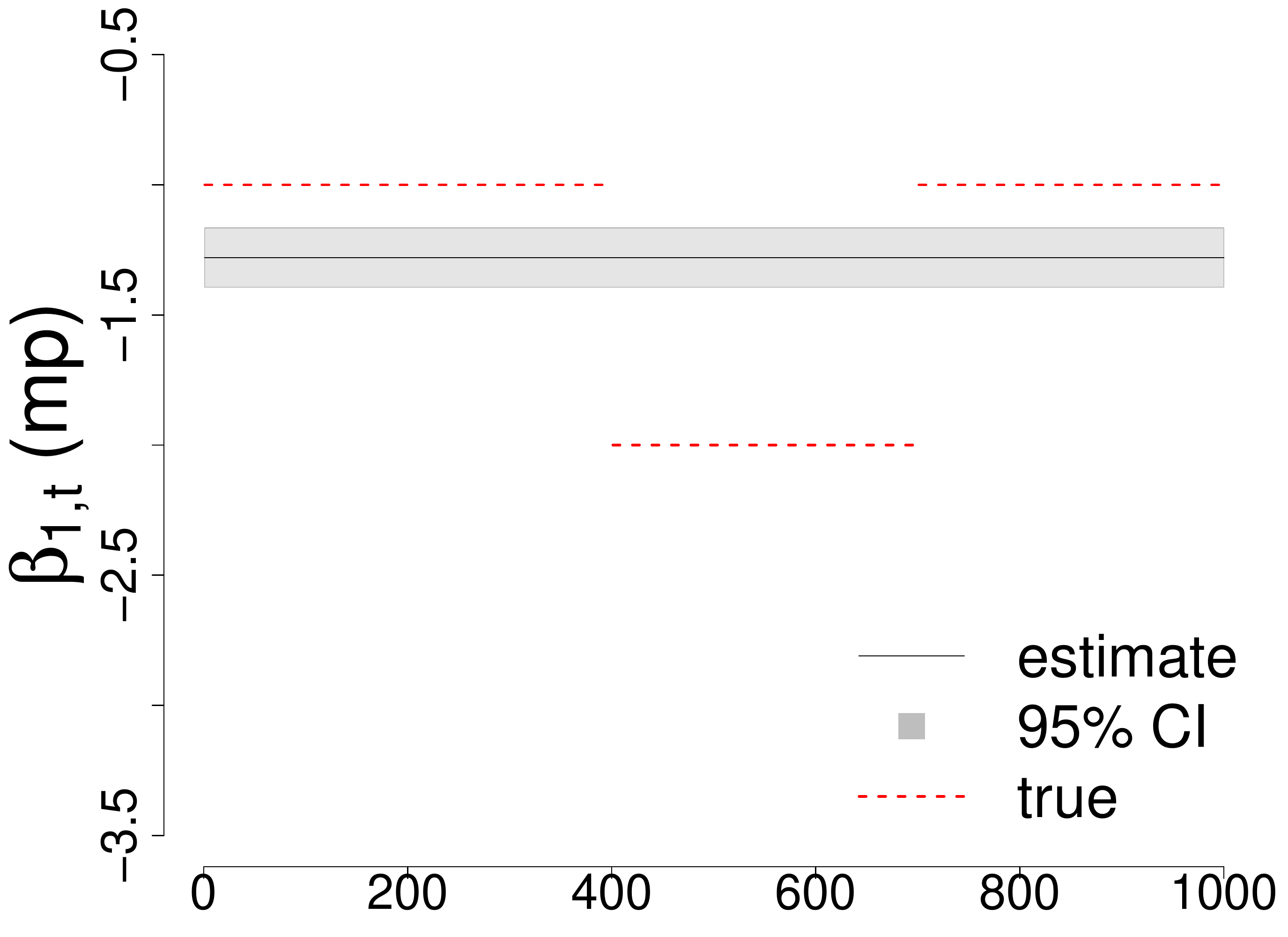}
\end{subfigure} 
\begin{subfigure}{0.32\textwidth}
  \includegraphics[width=\linewidth]{./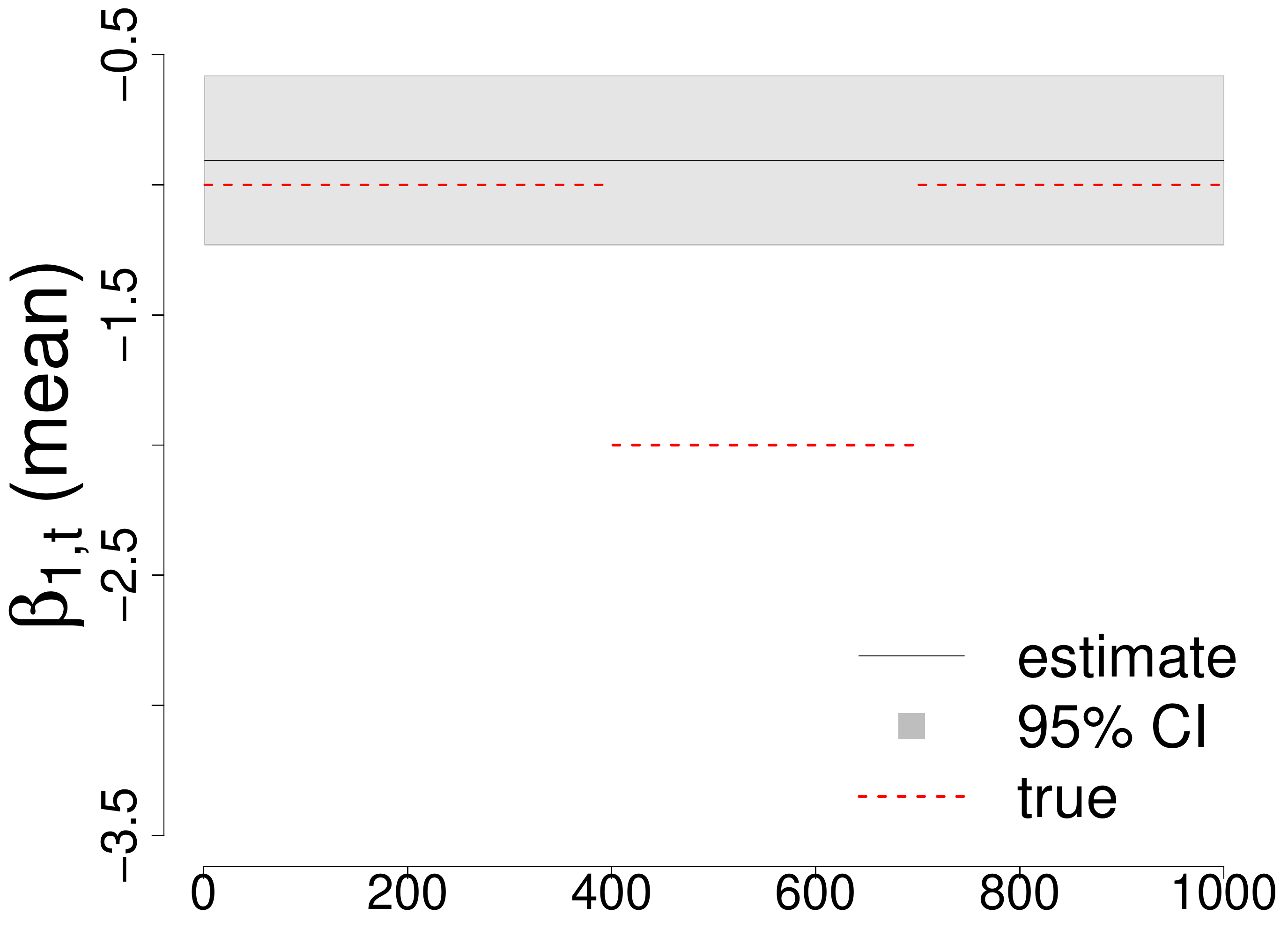}
\end{subfigure} \hfil 
\begin{subfigure}{0.32\textwidth}
  \includegraphics[width=\linewidth]{./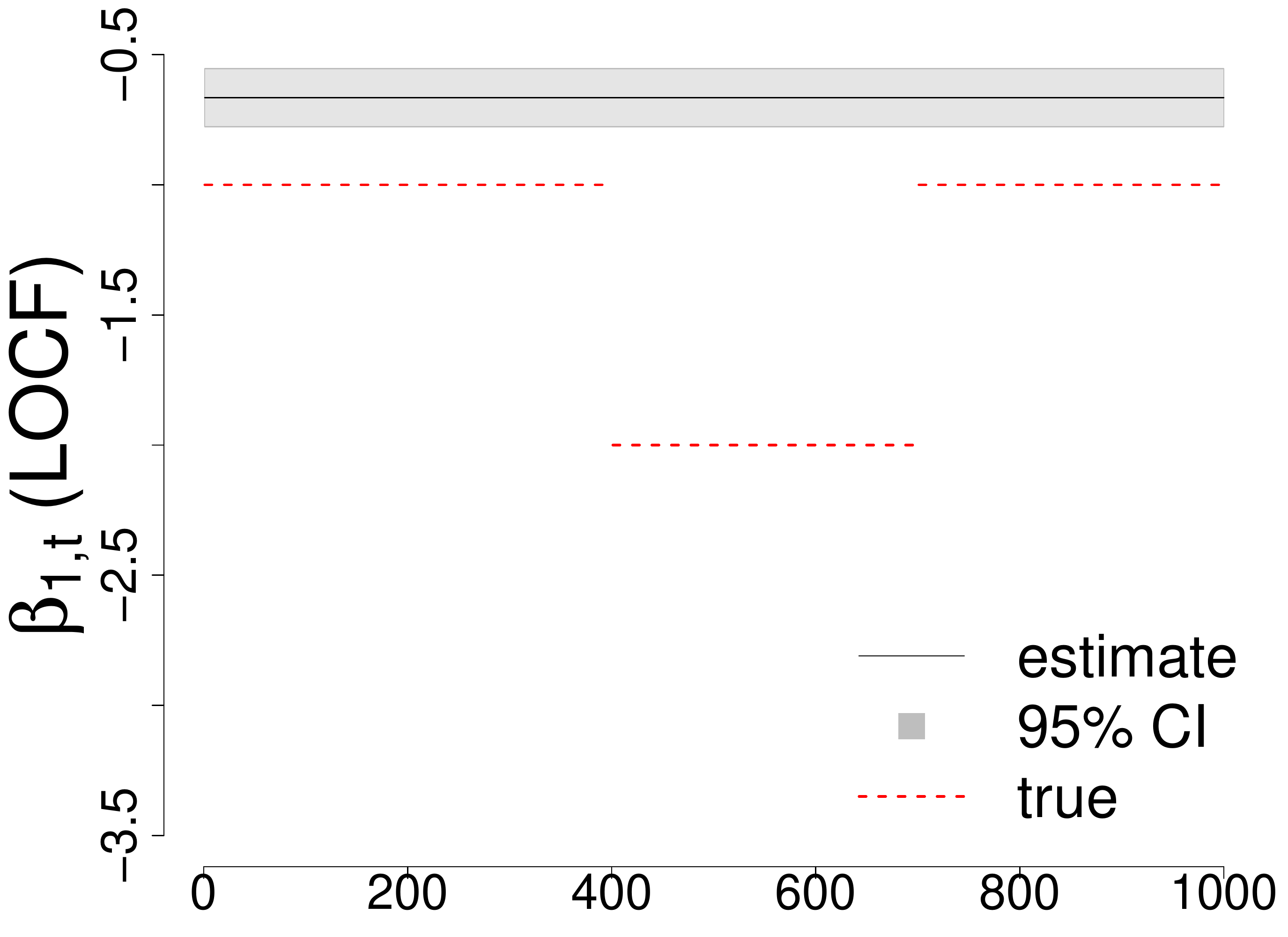}
\end{subfigure} 
\begin{subfigure}{0.32\textwidth}
  \includegraphics[width=\linewidth]{./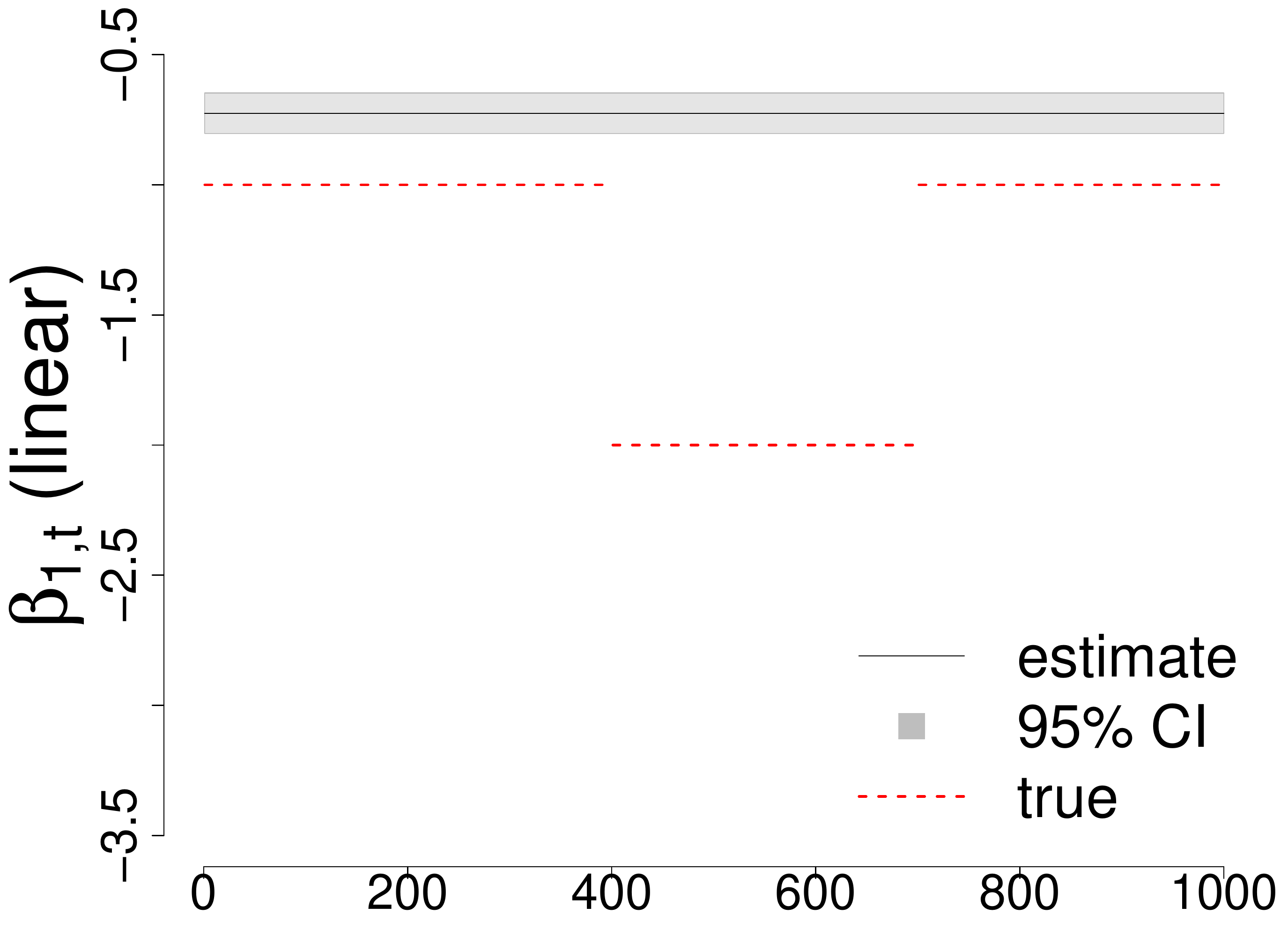}
\end{subfigure} 
\begin{subfigure}{0.32\textwidth}
  \includegraphics[width=\linewidth]{./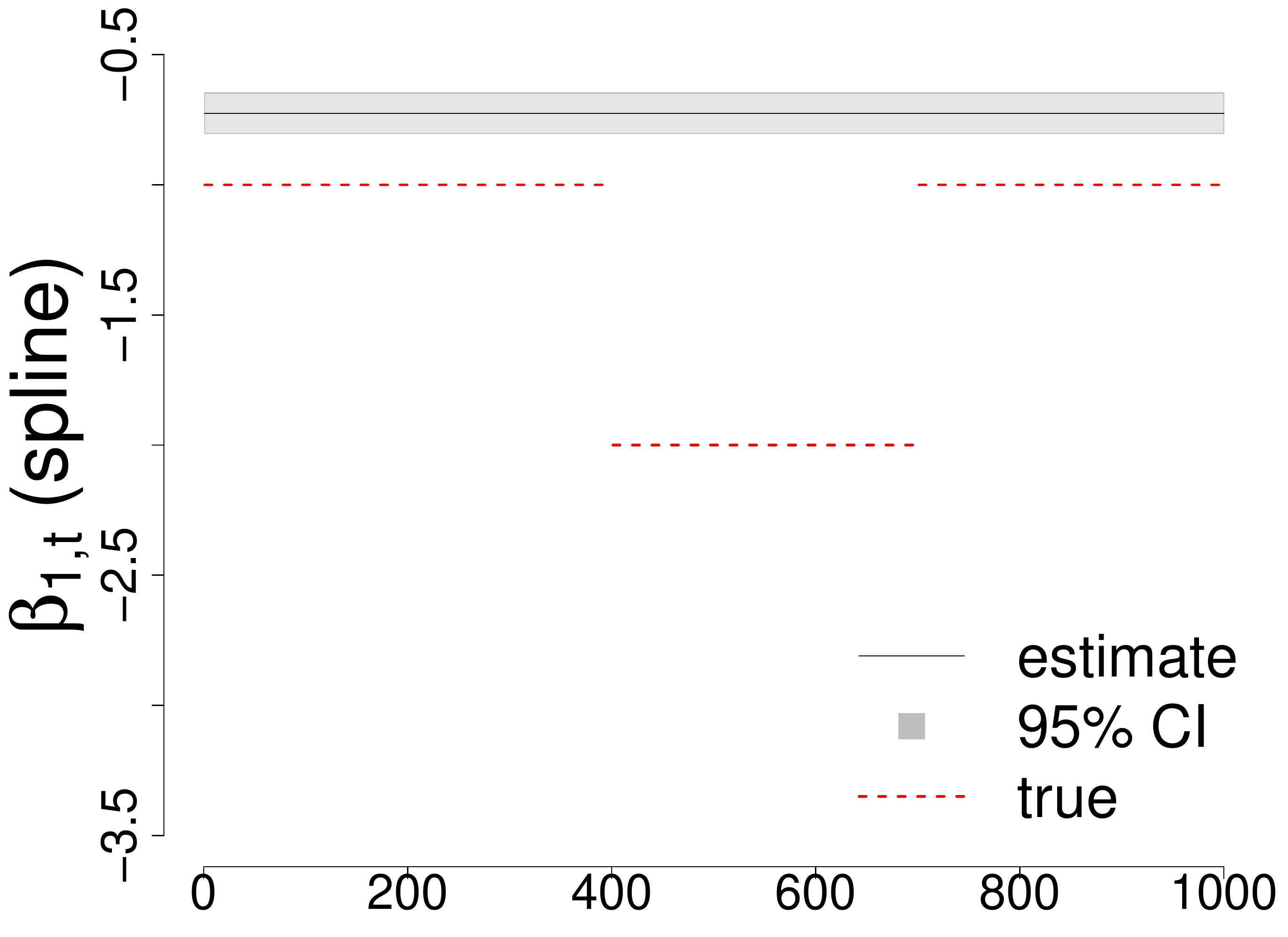}
\end{subfigure}\hfil 
\begin{subfigure}{0.32\textwidth}
  \includegraphics[width=\linewidth]{./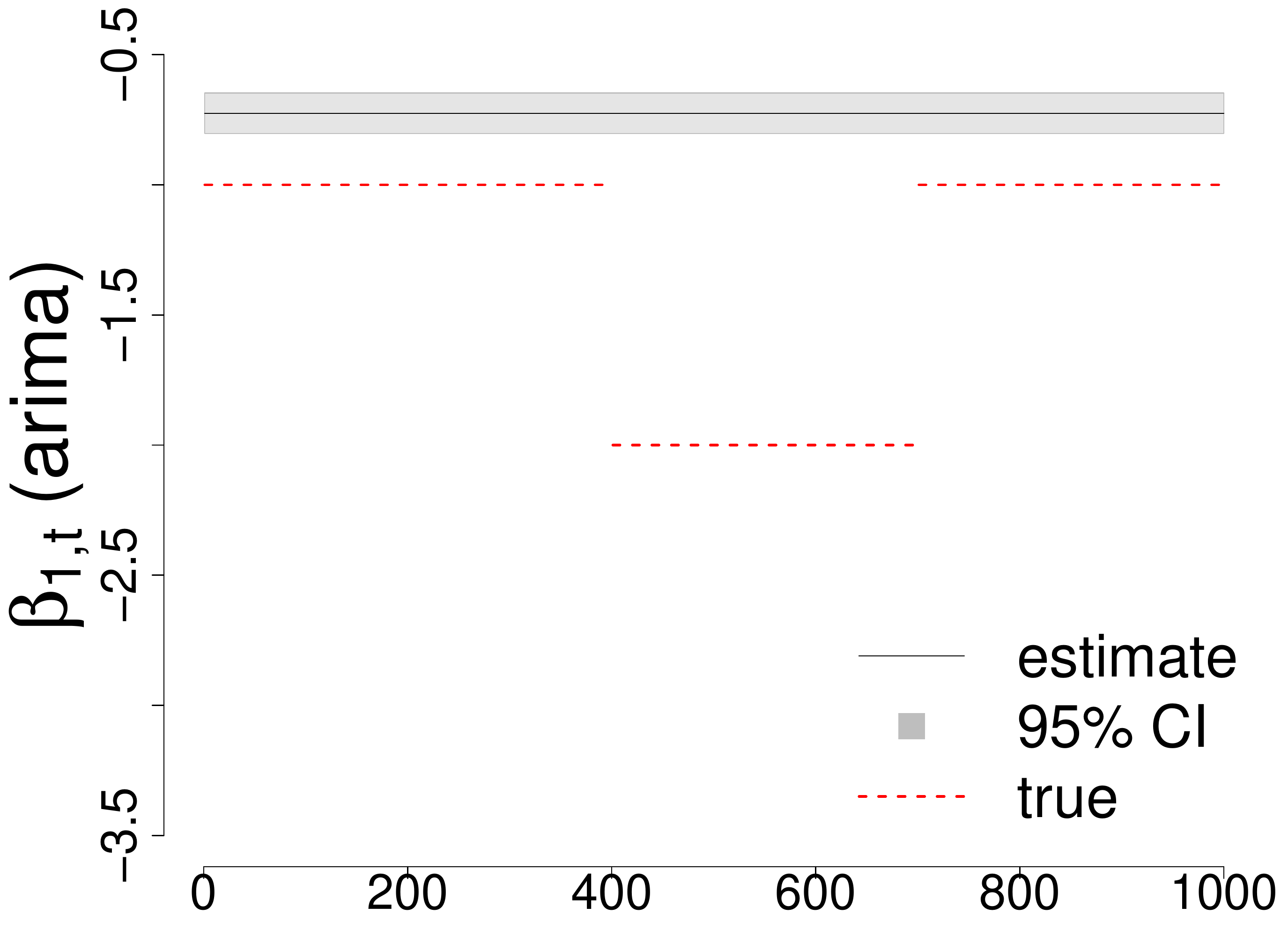}
\end{subfigure} 
\begin{subfigure}{0.32\textwidth}
  \includegraphics[width=\linewidth]{./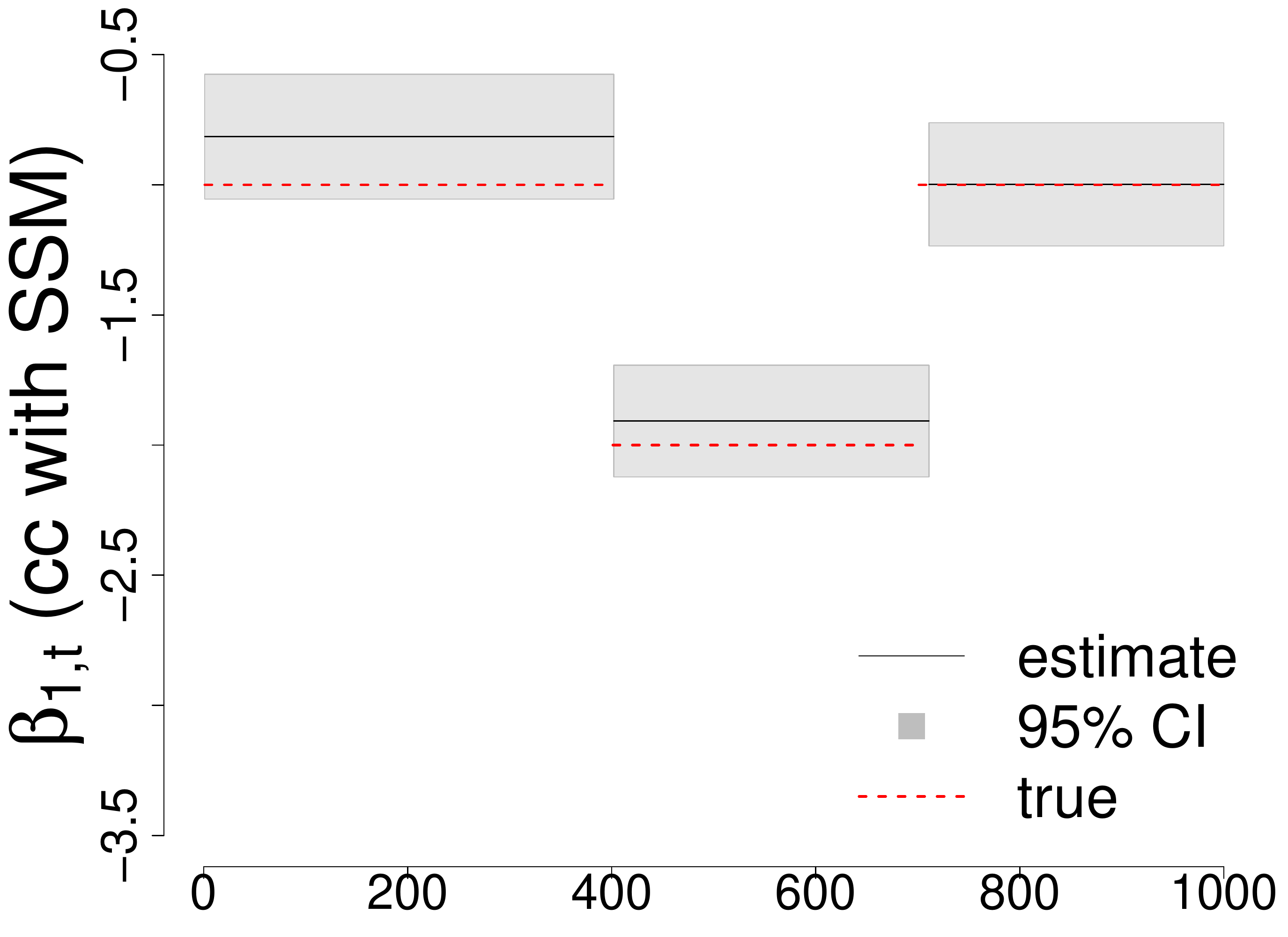}
\end{subfigure}  
\begin{subfigure}{0.32\textwidth}
  \includegraphics[width=\linewidth]{./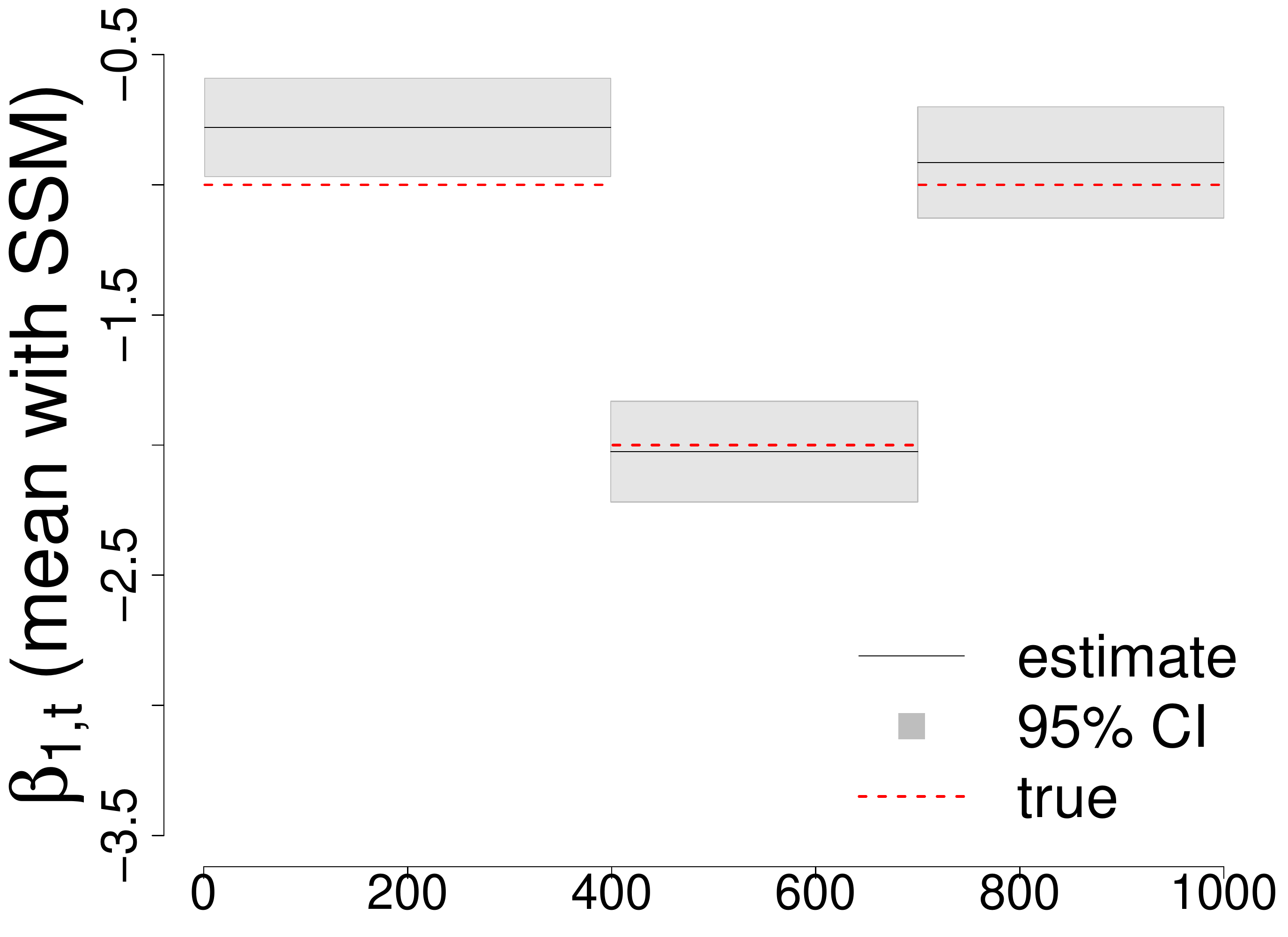}
\end{subfigure} \hfil 
\begin{subfigure}{0.32\textwidth}
  \includegraphics[width=\linewidth]{./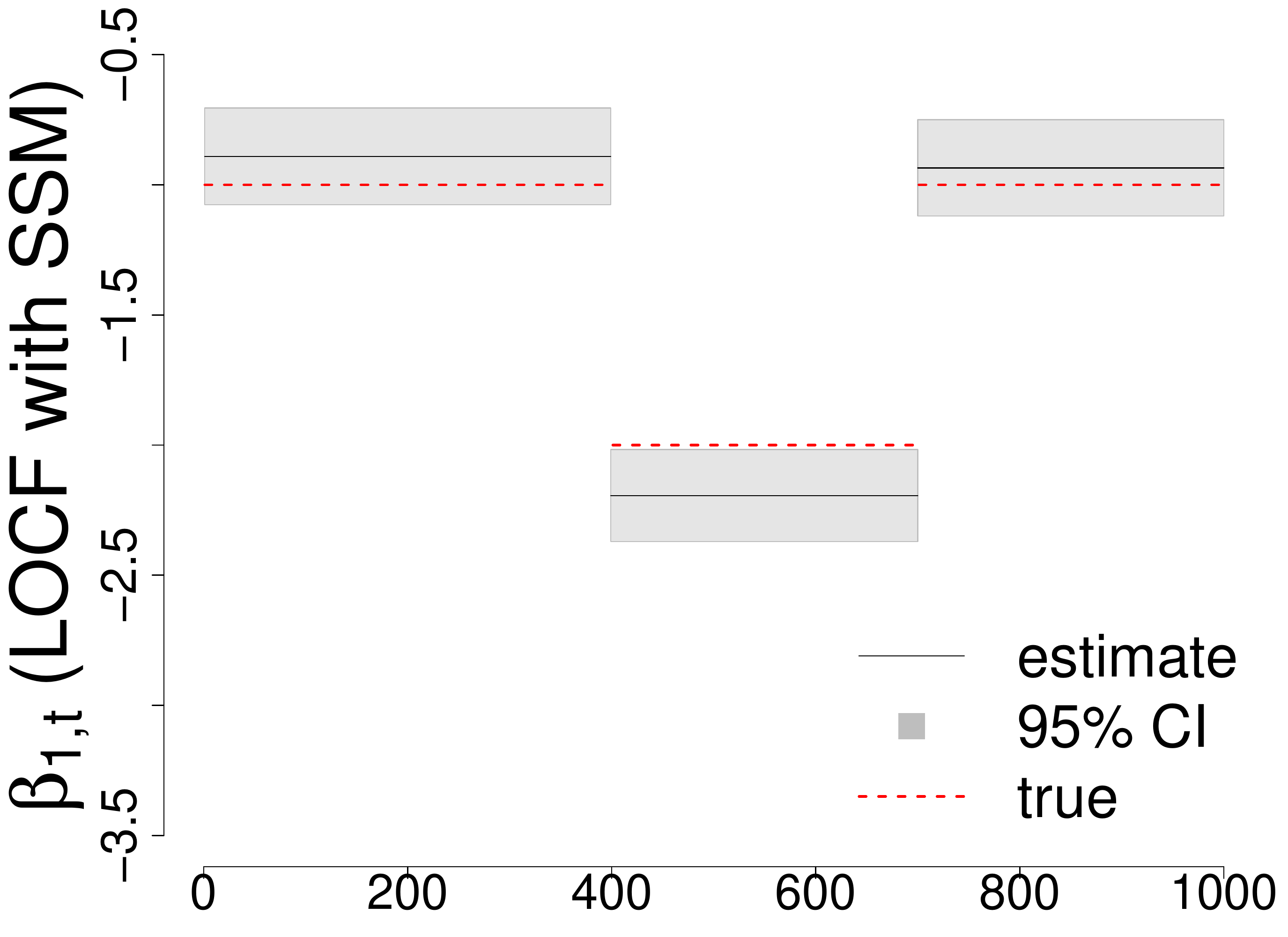}
\end{subfigure} 
\begin{subfigure}{0.32\textwidth}
  \includegraphics[width=\linewidth]{./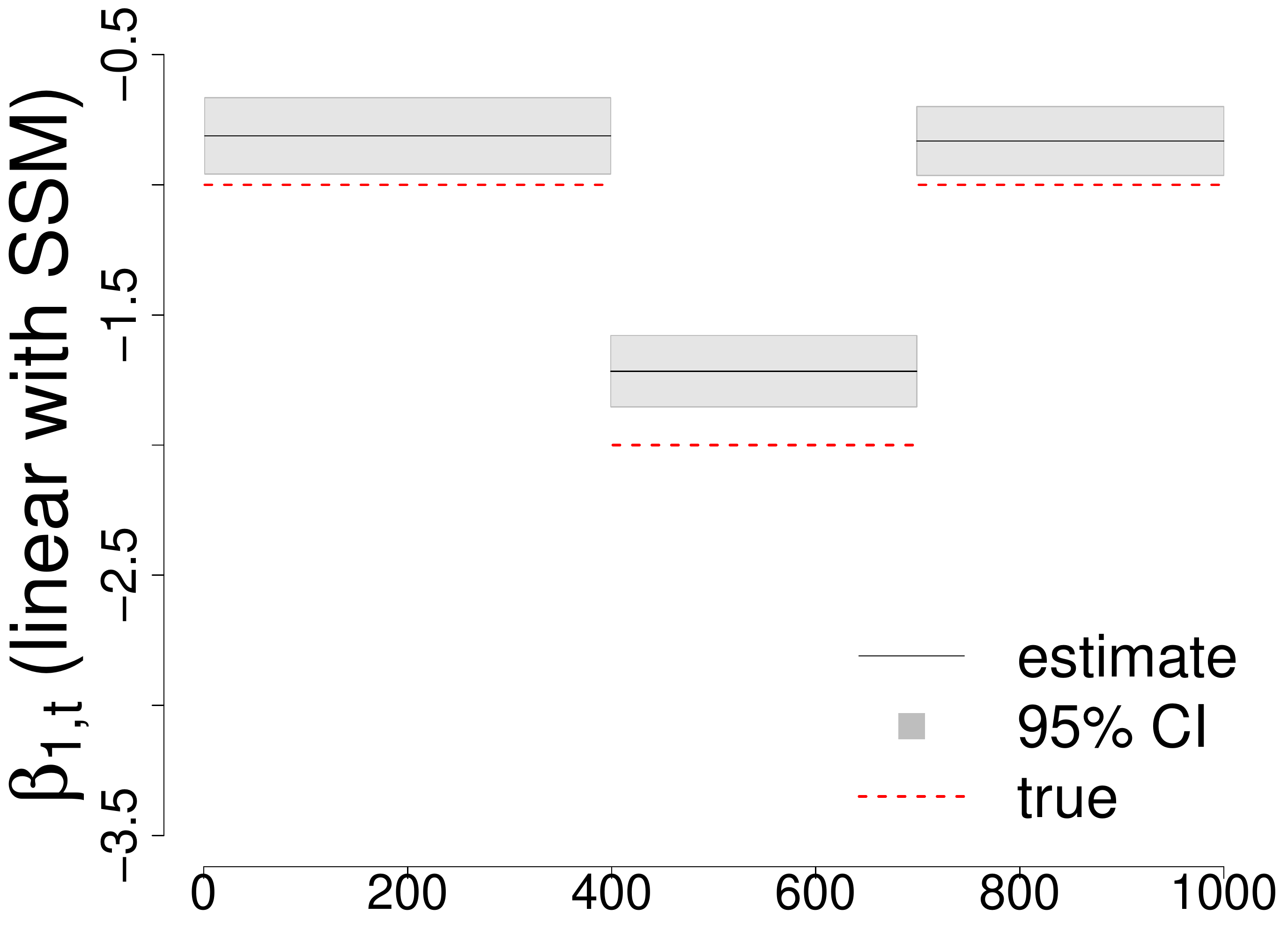}
\end{subfigure} 
\begin{subfigure}{0.32\textwidth}
  \includegraphics[width=\linewidth]{./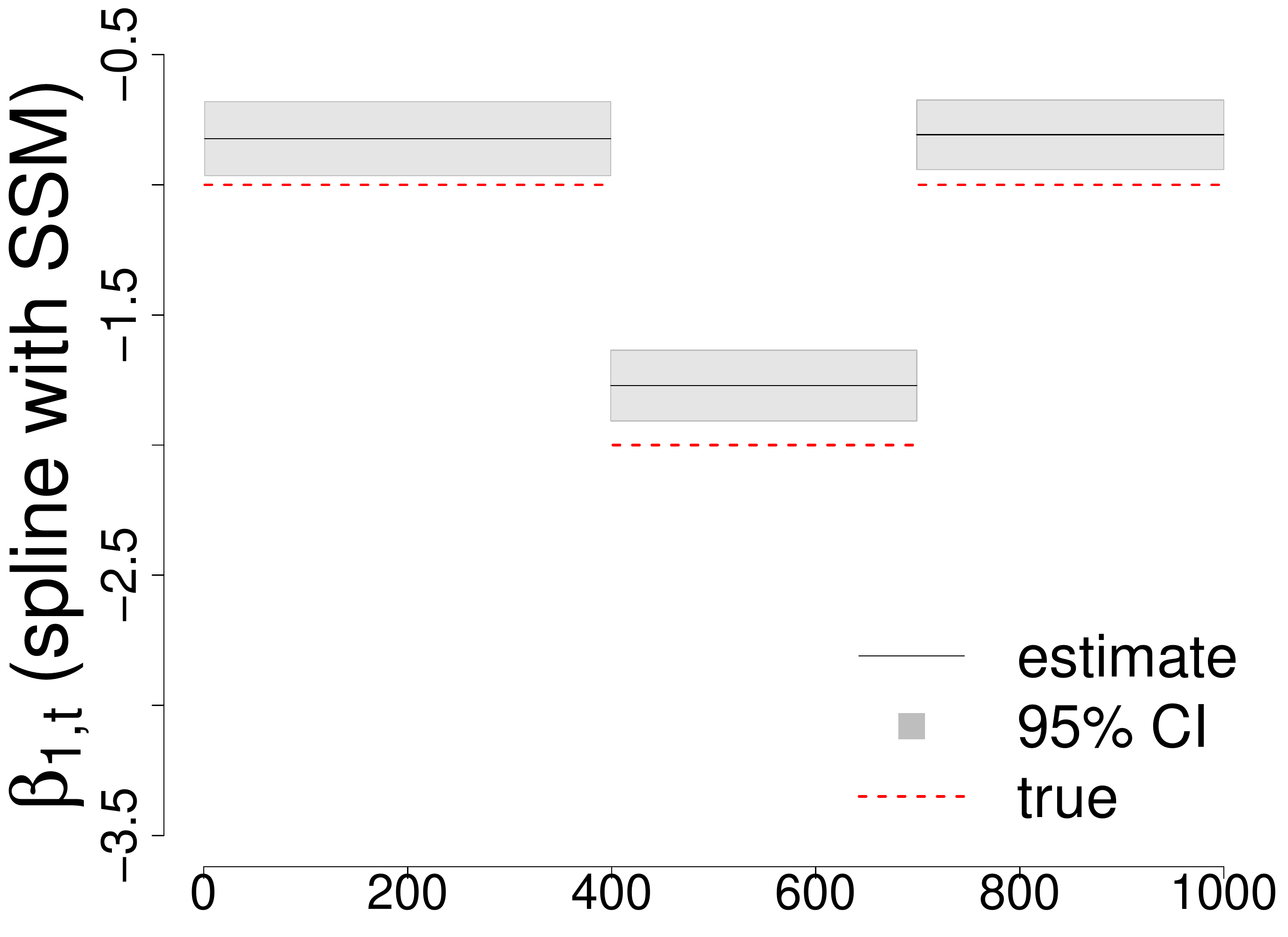}
\end{subfigure}\hfil 
\begin{subfigure}{0.32\textwidth}
  \includegraphics[width=\linewidth]{./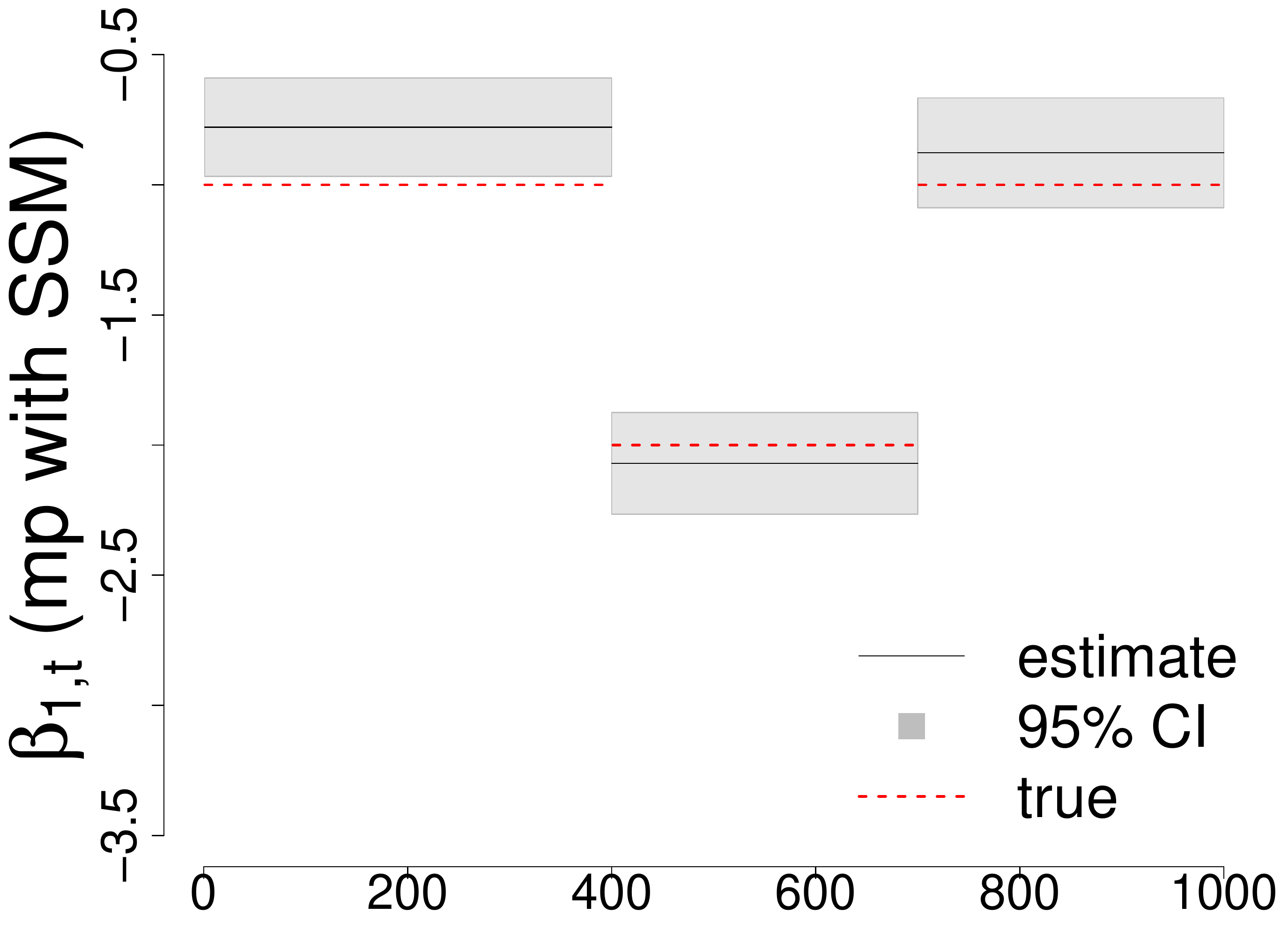}
\end{subfigure} 
\begin{subfigure}{0.32\textwidth}
  \includegraphics[width=\linewidth]{./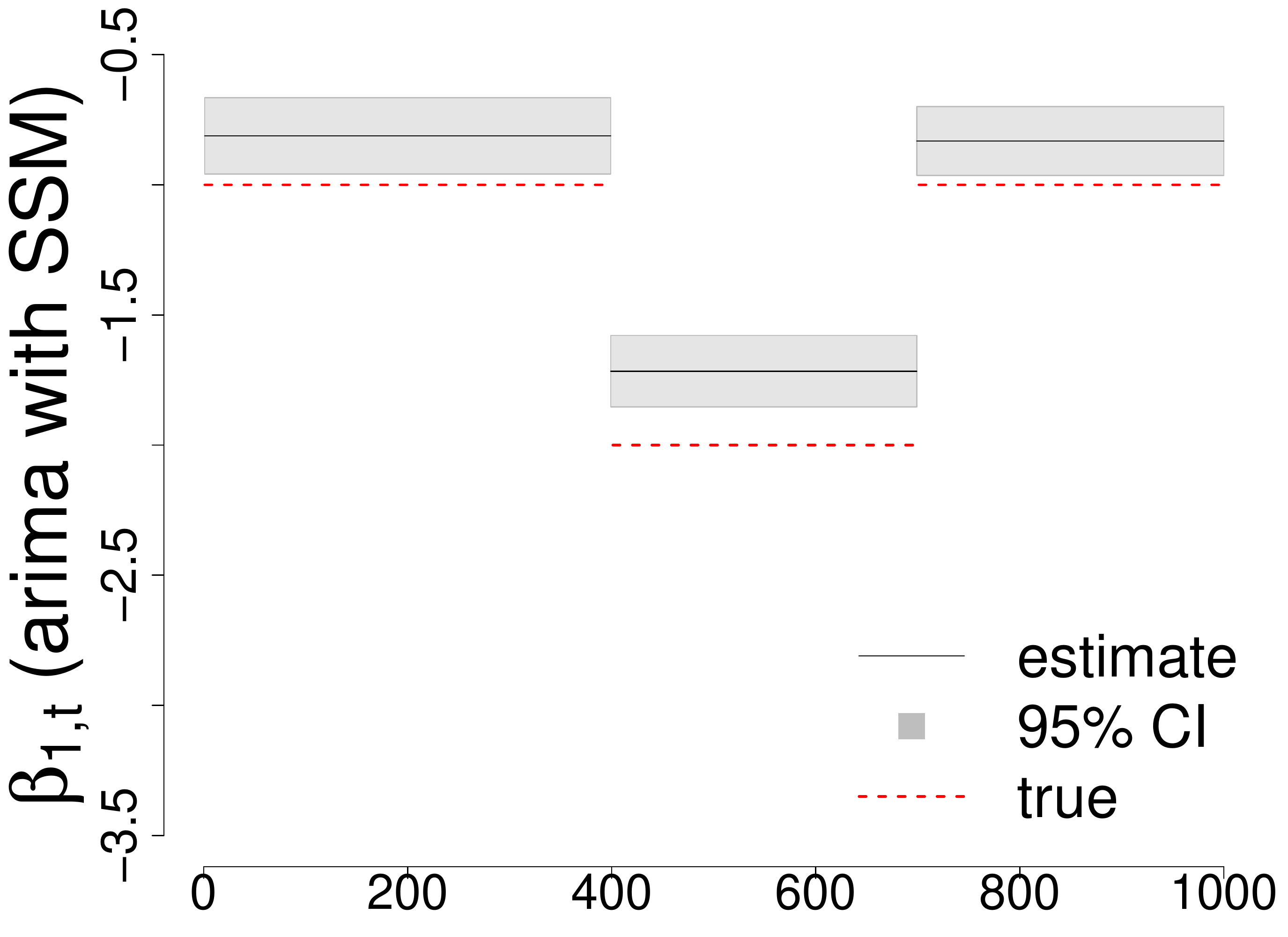}
\end{subfigure} 
\begin{subfigure}{0.32\textwidth}
  \includegraphics[width=\linewidth]{./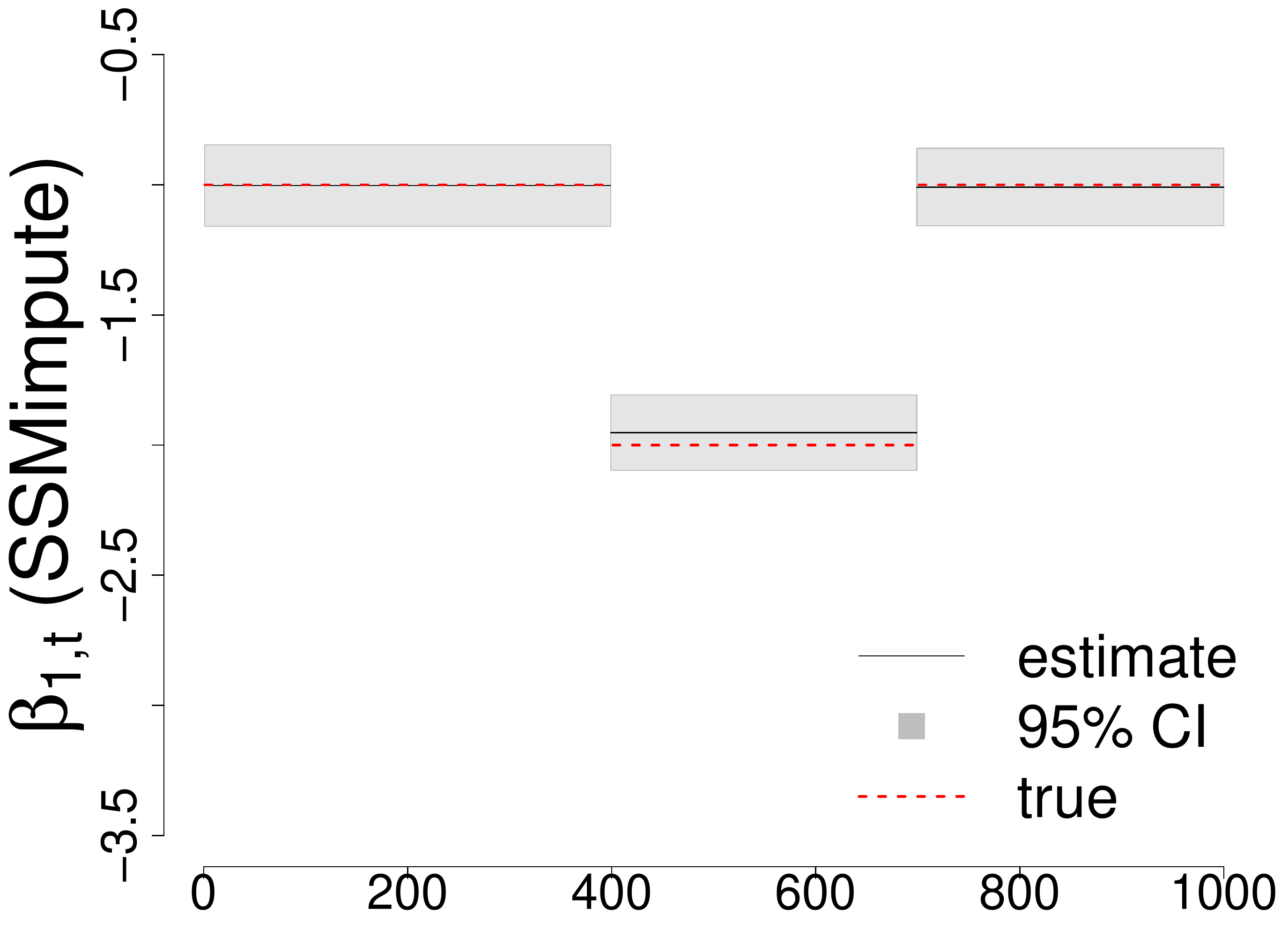}
\end{subfigure} \hfil 
\caption{Estimation of the time-varying coefficient of $A_{t}$ ($\beta_{1,t}$) over time under various missing data imputation methods and various analytical statistical models. Missing data are generated under MCAR for $25\%$. 
Methods to be compared include complete case analysis (``cc''), multiple imputation (``mp''), mean imputation (``mean''), linear interpolation (``linear''), spline interpolation (``spline''), imputation with best ARIMA model (``arima'') under linear analytical models, and their corresponding versions under state space analytical models, as well as the proposed state space model multiple imputation strategie -- SSMimpute.}
\label{fig:simulation2.2}
\vspace{-5.5pt}
\end{figure}

\begin{figure}
    \centering 
    \includegraphics[width=\linewidth]{./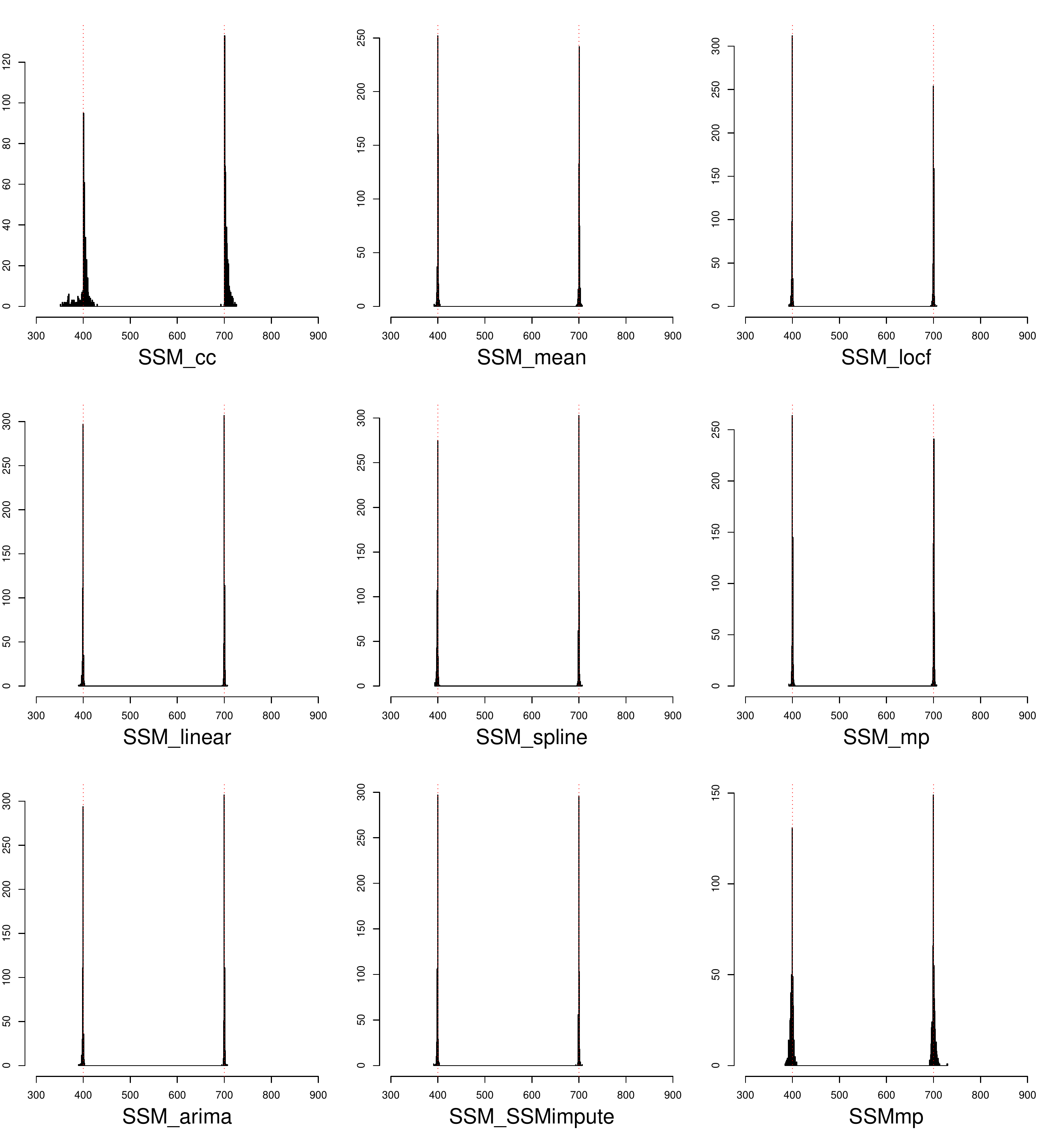}
\caption{Distribution of the estimated change points identified for the periodic-stable coefficient for $A_t$, over 500 simulations under MCAR and missing rate of $50\%$ in non-stationary scenario.
Methods to be compared include complete case analysis (``cc''), multiple imputation (``mp''), mean imputation (``mean''), linear interpolation (``linear''), spline interpolation (``spline''), imputation with best ARIMA model (``arima'') under linear models and the their corresponding versions under state space model (with added ``SSM\_'' at the front), as well as the proposed state space model multiple imputation strategies -- ``SSMimpute'' and ``SSMmp.''}
\label{fig:simulation2.8}
\end{figure} 

Figure~\ref{fig:simulation2.2} illustrates the complete trajectory of estimated three-pieced $\beta_{1,t}$ varying over time in one of the simulations shown in Figure~\ref{fig:simulation2.3} as an example. All imputation strategies with linear analytical models assumes time-invariant coefficients and fail to capture the three piece nature of $\beta_{1,t}$; imputation strategies with correctly specified state space model recover the three piece nature of $\beta_{1,t}$ with varying degree of bias, except our proposed SSMimpute imputation method. 
Figure~\ref{fig:simulation2.8} additionally illustrates distributions of identified change points for three-pieced $\beta_{1,t}$ in Figure~\ref{fig:simulation2.3} using various imputation strategies, over 500 simulations under MCAR and under missing rate of $50\%$. 
Estimation results, standard errors, and coverages for the estimated three-pieced $\beta_{1,t}$ over its 3 periods in the non-stationary scenario under MAR and MNAR are shown in Figure~\ref{fig:simulation2.4} and Figure~\ref{fig:simulation2.5}, respectively, under missing rate of $50\%$ and over 500 simulations.

\begin{figure}
    \centering 
    \includegraphics[width=\linewidth]{./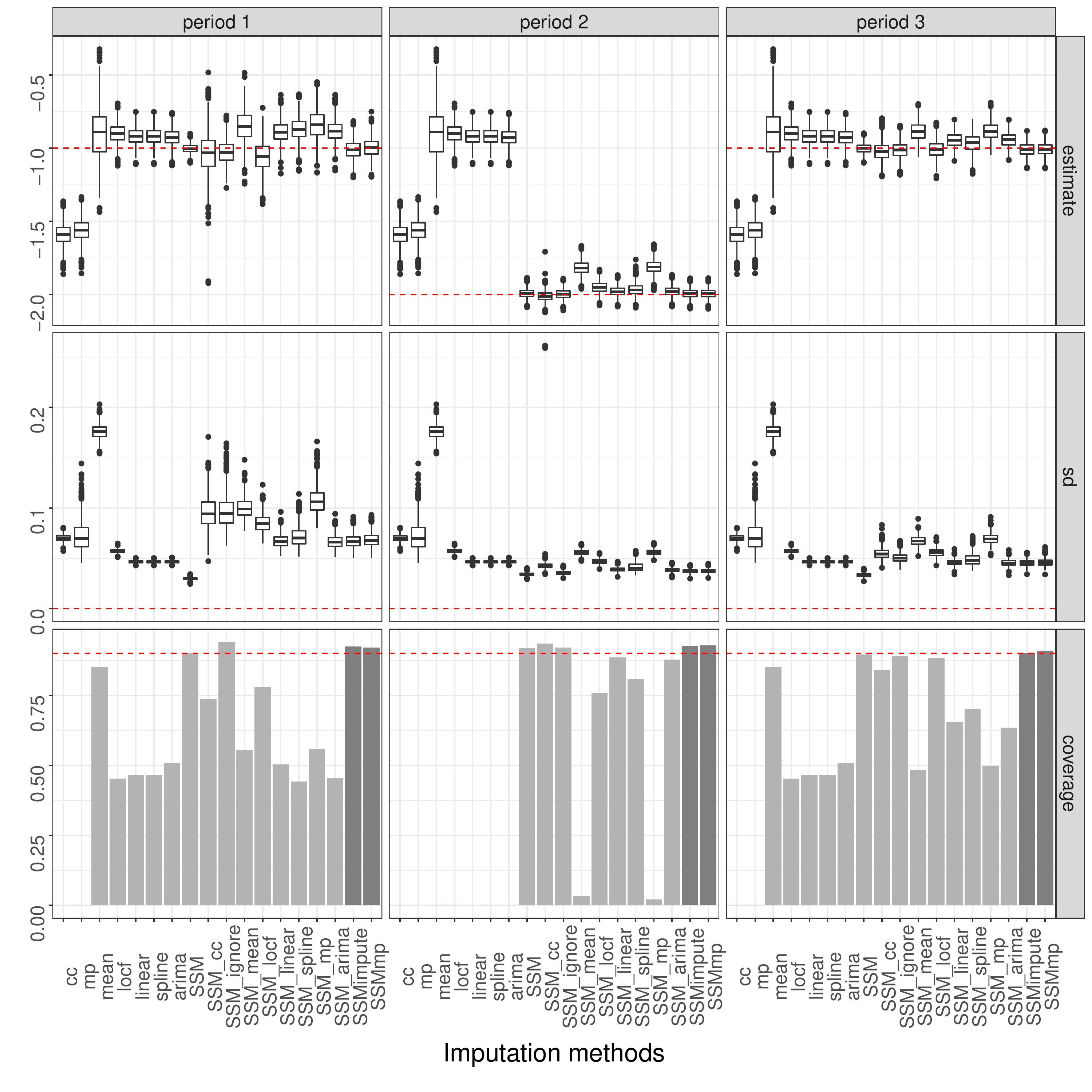}
\caption{Boxplots of the estimate (top), standard error (middle), and $90\%$ CI coverage (bottom) of the periodic-stable coefficient of $A_t$ over 3 distinct periods (from left to right) for the non-stationary scenario, over 500 simulations under MAR and missing rate of $50\%$.
Methods to be compared include complete case analysis (``cc''), multiple imputation (``mp''), mean imputation (``mean''), linear interpolation (``linear''), spline interpolation (``spline''), imputation with best ARIMA model (``arima'') under linear analytical models, and their corresponding versions under state space analytical models (with added ``SSM\_'' at the front), as well as the proposed state space model multiple imputation strategies -- SSMimpute and SSMmp.}
\label{fig:simulation2.4}
\end{figure} 
\begin{figure}
    \centering 
    \includegraphics[width=\linewidth]{./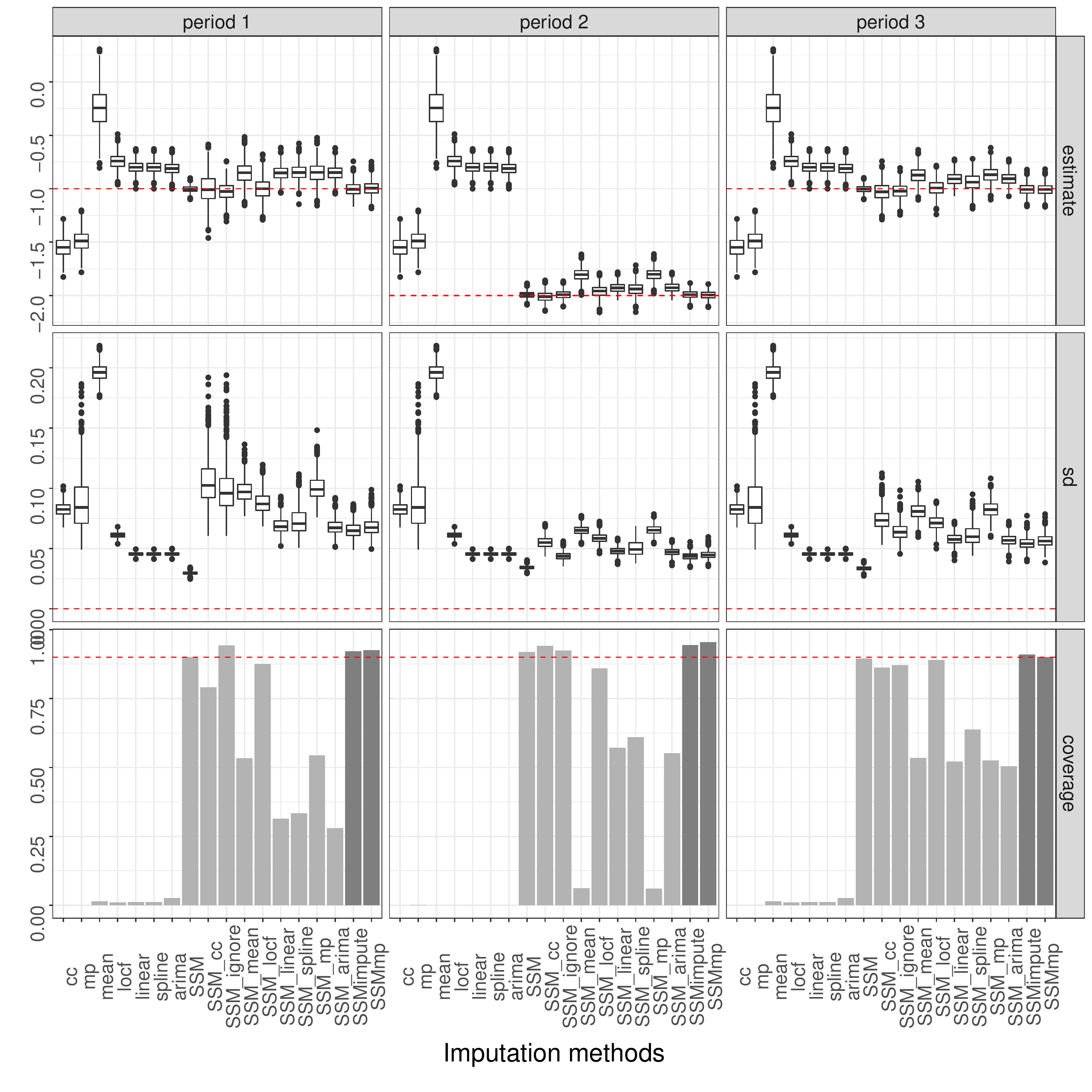}
\caption{Boxplots of the estimate (top), standard error (middle), and $90\%$ CI coverage (bottom) of the periodic-stable coefficient of $A_t$ over 3 distinct periods (from left to right) for the non-stationary scenario, over 500 simulations under MNAR and missing rate of $50\%$.
Methods to be compared include complete case analysis (``cc''), multiple imputation (``mp''), mean imputation (``mean''), linear interpolation (``linear''), spline interpolation (``spline''), imputation with best ARIMA model (``arima'') under linear analytical models, and their corresponding versions under state space analytical models (with added ``SSM\_'' at the front), as well as the proposed state space model multiple imputation strategies -- SSMimpute and SSMmp.}
\label{fig:simulation2.5}
\end{figure} 

Figures~\ref{fig:simulation2.6}--\ref{fig:simulation2.7} illustrate the estimation results, standard errors, and coverages for time-invariant coefficients of $Y_{t-1}$ and $C_t$ in the non-stationary scenario, respectively, using various imputation strategies over 500 simulations under MCAR, MAR, and MNAR and under missing rate of $50\%$.  Figures~\ref{fig:simulation2.9} illustrates the estimation results, standard errors, and coverages for time-invariant coefficients of $A_{t-1}$ in the non-stationary scenario under missing rate of $25\%$, $50\%$, and $75\%$ and MCAR, using all strategies over 500 simulation.

\begin{figure}
    \centering 
    \includegraphics[width=\linewidth]{./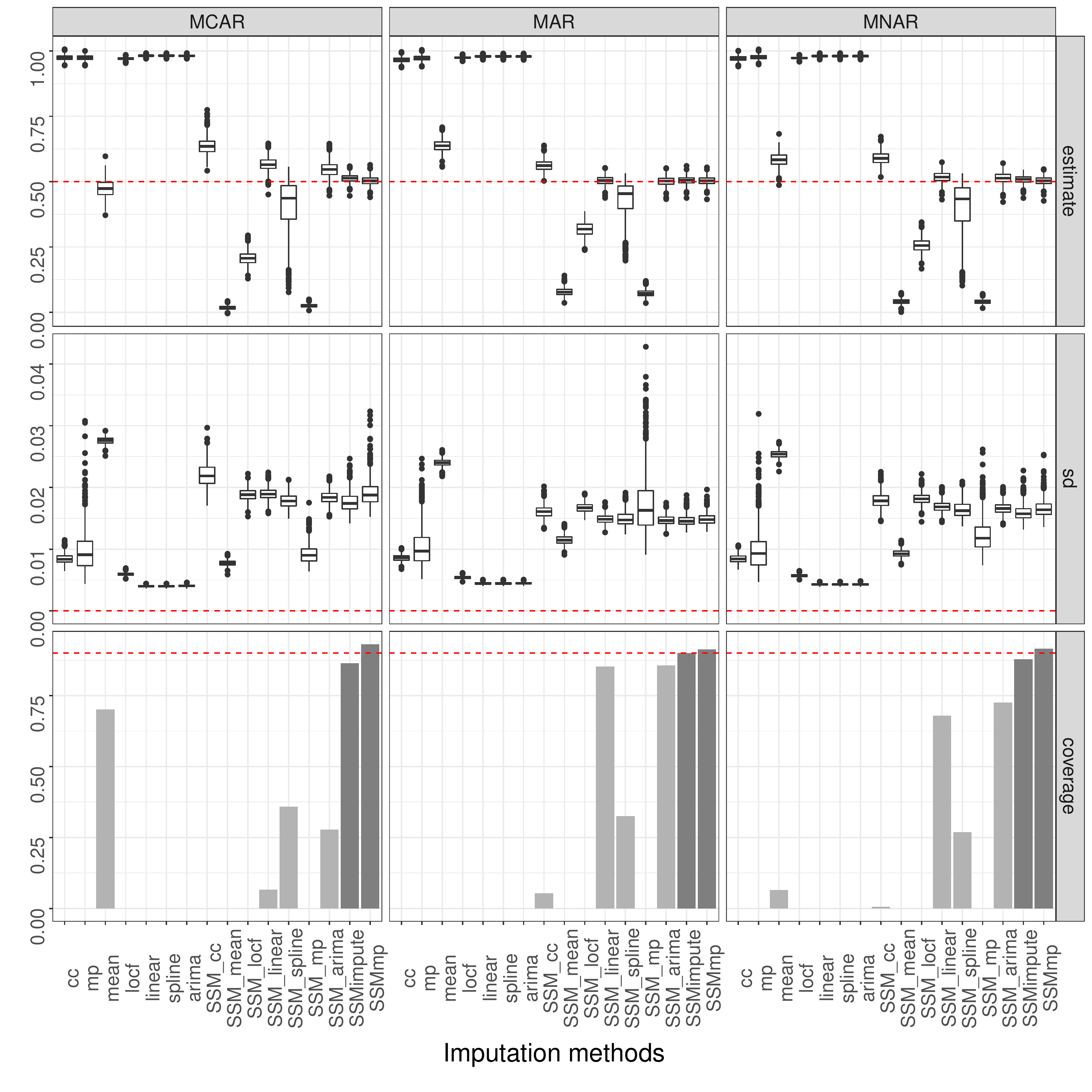}
\caption{Boxplots of the estimate (top), standard error (middle), and $90\%$ CI coverage (bottom) of the auto-correlation term of $Y_{t-1}$ for the non-stationary scenario, over 500 simulations under MCAR (left), MAR (middle), and MNAR (right), and under missing rate of $50\%$.
Methods to be compared include complete case analysis (``cc''), multiple imputation (``mp''), mean imputation (``mean''), linear interpolation (``linear''), spline interpolation (``spline''), imputation with best ARIMA model (``arima'') under linear analytical models, and their corresponding versions under state space analytical models (with added ``SSM\_'' at the front), as well as the proposed state space model multiple imputation strategies -- SSMimpute and SSMmp.}
\label{fig:simulation2.6}
\end{figure} 

\begin{figure}
    \centering 
    \includegraphics[width=\linewidth]{./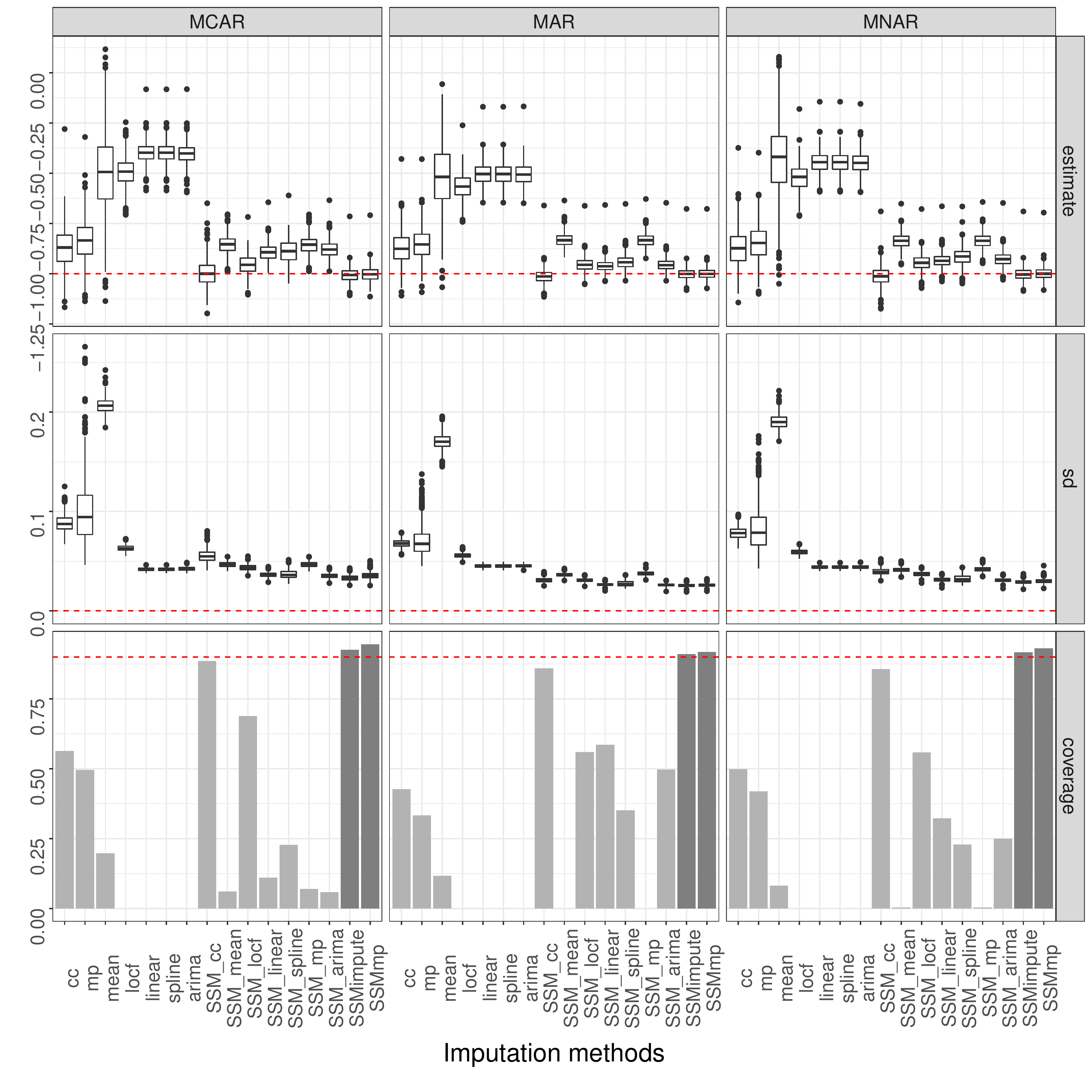}
\caption{Boxplots of the estimate (top), standard error (middle), and $90\%$ CI coverage (bottom) of the coefficient of $C_{t}$ for the non-stationary scenario, over 500 simulations under MCAR (left), MAR (middle), and MNAR (right), and under missing rate of $50\%$. 
Methods to be compared include complete case analysis (``cc''), multiple imputation (``mp''), mean imputation (``mean''), linear interpolation (``linear''), spline interpolation (``spline''), imputation with best ARIMA model (``arima'') under linear models and the their corresponding versions under state space model (with added ``SSM\_'' at the front), as well as the proposed state space model multiple imputation strategies -- ``SSMimpute'' and ``SSMmp.''}
\label{fig:simulation2.7}
\end{figure} 

\begin{figure}
    \centering 
    \includegraphics[width=\linewidth]{./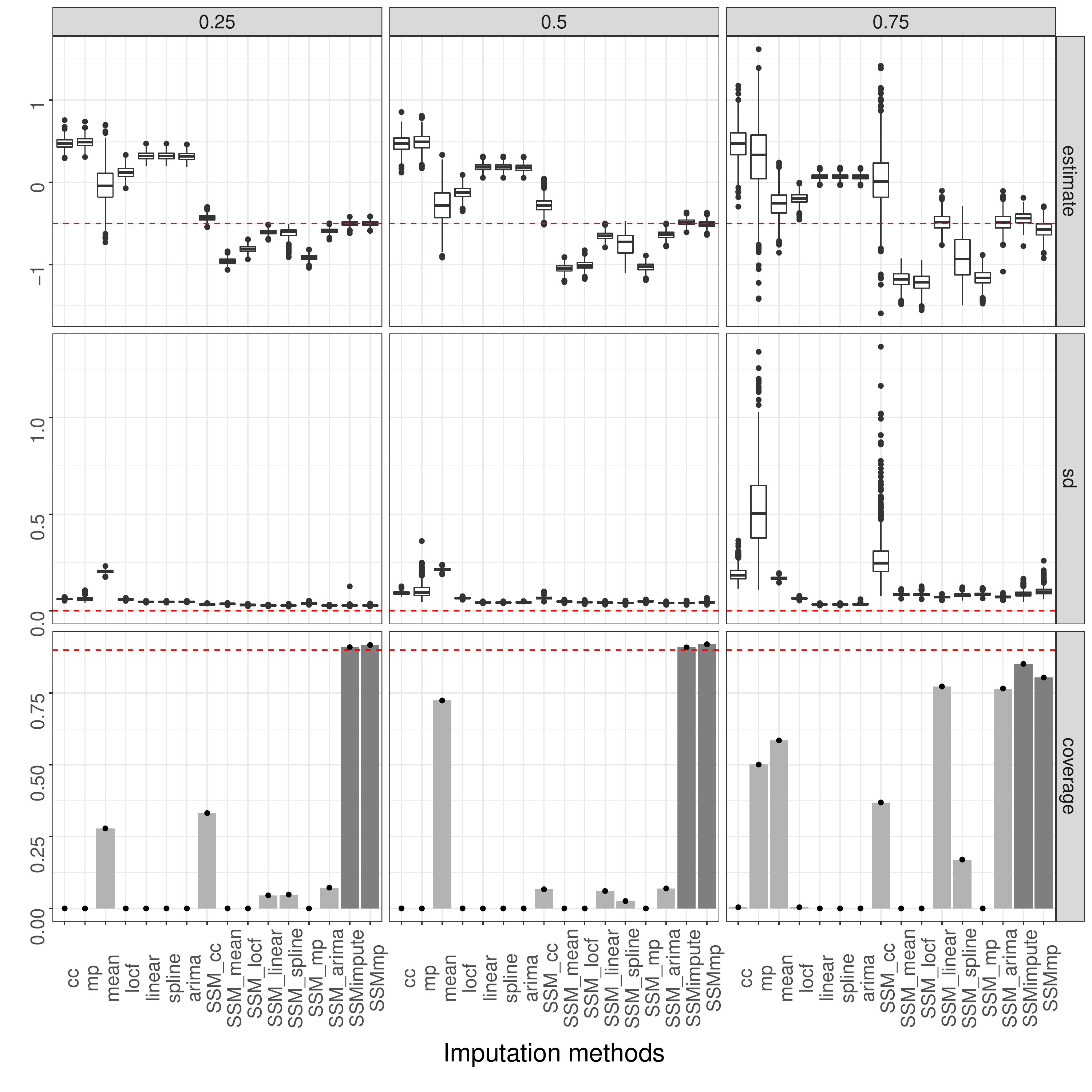}
\caption{Boxplots of the estimate (top), standard error (middle), and $90\%$ CI coverage (bottom) of the coefficient of $A_{t-1}$ for the non-stationary scenario, over 500 simulations under MCAR and missing rate of $20\%$ (left), $50\%$ (middle) and $75\%$ (right).
Methods to be compared include complete case analysis (``cc''), multiple imputation (``mp''), mean imputation (``mean''), linear interpolation (``linear''), spline interpolation (``spline''), imputation with best ARIMA model (``arima'') under linear models and the their corresponding versions under state space model (with added ``SSM\_'' at the front), as well as the proposed state space model multiple imputation strategies -- ``SSMimpute'' and ``SSMmp.''}
\label{fig:simulation2.9}
\end{figure} 

\section{Additional BLS application estimation result}
\subsection{Temperature and average physical activity during the entire follow up}

Figure~\ref{fig:pa} depicted the pre-processed physical activity level \citep{bai2012movelets,bai2014normalization}, averaged over a moving window of 15 days of the 708-day follow-up. 
Figure~\ref{fig:temp} depicts the average outdoor temperature in the patient's neighborhood over the course of 708 days of follow up.

\begin{figure}
\centering
\includegraphics[width=\linewidth]{./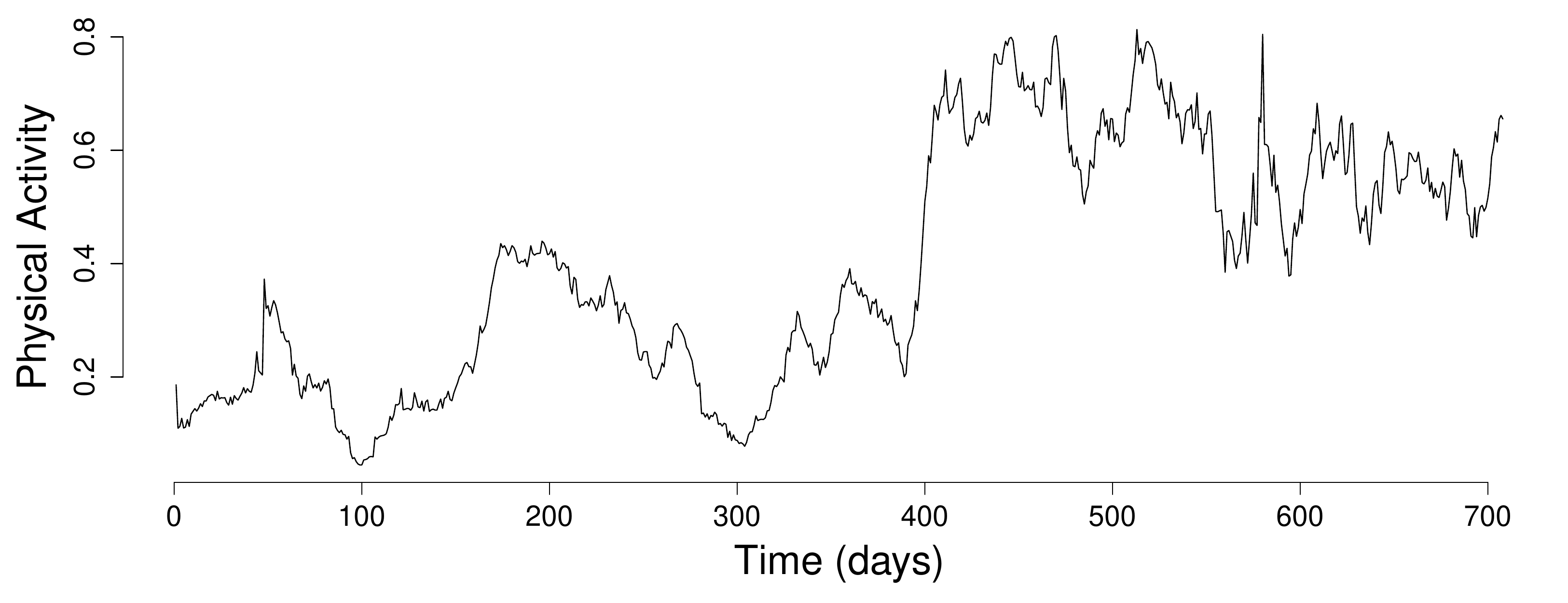}
\caption{Pre-processed average physical activity level over the 708 days of follow up}
\label{fig:pa}
\end{figure}

\begin{figure}
\centering
\includegraphics[width=\linewidth]{./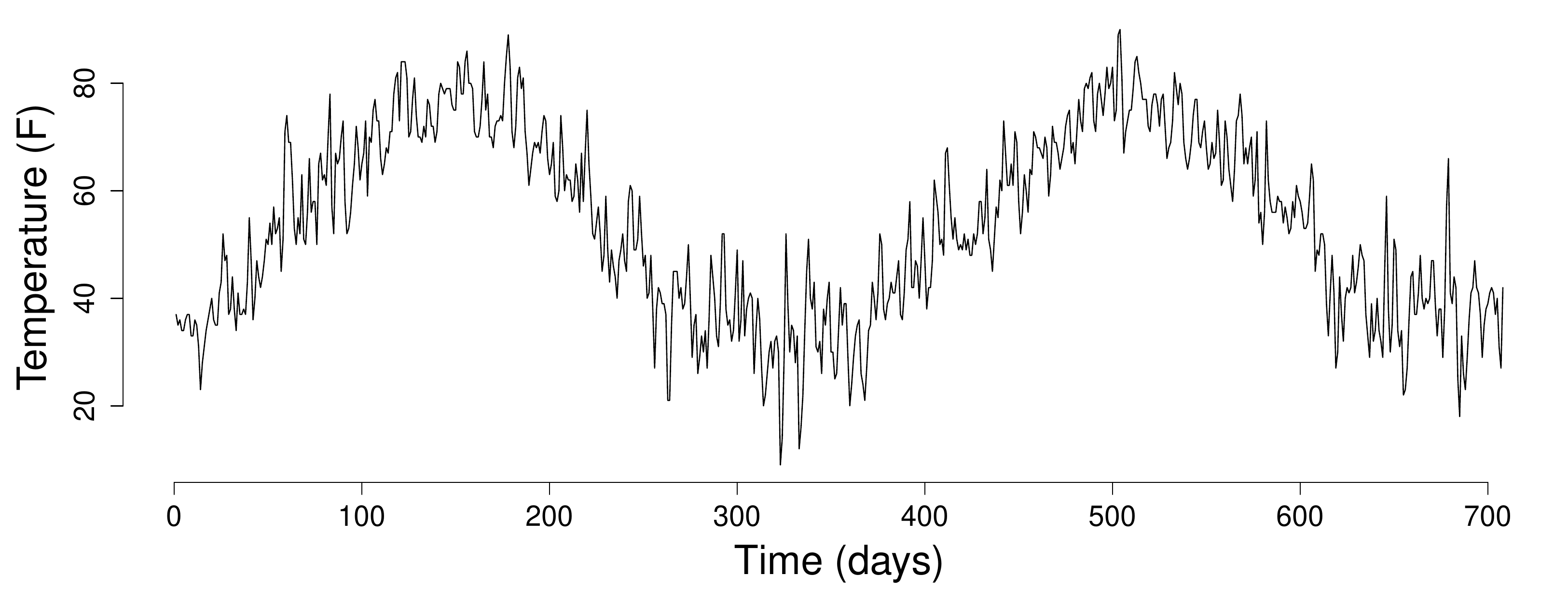}
\caption{Outdoor temperature over the 708 days of follow up}
\label{fig:temp}
\end{figure}

\subsection{State space analytical model selection}

We consider a concise model including one auto-correlation term, exposures of interest, and confounding variables on the same day. Models including additional history information are also considered as an effort to improve model fitting.
Table~\ref{tab:modelselection} shows a model comparison of state space models with varying extra history information and the original model, utilizing one step ahead predictions selection criteria \cite{rivers2002model}. 
The original model with the minimum sum of squared one-step prediction error is selected.

\begin{table}
\centering
\begin{tabular}{llc}
  \hline
\multicolumn{2}{l}{State space analytical model} & sum of squared one-step-prediction error \\ 
  \hline
\multicolumn{3}{l}{$y_t \sim \text{intercept} + Y_{t-1} +  A_{calls,t} + A_{calls,t-1} + A_{text,t} +  A_{texts,t-1} + \text{Temp}_t + \text{PA}_t $} \\ 
& original model & 2151.42 \\
& add $Y_{t-2}$ & 2162.21\\ 
& add $A_{calls,t-2}$ & 2161.45\\ 
& add $A_{texts,t-2}$ & 2158.88 \\ 
  \hline
\end{tabular}
\caption{State space model selection using sum or squared one-step-prediction error.}
\label{tab:modelselection}
\end{table}

\subsection{Estimated coefficients trajectory 
over time using SSMimpte}
Figure~\ref{fig:ssm_intercept} illustrates the estimated time-varying random walk intercept over time using SSMimpute. 

Figure~\ref{fig:ssm_rho} illustrate the estimated auto-correlation trajectory over time using SSMimpute, which is modeled as a time-invariant coefficient.

Figures~\ref{fig:ssm_calls} and \ref{fig:ssm_calls2} illustrate the estimated coefficient trajectories for the degree of call contacts on the same day and on the previous day, respectively, over time using SSMimpute, which are modeled as time-invariant coefficients.
 
Figure~\ref{fig:ssm_text2} illustrate the estimated coefficient trajectory for the degree of text contacts on the previous day over time using SSMimpute, which is modeled as a time-invariant coefficient.

Figure~\ref{fig:ssm_pa} illustrate the estimated coefficient trajectory for the averaged physical activity over time using SSMimpute, which is modeled as a time varying three-pieced coefficient with detected change points around day 162 and day 267.

Figure~\ref{fig:ssm_temp} illustrate the estimated coefficient trajectory for outdoor temperature over time using SSMimpute, which is modeled as a time-invariant coefficient.

\begin{figure}
\includegraphics[width=\linewidth]{./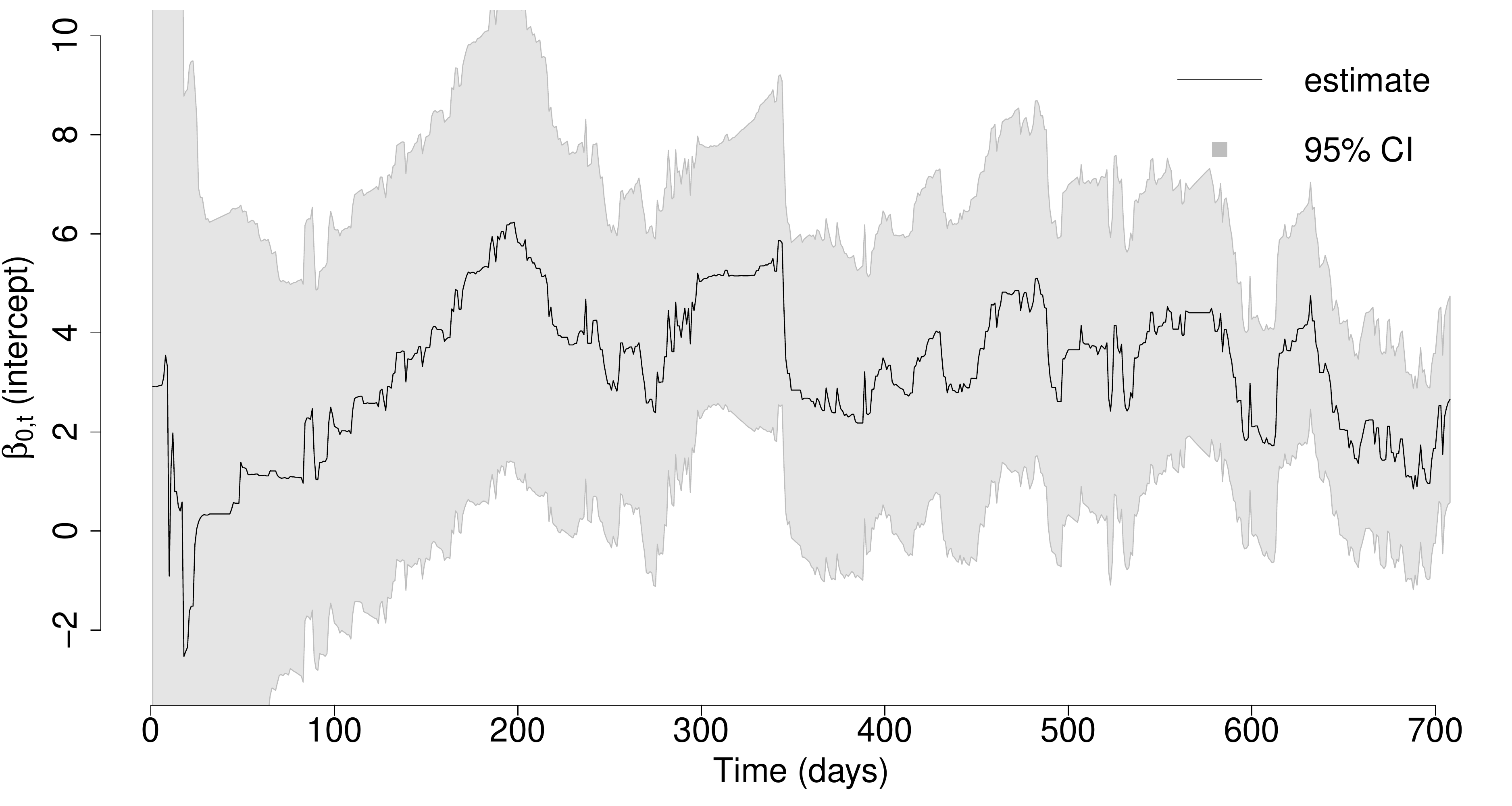}
\caption{Estimated intercept $\beta_{0,t}$ over time as a random walk using SSMimpute. The black line represents the estimated coefficient $\beta_{0,t}$ over time, while the grey area represents the $95\%$ confidence interval.}
\label{fig:ssm_intercept}
\end{figure}

\begin{figure}
\includegraphics[width=\linewidth]{./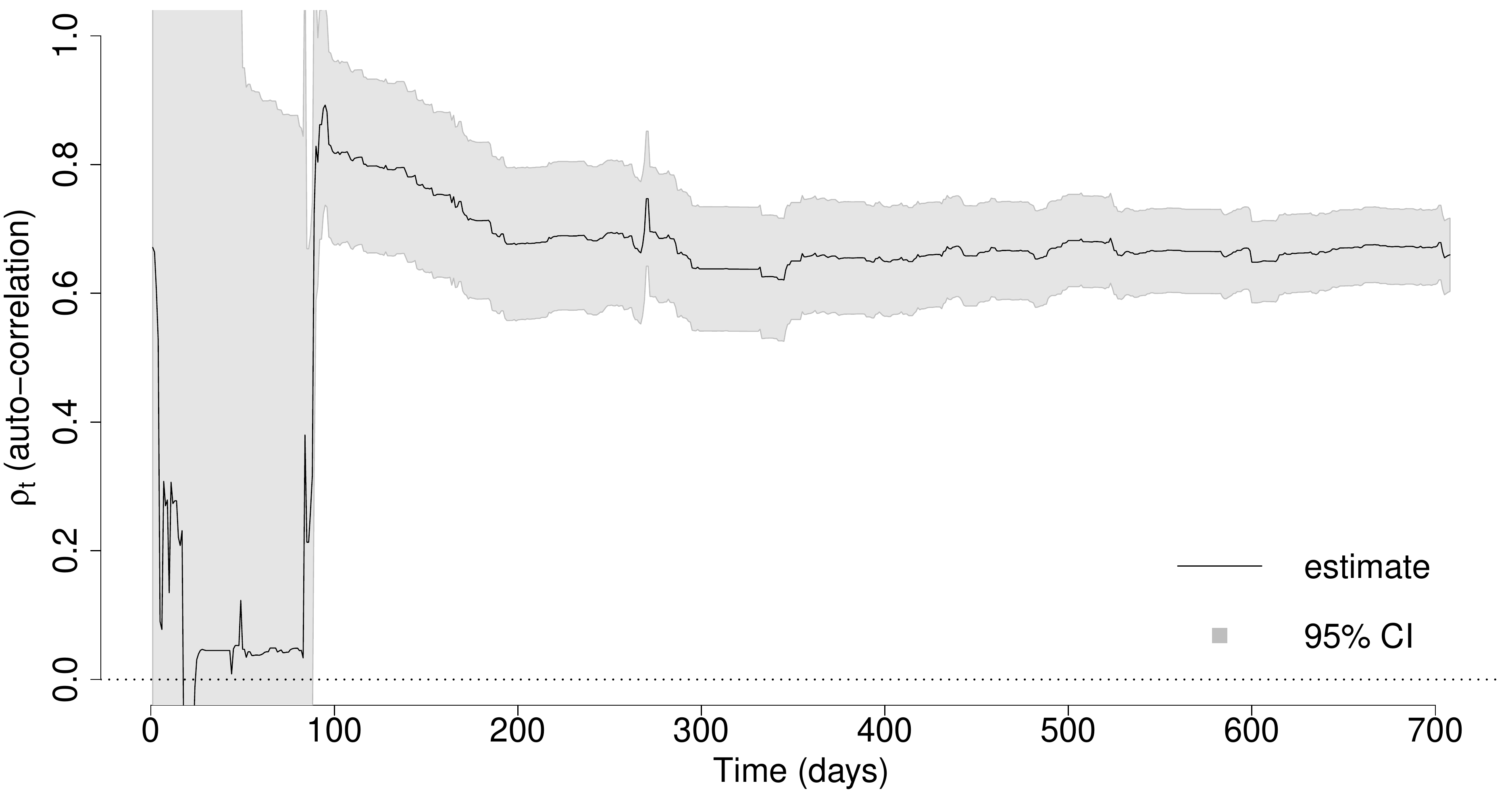}
\caption{Estimated auto-correlation $\rho_t$ between $Y_t$ and $Y_{t-1}$ over time using SSMimpute. The black line represents the estimated coefficient $\rho_t$ over time, while the grey area represents the $95\%$ confidence interval. The dashed horizontal line at 0 indicates null effect.}
\label{fig:ssm_rho}
\end{figure}

\begin{figure}
\includegraphics[width=\linewidth]{./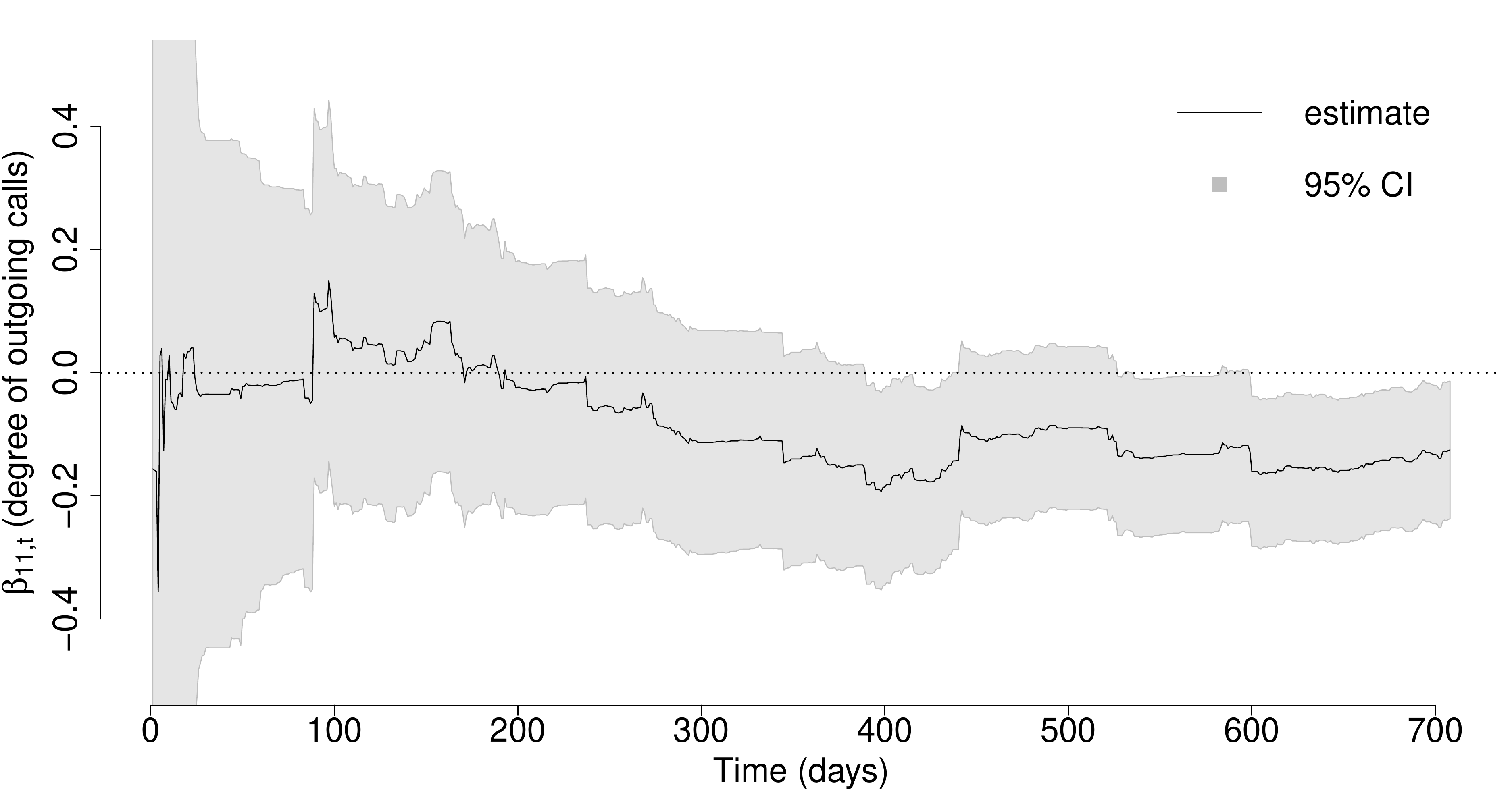}
\caption{Estimated coefficient trajectory over time for the degree of calls $\beta_{11,t}$ using SSMimpute. The black line represents the estimated coefficient over time, while the grey line represents the $95\%$ confidence interval. The dashed horizontal line at 0 indicates that there is no effect.}
\label{fig:ssm_calls}
\end{figure}

\begin{figure}
\includegraphics[width=\linewidth]{./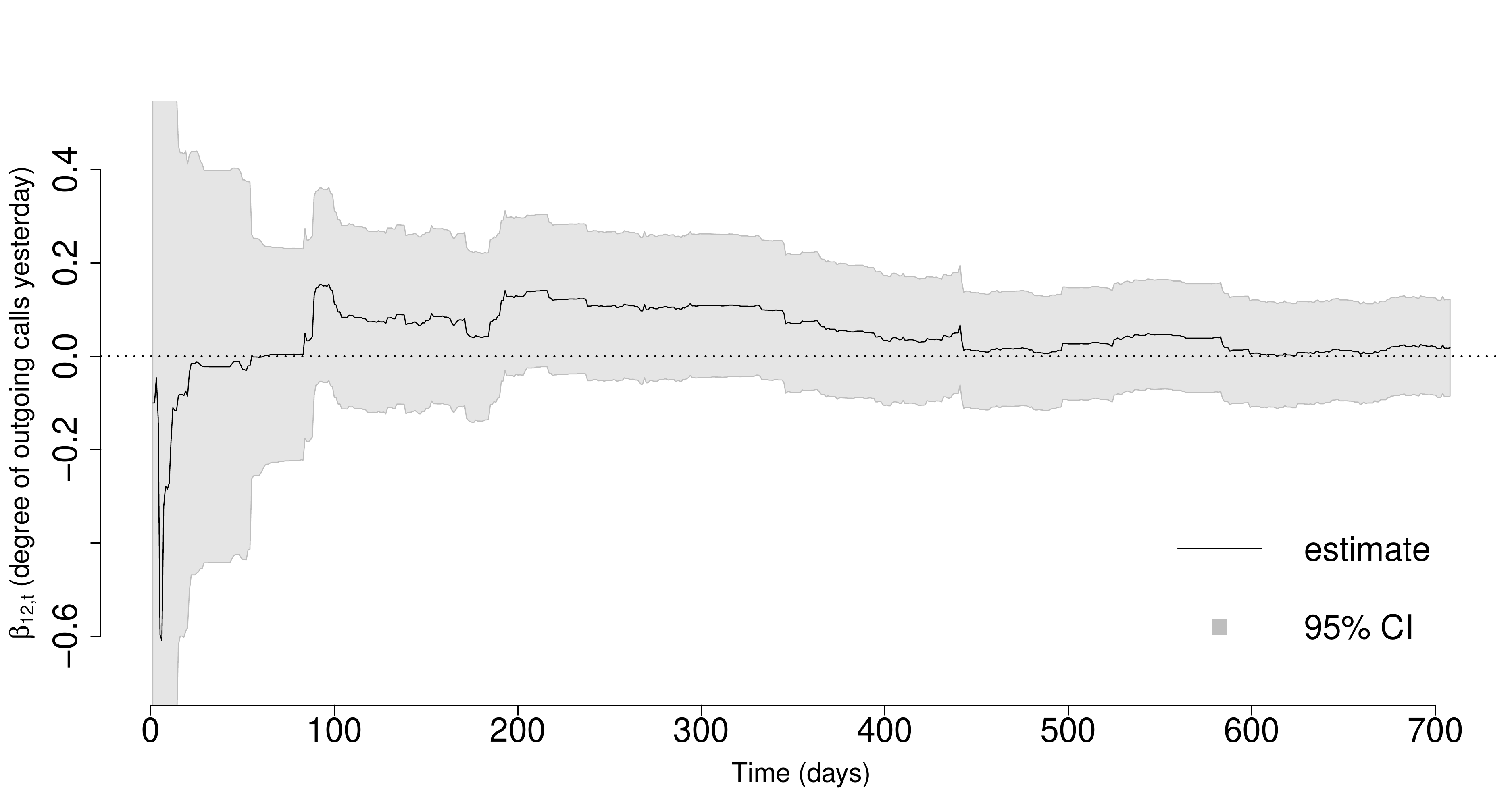}
\caption{Estimated coefficient trajectory over time for the degree of calls at the previous day $\beta_{12,t}$ using SSMimpute. The black line represents the estimated coefficient over time, while the grey line represents the $95\%$ confidence interval. The dashed horizontal line at 0 indicates that there is no effect.}
\label{fig:ssm_calls2}
\end{figure}

\begin{figure}
\includegraphics[width=\linewidth]{./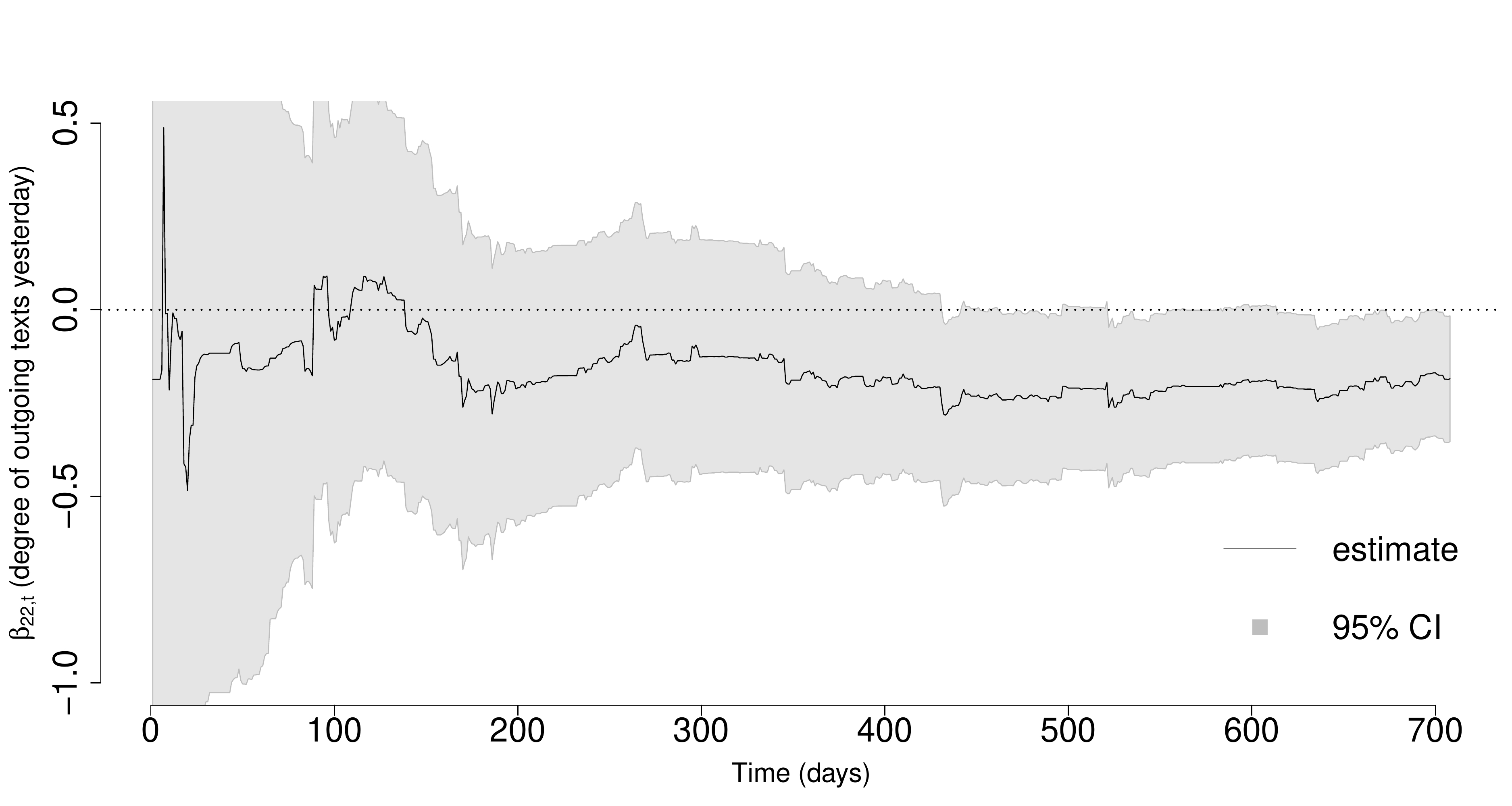}
\caption{Estimated coefficient trajectory over time for the degree of texts at the previous day $\beta_{22,t}$ using SSMimpute. The black line represents the estimated coefficient over time, while the grey line represents the $95\%$ confidence interval. The dashed horizontal line at 0 indicates that there is no effect.}
\label{fig:ssm_text2}
\end{figure}

\begin{figure}
\includegraphics[width=\linewidth]{./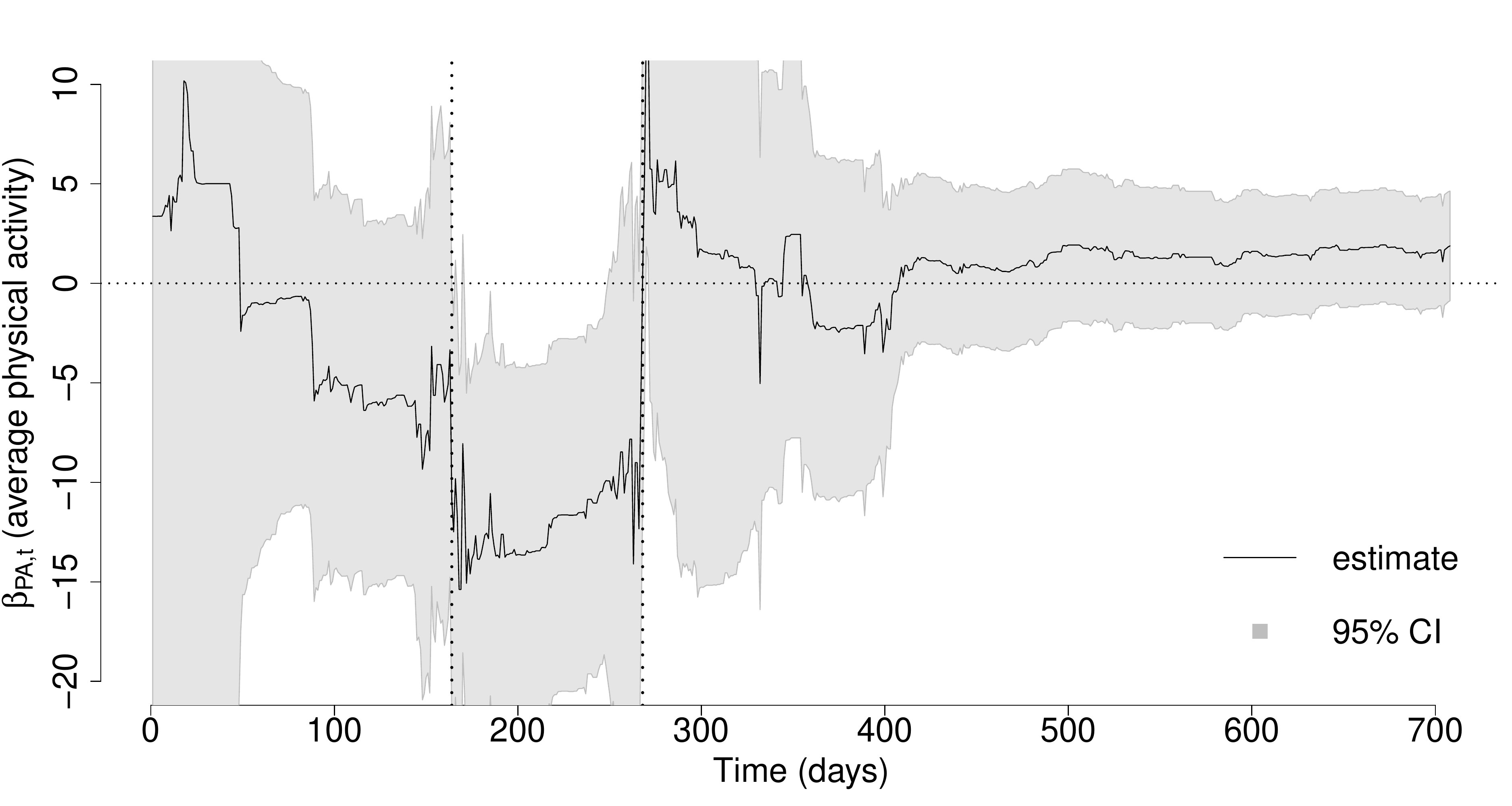}
\caption{Estimated coefficient trajectory over time for the averaged physical activity $\beta_\text{PA,t}$ using SSMimpute. The black line represents the estimated coefficient over time, while the grey line represents the $95\%$ confidence interval. The dashed horizontal line at 0 indicates that there is no effect.}
\label{fig:ssm_pa}
\end{figure}

\begin{figure}
\includegraphics[width=\linewidth]{./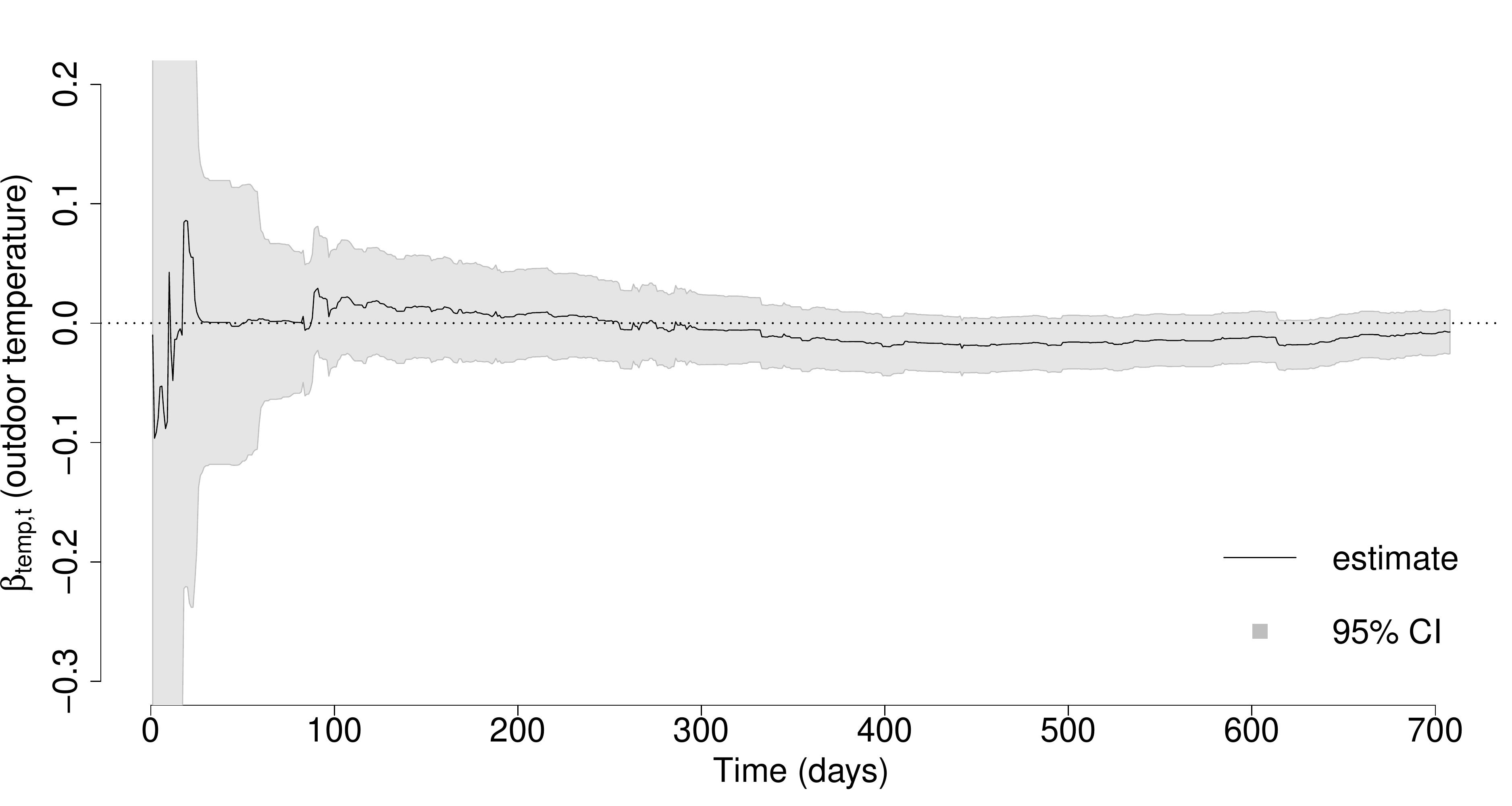}
\caption{Estimated coefficient trajectory over time for the outdoor temperature $\beta_\text{temp,t}$ using SSMimpute. The black line represents the estimated coefficient over time, while the grey line represents the $95\%$ confidence interval. The dashed horizontal line at 0 indicates that there is no effect.}
\label{fig:ssm_temp}
\end{figure}

\subsection{Posterior distribution for missing outcome imputation}
Figure~\ref{fig:convergence_process} shows the $90\%$ confidence interval of the point-wise posterior distribution from which missing outcomes are imputed. The confidence interval band contains observed outcomes adequately, validating the production of satisfactory candidates for missing outcomes in non-stationary multivariate time series.
\begin{figure}
\centering
\includegraphics[width=\linewidth]{./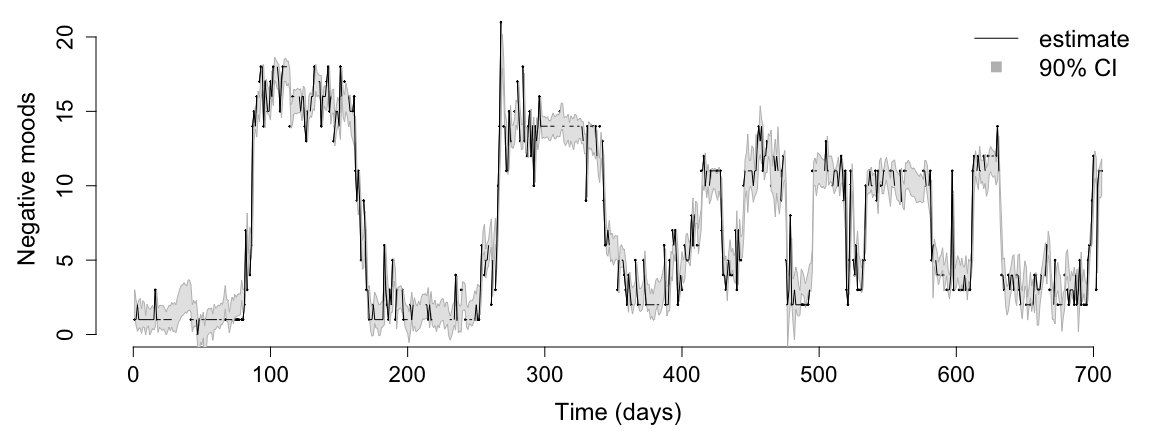}
\caption{Estimated $90\%$ confidence band of the missing outcome imputation posterior distribution.}
\label{fig:predictionband}
\end{figure}

\subsection{Illustration of convergence process}
The state space model structure of the SSMmp provides sufficient flexibility for data imputation in non-stationary time series, allowing for systemic changes over time: parameters to be estimated (e.g., coefficients for key variables and other nuisance structural parameters) can be modeled as a random walk, an auto-regressive process, or a periodic-stable process.
We allow a iterative learning process of the structure of state space model in the process of multiple imputation and iteration: we begin by assuming all parameters to change freely over time as a random walk; as iteration proceeds, the estimation trajectory for time-invariant parameters becomes a horizontal line, the estimation trajectory for periodic-stable parameters becomes a series of several horizontal lines, and the estimation trajectory for auto-regressive parameters becomes stable in level with positive variation, and the estimation trajectory for truly non-stable parameters exhibits randomness and level drift over time and are thus modeled as a random walk. Figure~\ref{fig:convergence_process} shows an example of convergence process for time-varying coefficients $\beta_{21,t}$ (coefficient for the degree of text) and $\beta_{\text{PA},t}$ (coefficient for physical activity level) in the BLS application.
\begin{figure}
    \centering 
\begin{subfigure}{0.49\textwidth}
\includegraphics[width=\linewidth]{./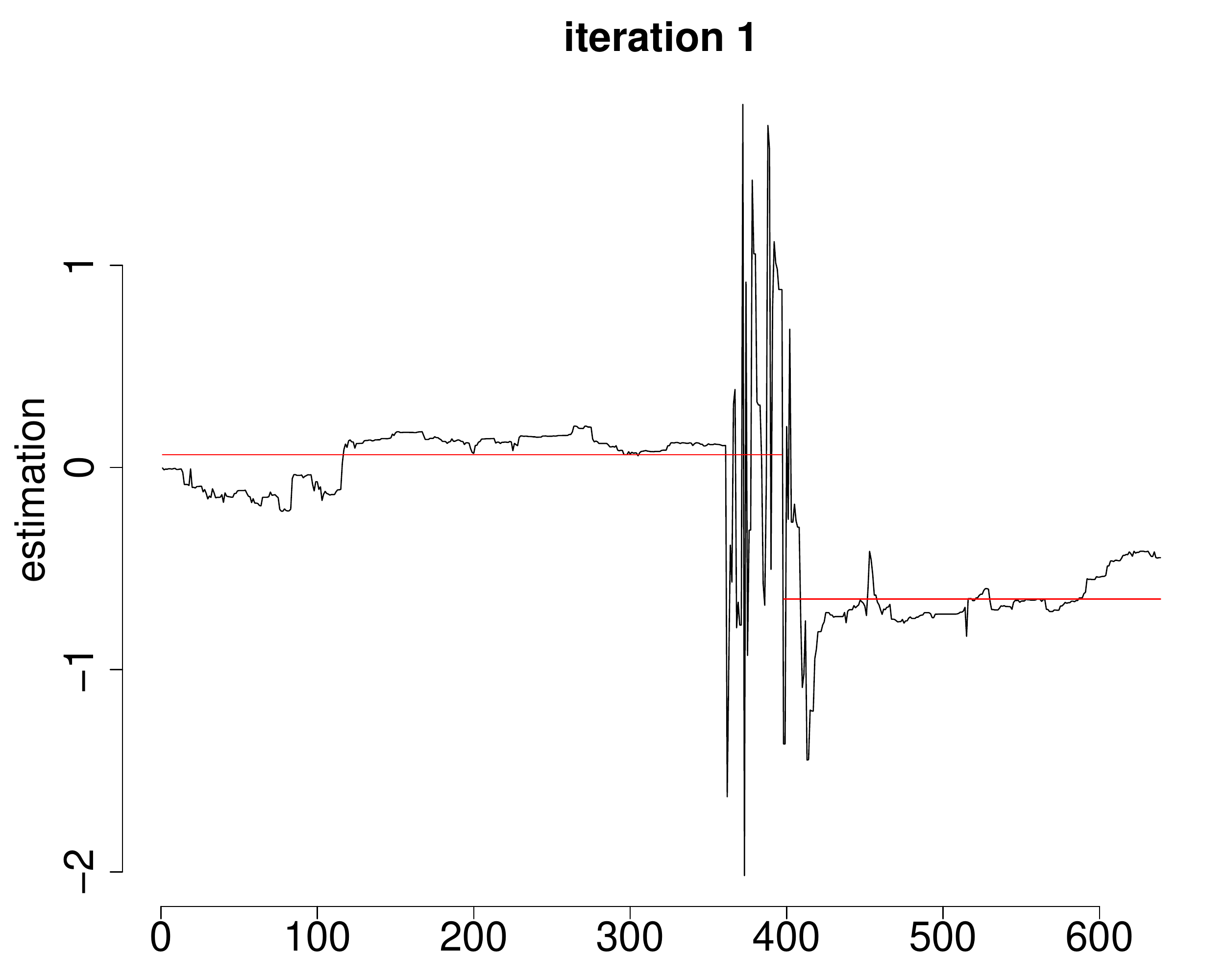}
\end{subfigure} 
\begin{subfigure}{0.49\textwidth}
  \includegraphics[width=\linewidth]{./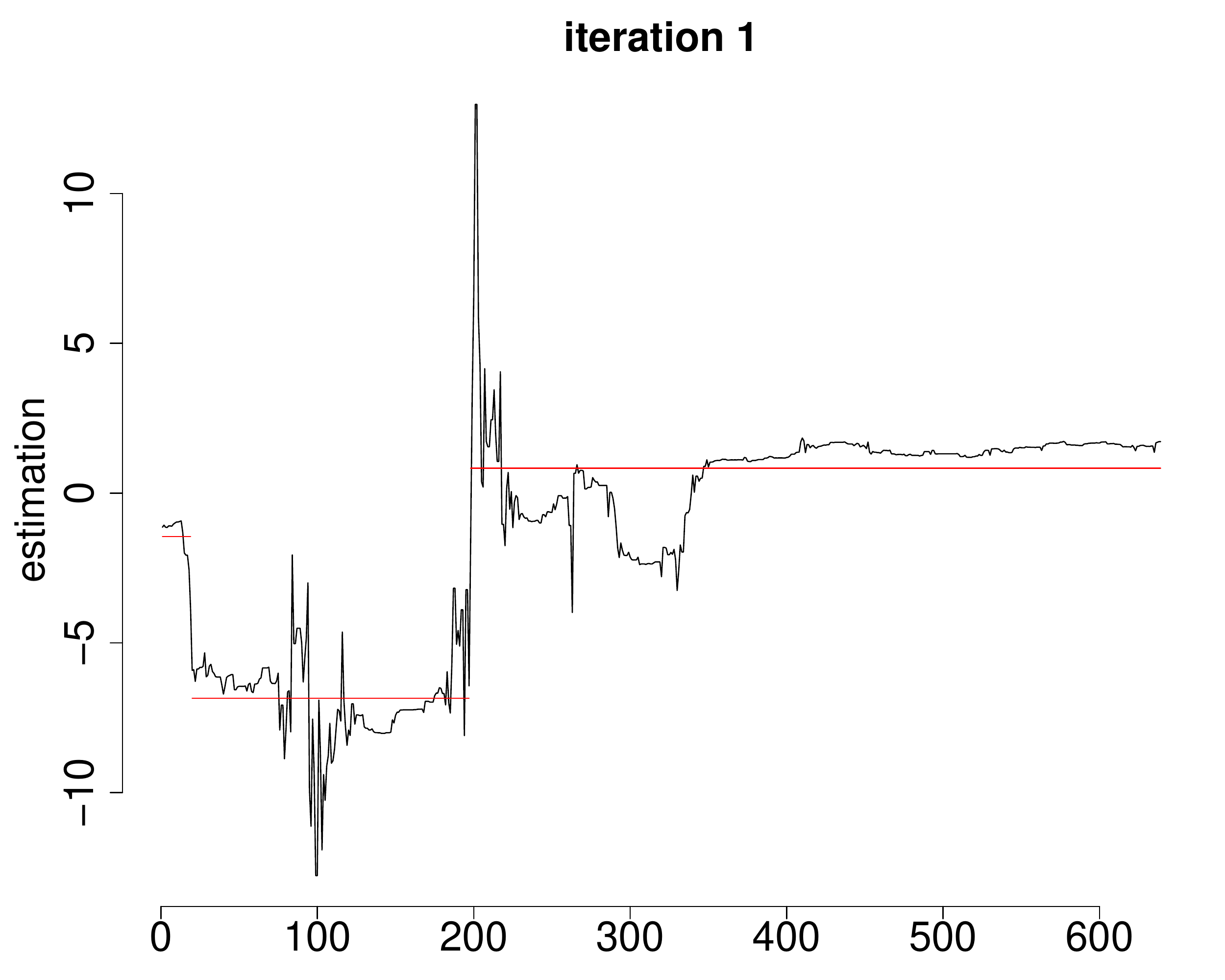}
\end{subfigure} \hfil 
\begin{subfigure}{0.49\textwidth}
  \includegraphics[width=\linewidth]{./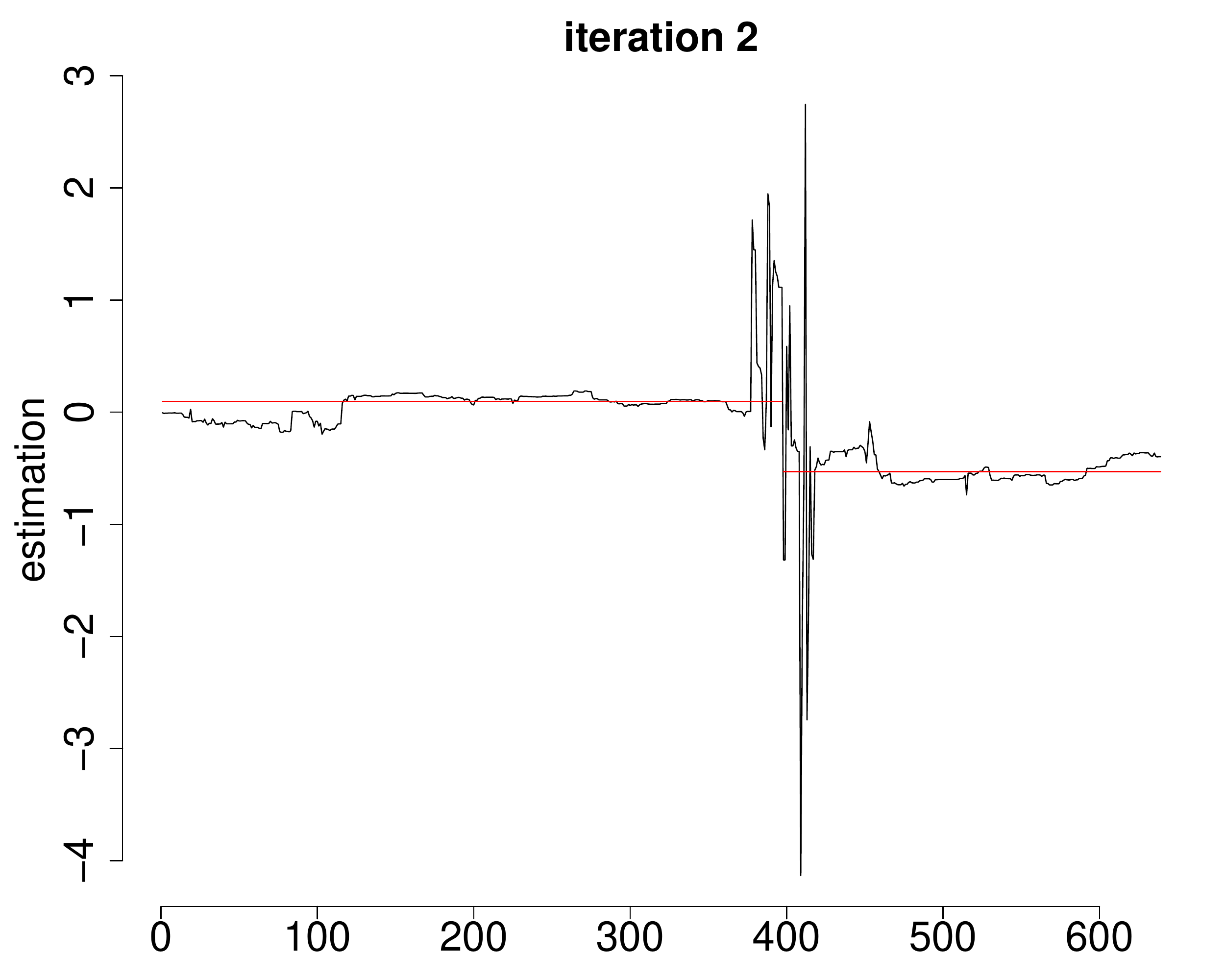}
\end{subfigure} 
\begin{subfigure}{0.49\textwidth}
  \includegraphics[width=\linewidth]{./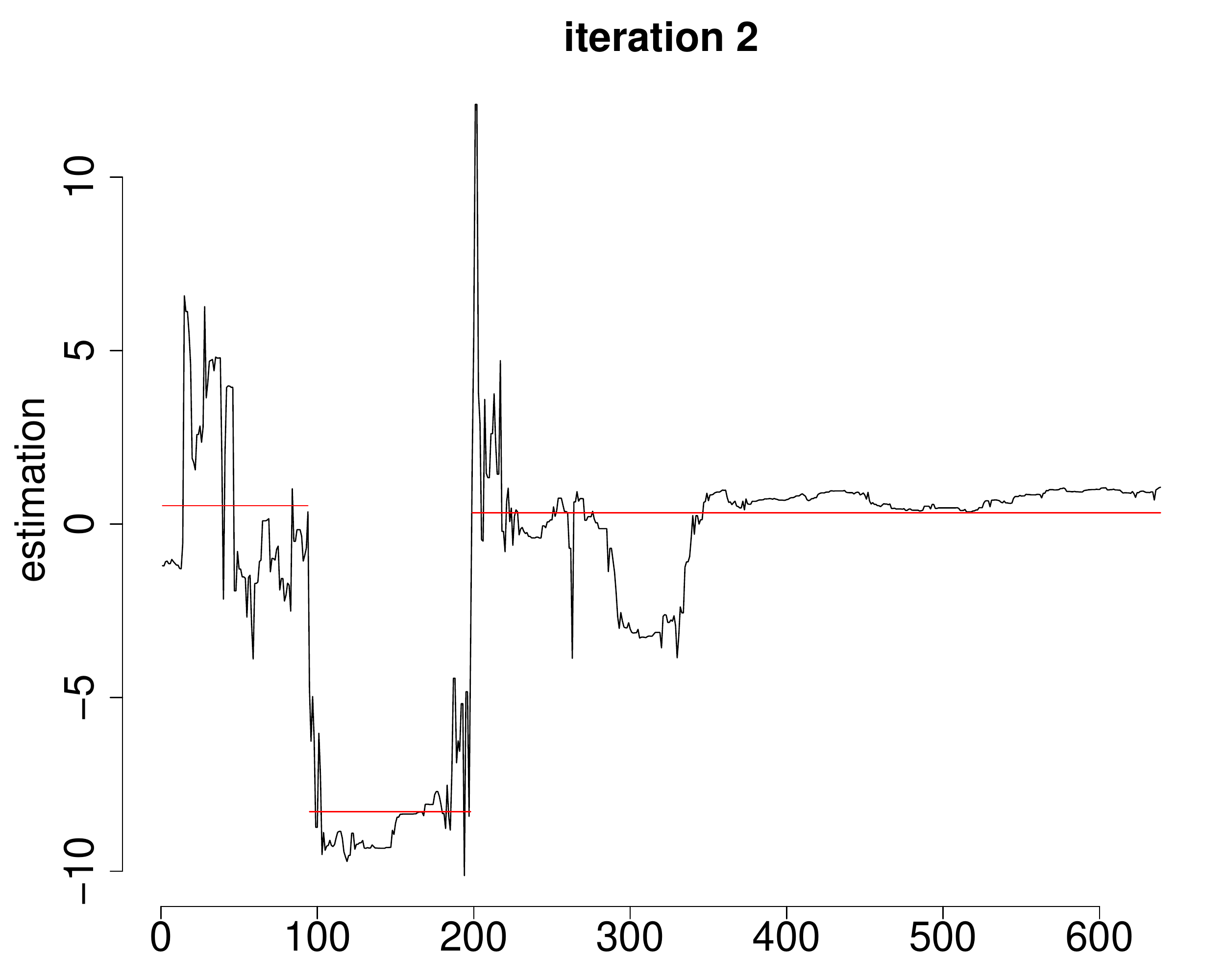}
\end{subfigure} \hfil
\begin{subfigure}{0.49\textwidth}
  \includegraphics[width=\linewidth]{./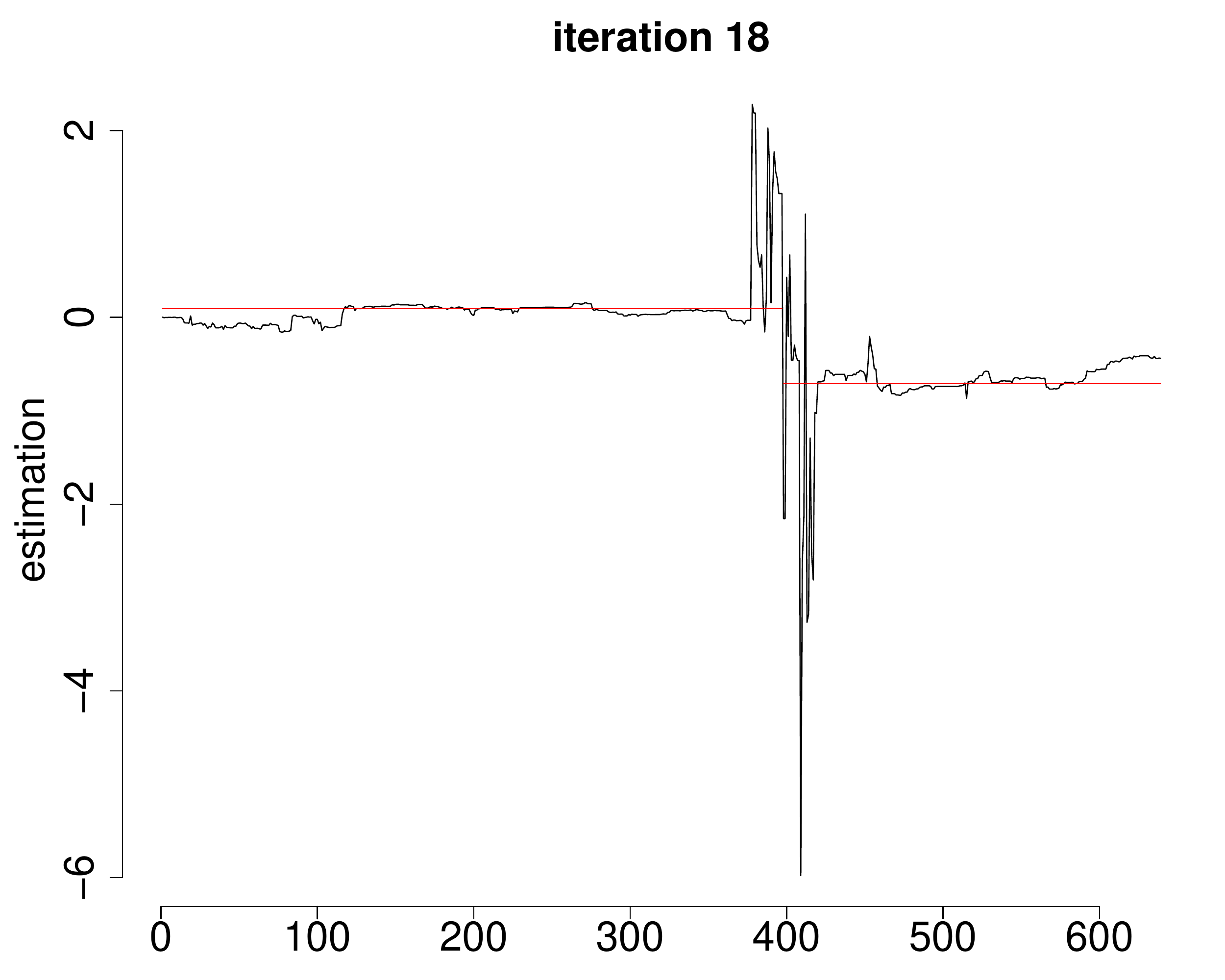}
\end{subfigure} 
\begin{subfigure}{0.49\textwidth}
  \includegraphics[width=\linewidth]{./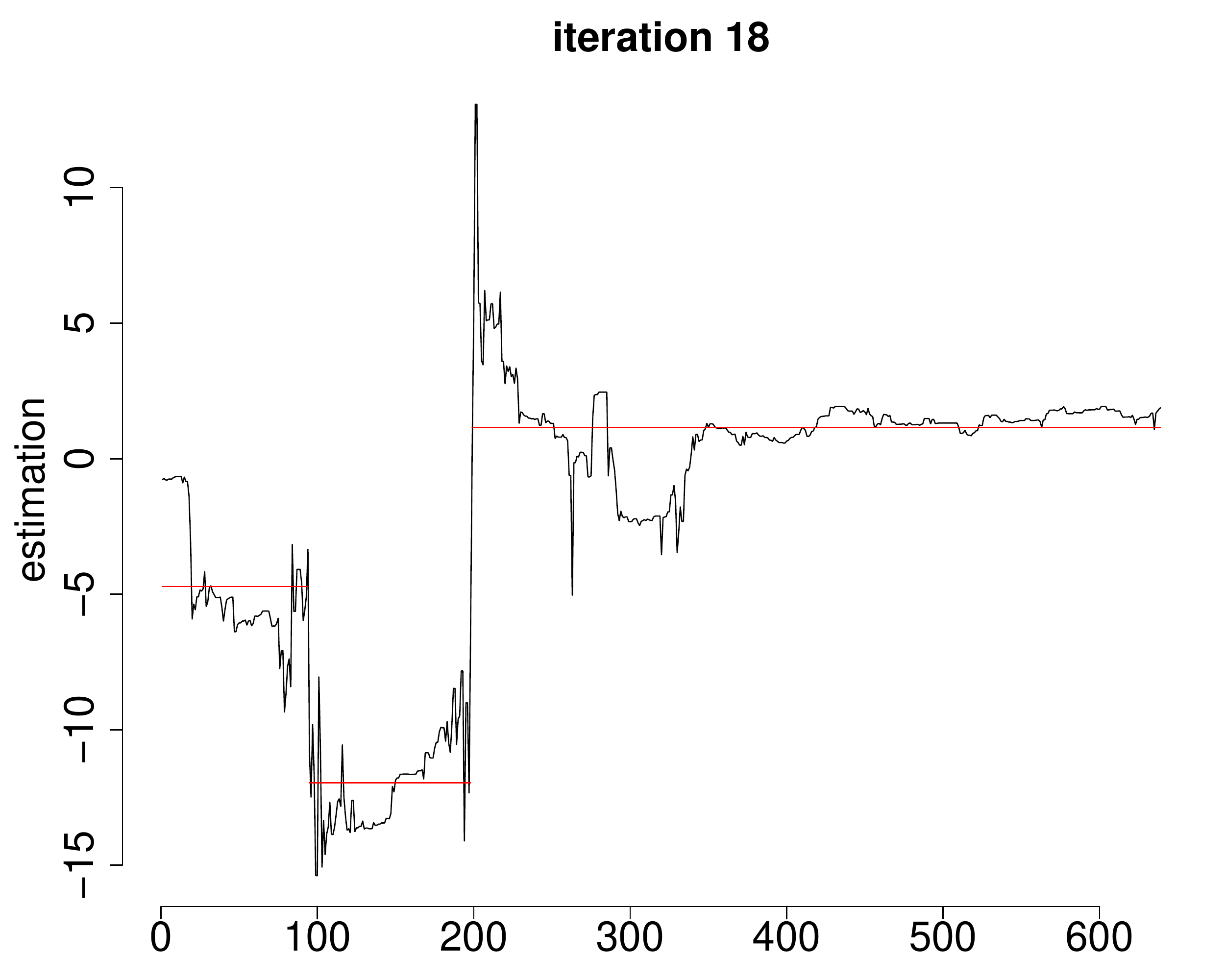}
\end{subfigure}\hfil 
\caption{Process of convergence for periodic-stable time-varying parameters $\beta_{21,t}$ (left panel) and $\beta_{\text{PA},t}$ (right panel) until convergence at iteration 18.}
\label{fig:convergence_process}
\end{figure}

\end{document}